\documentclass[11pt]{article}

\usepackage[ruled, vlined, linesnumbered]{algorithm2e}
\SetKw{Break}{break}
\SetKwData{Null}{null}
\usepackage{xcolor}

\usepackage{fullpage}
\usepackage{array}
\usepackage{diagbox}
\usepackage{multirow}
\usepackage{amsmath,amssymb}
\usepackage[colorlinks,linkcolor=blue,citecolor=blue,urlcolor=black]{hyperref}
\usepackage{dsfont}
\usepackage{caption, subcaption}
\usepackage{framed,url}

\newtheorem{theorem}{Theorem}[section]
\newtheorem{lemma}[theorem]{Lemma}
\newtheorem{claim}[theorem]{Claim}
\newtheorem{corollary}[theorem]{Corollary}

\newtheorem{fact}[theorem]{Fact}

\newtheorem{remark}[theorem]{Remark}

\newtheorem{definition}[theorem]{Definition}
\newtheorem{property}[theorem]{Property}

\newcommand{\abs}[1]{{\left | #1 \right |}}
\newcommand{\gray}[1]{\textcolor{gray}{#1}}
\newcommand{\E}{\mathbf{E}}
\newcommand{\Var}{\mathbf{Var}}
\newcommand{\eps}{\epsilon}

\newcommand{\A}{\mathcal{A}}
\renewcommand{\Pr}{\mathbf{Pr}}

\newcommand{\MC}{{\text{mult}}}
\newcommand{\K}{\mathcal{K}}
\newcommand{\Tail}{\text{tail}}

\newcommand{\kV}[1]{V^{(#1)}}
\usepackage{xparse}
\newcommand{\loc}{\text{local}}
\NewDocumentCommand{\dl}{m o}{%
	\IfNoValueTF{#2}
	{d_{#1,\loc}}
	{d_{#1,\loc}^{#2}}%
}
\NewDocumentCommand{\Dl}{m o}{%
	\IfNoValueTF{#2}
	{D^{(#1)}_{\loc}}
	{D_{#1,\loc}^{#2}}%
}

\SetKwData{Null}{null}

\newcommand{\kDISJ}{DISJ$_{n,k}$}

\newenvironment{proof}{\trivlist\item[]\emph{Proof}:}%
{\unskip\nobreak\hskip 1em plus 1fil\nobreak$\Box$
	\parfillskip=0pt%
	\endtrivlist}

\newif\iffull
\fulltrue      % comment this for short version
\begin{document}
	
	%%
	%% The "title" command has an optional parameter,
	%% allowing the author to define a "short title" to be used in page headers.
	\title{Frequency Moments in Noisy Streaming and Distributed Data under Mismatch Ambiguity\thanks{K.~Liu is partly supported by NSF CCF-1844234; Q.~Zhang is partly supported by NSF CCF-1844234 and IU Luddy Faculty Fellowship.}}
	
	\author{
		Kaiwen Liu \\ 
		Computer Science Department\\
		Indiana University\\
		\texttt{kaiwliu@iu.edu}
		\and 
		Qin Zhang \\
		Computer Science Department\\
		Indiana University\\
		\texttt{qzhangcs@iu.edu}
	}

	%%
	%% The abstract is a short summary of the work to be presented in the
	%% article.
		\maketitle
	
	\begin{abstract}
		We propose a novel framework for statistical estimation on noisy datasets. Within this framework, we focus on the frequency moments ($F_p$) problem and demonstrate that it is possible to approximate $F_p$ of the unknown ground-truth dataset using sublinear space in the data stream model and sublinear communication in the coordinator  model, provided that the approximation ratio is parameterized by a data-dependent quantity, which we call the $F_p$-mismatch-ambiguity. We also establish a set of lower bounds, which are tight in terms of the input size.  Our results yield several interesting insights:
		\begin{itemize}
			\item[--] In the data stream model, the $F_p$ problem is inherently more difficult in the noisy setting than in the noiseless one. In particular, while $F_2$ can be approximated in logarithmic space in terms of the input size in the noiseless setting, any algorithm for $F_2$ in the noisy setting requires polynomial space.
			
			\item[--] In the coordinator model, in sharp contrast to the noiseless case, achieving polylogarithmic communication in the input size is generally impossible for $F_p$ under noise. However, when the $F_p$ mismatch ambiguity falls below a certain threshold, it becomes possible to achieve communication that is entirely independent of the input size.
		\end{itemize}
	\end{abstract}

	\section{Introduction}
	\label{sec:intro}
	
	The presence of near-duplicates has long been a challenge in data management. Such noise may arise from diverse sources, such as data collection errors, representational inconsistencies, reformatting or compression artifacts, as well as inherent physical limitations like the imprecision of scientific instruments or the probabilistic nature of quantum systems. For example,
	\begin{itemize}
		\item[--] A large number of queries submitted to a search engine or a large language model, where different combinations of keywords may convey the same semantic intent.
		
		\item[--] Webpages, documents or images with near-duplicates are often stored across multiple 
		servers, and our goal is to compute aggregate statistics over these objects.
		
		\item[--] If we want to store quantum states classically, then we have to tolerate near-duplicates~\cite{ZH25}.
	\end{itemize}
	Typically, a thorough data cleaning step is performed before analysis~\cite{HSW07,EIV07,KSS06,DN09}. However, this approach is infeasible in ``big data'' settings, where the objective is to compute statistical functions over the dataset without 
	storing or communicating it in its entirety.
	
	To address this challenge, a line of research aims to conduct statistical analysis directly on noisy datasets, mitigating the noise on the fly without explicitly cleaning the data~\cite{Zhang15,CZ16,CZ18,Zhang25}.  However, as we shall discuss in the related work (Section~\ref{sec:related}), existing approaches to statistical estimations on noisy datasets exhibit inherent limitations. In this work, we introduce a new framework for analyzing noisy datasets. We then study the $p$-th frequency moments problem ($F_p$) within two established big data models: 
	\begin{itemize}
		\item[--] The data stream model~\cite{AMS99}: We sequentially process the items in input dataset $\sigma = (\sigma_1, \ldots, \sigma_m)$ in one or multiple passes to compute $F_p(\sigma)$. The main goal is to minimize memory space usage.
		
		\item[--] The coordinator model~\cite{PVZ12}: We have $k$ sites and one central coordinator.  The dataset $\sigma$ is partitioned across the $k$ sites, with the $i$-th site holding $\sigma^{(i)}$.  The $k$ sites and coordinator want to jointly compute $F_p(\sigma)$ via communication.  The computation proceeds in rounds. In each round, the coordinator sends a message to each of the $k$ sites, which then respond to the coordinator with a message. At the end, the coordinator outputs the answer. We would like to minimize both the communication cost and the number of rounds of the computation.
	\end{itemize}
	We design algorithms for both models and establish corresponding lower bounds.
	This paper delivers the following messages:
	\begin{enumerate}
		\item In the noisy data setting, it is possible to approximate $F_p$ of the {\em unknown} ground-truth dataset using sublinear space in the data stream model and sublinear communication in the coordinator model, provided that the approximation guarantee is parameterized by a data-dependent quantity, referred to as the {\em $F_p$-mismatch-ambiguity}.
		
		\item In the data stream model, we establish tight space bounds for $F_p$ with respect to the input size in the noisy setting, showing that the $F_p$ problem is fundamentally more difficult in the presence of noise than in the noiseless case. In particular, whereas $F_2$ can be approximated in logarithmic space in the noiseless setting by the classical result of Alon, Matias, and Szegedy~\cite{AMS99}, in the noisy setting any algorithm for $F_2$ requires polynomial space.
		
		\item In the coordinator model, we show that, unlike in the noiseless setting, achieving polylogarithmic communication in the input size is generally impossible for $F_p$ in the presence of noise. Nevertheless, when the $F_p$-mismatch-ambiguity is below a certain threshold, communication that is {\em independent} of the input size becomes achievable.
	\end{enumerate}
	
	\vspace{2mm}
	\noindent{\bf A Novel Framework for Statistical Estimations on Noisy Data. \ }
	We start by introducing our new framework for analyzing noisy data.   Let $\sigma = (\sigma_1, \ldots, \sigma_m)$ be a noisy dataset of observed items. Each item $\sigma_i$ is generated from an underlying element $\tau_i$ by the addition of unknown noise. The ground-truth elements $\tau_i$ are drawn from a hidden universe $U = \{u_1, \ldots, u_n\}$. We impose no assumptions on the form or magnitude of the noise. As a result, the set of possible observed items may be infinite and strictly contains the ground-truth universe $U$.
	
	We assume access to an oracle that, for any pair of observed items $\sigma_i$ and $\sigma_j$, determines whether they are similar, denoted $\sigma_i \sim \sigma_j$. Each item is similar to itself, and elements in $U$ are pairwise dissimilar. Note that, because the noise is arbitrary, the similarity relation may produce both false positives and false negatives: two items may be declared similar even if they originate from different ground-truth elements, and two items may be declared dissimilar even if they originate from the same ground-truth element.
	
	We model a dataset $\sigma$ of size $m$ as a graph $G^\sigma = ([m], E^\sigma)$, with node $i$ representing $\sigma_i$. The edge set $E^\sigma$ is defined as $\{(i, j) \in [m]^2 \mid i \le j, \sigma_i \sim \sigma_j\}$.  Since we always have $\sigma_i \sim \sigma_i$, each node in $G^\sigma$ contains a self-loop.  For each $i \in [m]$, let $B_i^\sigma = \{j \in [m]\ |\ \sigma_j \sim \sigma_i\}$ be the set of nodes whose corresponding items are similar to $\sigma_i$ (including $\sigma_i$ itself), and let $d_i^\sigma = \abs{B_i^\sigma}$ be its size (or, the degree of the $i$-th node).
	
	It is easy to see that for a noiseless dataset $\tau$ of length $m$, $G^\tau = ([m], E^\tau)$ contains a set of disjoint cliques, each corresponding to a distinct universe element in $\tau$. We refer to such a graph $G^\tau$ as a {\em cluster graph}.\footnote{An equivalent and intuitive way to understand a cluster graph is that it is a ``P3-free'' graph. This means that there is no induced path of length three.}
	
	For a statistical function $f(\cdot)$ and a noisy dataset $\sigma = (\sigma_1, \ldots, \sigma_m)$, we define $f(\sigma)$ as $f(\tau)$, where $\tau = (\tau_1, \ldots, \tau_m)$ represents the ground truth of $\sigma$, and $\tau_i$ is the ground truth element of $\sigma_i$. Since $\tau$ is noiseless, $f(\tau)$ is always well defined. However, $\tau$ itself is unknown.  Our goal is to design an algorithm that approximates $f(\tau)$ using input dataset $\sigma$.
	
	At first glance, this task might seem impossible to achieve. However, as we will demonstrate in this paper using $F_p$ as an example, it is feasible to approximate $f(\tau)$ by introducing a parameter $\eta_f(\sigma, \tau)$ into the approximation ratio. This parameter, called the {\em $f$-mismatch-ambiguity}, captures how much $\sigma$ and $\tau$ differ with respect to the statistical function $f$. 
	\medskip
	
	We start by formally defining frequency moments for a noiseless dataset.
	\begin{definition}[Frequency Moments]
		Let $U=(u_1, \ldots, u_n)$ be a finite universe and $\tau=(\tau_1, \ldots, \tau_m) \in U^m$ be a noiseless dataset. For each element $u_j \in U$, define its frequency in $\tau$ as
		$$
			f_j = \abs{\{i \in [m] \mid \tau_i = u_j\}}.
		$$
		The $p$-th frequency moment of $\tau$ is defined as
		$
		F_p(\tau) = \sum_{j=1}^n f_j^p.
		$
	\end{definition}
	
	We next define a mismatch-ambiguity parameter for the $F_p$ problem. 
	\begin{definition}[$F_p$-mismatch-ambiguity ($p \ge 1$)]
		For a noisy dataset $\sigma = (\sigma_1, \ldots, \sigma_m)$ and its ground truth $\tau = (\tau_1, \ldots, \tau_m)$, the $F_p$-mismatch-ambiguity of $\sigma$ with respect to $\tau$ is defined as
		\begin{equation}
			\eta_p(\sigma, \tau) = \frac{1}{F_p(\tau)}\sum_{i \in [m]} \left(\abs{B_i^{\sigma} \cup B_i^{\tau}}^{p-1} - \abs{B_i^{\sigma} \cap B_i^{\tau}}^{p-1} \right).
		\end{equation}
		When there is no confusion, we simply write $\eta_p(\sigma, \tau)$ as $\eta_p$.
	\end{definition}
	We would like to make a few remarks regarding our choice of ambiguity definition for $F_p$.  
	\begin{itemize}
		\item  When $\sigma$ itself is noiseless, we have for all $i \in [m]$, $B_i^{\sigma} = B_i^{\tau}$, and consequently $\eta_p = 0$.   
		
		\item When $p = 1$, we always have $\eta_1 = 0$. Indeed, for the case when $p = 1$, we just need to count the number of items in the data stream, which is trivial even in the noisy setting.
		
		\item We call an edge present in $G^\sigma$ but absent in $G^\tau$ a {\em false positive} edge, and an edge present in $G^\tau$ but absent in $G^\sigma$ a {\em false negative} edge.  For $p = 2$, the term $\sum_{i \in [m]} \left(\abs{B_i^{\sigma} \cup B_i^{\tau}} - \abs{B_i^{\sigma} \cap B_i^{\tau}} \right)$ in $\eta_2$ is equal to twice of the sum of the false positive and false negative edges. We refer to the false positive and false negative edges as {\em mismatches}, which motivates the name “mismatch ambiguity” for this family of parameters.  
		
		\item We observe that the minimum number of mismatch edges in $G^\sigma$, taken over all possible ground truths $\tau$, coincides with the correlation clustering cost of $G^{cor} = ([m], E^{cor})$, where each pair $(i, j)$ satisfying $\sigma_i \sim \sigma_j$ contributes a `$+1$' edge to $E^{cor}$, and each pair $(i, j)$ with $\sigma_i \not\sim \sigma_j$ contributes a `$-1$' edge.\footnote{In the correlation clustering problem, we have a complete graph $G = (V, E)$, where each edge in $E$ is labeled either `$+1$' or `$-1$'. The objective is to partition $V$ into clusters so as to minimize the total number of misclustered edges. An edge is considered misclustered if it is labeled `$+1$' and its end nodes belong to different clusters, or if it is labeled `$-1$' and both end nodes are assigned to the same cluster.}
		
		\item We scale the sum $\sum_{i \in [m]} \left(\abs{B_i^{\sigma} \cup B_i^{\tau}}^{p-1} - \abs{B_i^{\sigma} \cap B_i^{\tau}}^{p-1} \right)$ by $\frac{1}{F_p(\tau)}$ to measure the impact of the mismatch edges on the quantity $F_p(\tau)$ we try to estimate.  The scaling factor is introduced for convenience; it transforms an additive error into a relative one for the ease of expressing the approximation ratio.
		
	\end{itemize}
	
	The appropriate definition of $f$-mismatch-ambiguity may vary across different statistical problems. In general, however, this parameter should be expressed as a function of the false positive and false negative edges in the graph representation of the input dataset with respect to the corresponding ground truth.
	
	In this paper, we are mainly interested in the case where the mismatch ambiguity is small, which we believe is common in practice given that we have a good similarity oracle.  On the other hand, when the dataset exhibits large mismatch ambiguity, standard statistical measures may lose their significance, and it may become infeasible to estimate $f(\tau)$---where $\tau$ denotes the ground truth---based solely on the noisy input $\sigma$.  
	
	\vspace{2mm}
	\noindent{\bf Preliminaries.\ }
	In this paper, we always assume that $p$ is a constant.  When we write $F_p$, we always mean $F_p(\tau)$, where $\tau$ is the unknown ground truth.  
	We often write $\eta_p(\sigma, \tau)$ as $\eta_p$ for brevity, as $\sigma$ is typically clear from context and $\tau$, the unknown ground truth, is always implicitly defined.
	
	We use $[n]$ to denote $\{1, \ldots, n\}$. All logarithms are taken to base $2$, unless stated otherwise. When we write $x = a \pm b$, we mean $x \in [a-b, a+b]$.  We say $\tilde{X}$ is an $(\eps, \delta)$-approximation of $X$ if with probability at least $(1 - \delta)$, we have $(1 - \eps)X \le \tilde{X} \le (1+\eps)X$.  We assume each item in the dataset fits one word.
	
	Note that in the noiseless setting, upper bounds are typically stated in terms of the universe size $n$ rather than the stream length $m$, under the standard assumption that $n$ and $m$ are polynomially related. In the noisy setting, however, the universe size is not explicitly defined, as it may be unbounded. Accordingly, we assume that the algorithm is given $m$, which can be regarded as an upper bound on the number of distinct items (and hence the effective universe size), since elements absent from the stream can be ignored. If $m$ is not known in advance, an additional pass (in the streaming model) or communication round (in the coordinator model) is required to learn $m$.
	
	To facilitate comparison between our bounds in the noisy setting and the existing results in the noiseless setting, we also use $m$ to denote the universe size in the noiseless setting. Notably, in the hard input constructions for the $F_p$ lower bound in the noiseless setting (e.g., \cite{BYJ+04,CKS03}), it consistently holds that $n = \Theta(m)$.

	\vspace{2mm}
	\noindent{\bf Our Results.\ }
	Our algorithmic result in the data stream model can be stated as follows:
	\begin{itemize}
		\item There is a one-pass $((\eps+O(\eta_p)), 0.01)$-approximation algorithm for $F_p\ (p \in \mathbb{Z}^+)$ over a data stream of $m$ items with mismatch ambiguity $\eta_p \le \frac{1}{3(p!)}$. The algorithm uses $O\left( \frac{1}{\eps^2}m^{1-{1}/{p}}\right)$ words of space. 
	\end{itemize} 
	We supplement the algorithmic result with the following lower bound:
	\begin{itemize}
		\item For any constant $C \ge 0$, any $O(1)$-pass $((\epsilon + C\eta_p), 0.48)$-approximation algorithm for $F_p\ (p > 1)$ over a data stream of $m$ items must use at least ${\Omega}\left( \frac{1}{\epsilon^{1/p}} m^{1 - 1/p} \right)$ bits of space.
	\end{itemize}
	This lower bound result contrasts with the noiseless setting, where for $F_p\ (p > 1)$, there exists one-pass $(\eps, 0.01)$-approximation algorithms (e.g., \cite{BO13}) for data streams containing $m$ items, using $\tilde{O}\left(\frac{1}{\eps^2}m^{1-{2}/{p}}\right)$ bits of space.\footnote{For simplicity, we hide polylogarithmic factors in the $\tilde{O}(\cdot)$ notation.}  In particular, the space complexity for $p = 2$ is logarithmic in the noiseless setting, but is polynomial in $m$ in the noisy setting.  
	
	In the coordinator model, we obtain the following algorithmic result:
	\begin{itemize}
		\item There is a two-round $((\eps+O(\eta_p)), 0.01)$-approximation algorithm for $F_p\ (p \ge 1)$ over a dataset of $m$ items with mismatch ambiguity $\eta_p \le 0.4$ distributed across $k$ sites. The algorithm uses $O\left( \frac{1}{\eps^2}km^{1-{1}/{p}}\right)$ words of communication. 
	\end{itemize} 
	For the lower bound, we show:
	\begin{itemize}
		\item For any constant $C \ge 0$, any $((\epsilon + C\eta_p), 0.48)$-approximation algorithm for $F_p\ (p > 1)$ over a dataset of $m$ items distributed across $k \ge 10p(C+1)$ sites must use at least ${\Omega}\left( \frac{1}{\epsilon^{1/p}} m^{1 - 1/p} \right)$ bits of communication.  
	\end{itemize}	
	This lower bound result is in stark contrast with the noiseless setting, where $(\eps, 0.01)$-approximation algorithms for $F_p$ exist using only ${O}\left(\frac{1}{\eps^2}k^{p-1} \log^{O(1)} m\right)$ bits of communication~\cite{HXZ+25};  in particular, the dependency on the input size $m$ is polylogarithmic. 
	
	For both the data stream model and the coordinator model, our upper and lower bounds match in terms of $m$, differing only in whether the cost is measured in words or in bits.
	
	Finally, we show that in the coordinator model, when the mismatch ambiguity is sufficiently small, we can design an algorithm that is significantly more efficient, achieving communication cost close to that in the noiseless setting in terms of $k$ and {\em independent} of the dataset size $m$.
	\begin{itemize}
		\item  There is a three-round $((\eps+O(\eta_p)), 0.01)$-approximation algorithm for $F_p\ (p \ge 2)$ over a dataset of $m$ items with mismatch ambiguity $\eta_p \le \frac{\eps^p}{4^{p+1} \cdot k^{p-1}}$ distributed across $k$ sites. The algorithm uses   $O\left(\frac{1}{\eps^{p+1}}k^p\right)$ words of communication.
	\end{itemize}	
	For $p = 2$, the mismatch ambiguity threshold in the above algorithm is in the order of ${\eps^2}/{k}$, and the algorithm achieves a communication cost of $O\left(k^2/\eps^3\right)$. While in our lower bound proof for the coordinator model, the hard input has a mismatch ambiguity in the order of ${\epsilon}/{k}$, which yields an $\Omega\left(\sqrt{{m}/{\eps}}\right)$ communication lower bound. Taken together, these results reveal a rough phase transition in the communication cost as a function of the mismatch ambiguity in the coordinator model.

	\subsection{Related Work}
	\label{sec:related}
	Several research directions in the literature are closely related to our work.
	
	\vspace{2mm}
	\noindent{\bf Frequency Moments on Noiseless Data.\ }
	The $F_p$ problem has been extensively studied in the noiseless setting, both in the data stream model~\cite{LW13,AMS99,IW05,BGK+06,MW10,AKO11,BO10,Andoni17, Ganguly11, Woodruff04, BYJ+04, CKS03, Ganguly12, BZ25, GW18} and the coordinator model~\cite{HXZ+25,CMY08, WZ12, EKM+24, KVW14, JW23}.  In the data stream model, for $p > 2$, the best known space upper bound is $O\left(\frac{1}{\eps^2} n^{1-2/p}\log^2 n\right)$ bits~\cite{Ganguly11}, where $n$ denotes the universe size and the stream length is assumed to satisfy $m = n^{\Theta(1)}$.  This bound is tight up to a $\log n$ factor~\cite{LW13}.  In the coordinator model, for $p \ge 2$, there is a two-round algorithm that achieves a communication cost of $O\left(\frac{1}{\eps^2}k^{p-1} \log^{O(1)} n\right)$ bits~\cite{HXZ+25}, which is tight up to a $\log^{O(1)} n$ factor~\cite{WZ12}. 
	
	\vspace{2mm}
	\noindent{\bf Statistical Estimations on Noisy Data.\ }  
	The literature on handling noisy datasets in the data stream model is rather sparse. The authors of \cite{CZ16} and \cite{CZ18} investigated the distinct elements problem ($F_0$) and $\ell_0$-sampling, but their results are restricted to constant-dimensional Euclidean spaces and metric spaces that support efficient locality-sensitive hashing. A recent work \cite{Zhang25} broadens the scope to general metric spaces (or more generally, any domain equipped with a 0/1 similarity function). However, the data ambiguity model used in all these works is tailored specifically to the $F_0$ problem---more precisely, it is defined as the difference between the minimum-cardinality disjoint clique partition of the graph $G^\sigma$ and the closest cluster graph---and does not extend naturally to other statistical problems.
	
	In the coordinator model, the only known work on statistical estimation on noisy data is \cite{Zhang15}. However, it focuses exclusively on datasets whose graph representations form cluster graphs.
	
	\vspace{2mm}
	\noindent{\bf Streaming Algorithms on Uncertain Data.\ }
	A line of work has studied streaming algorithms for uncertain/probabilistic 
	data~\cite{JMM+07,JKV07,CG07,JYC+08,ZLY08}, where data noise is captured 
	through the ``possible-worlds'' semantics. In this framework, each element is 
	associated with a probability distribution over its possible occurrences, 
	making it suited for noise sources such as misspellings. However, 
	such method is only feasible when the universe is fixed and known in advance, which often does not hold in real-world applications. For example, if a data item is an image perturbed from an original copy that is never observed, then it is impossible to design an oracle that always returns the correct ground-truth element. As another example, two queries to a large language model may yield answers with nearly identical semantics but slightly different surface forms. In this case, the notion of a "ground-truth element" is conceptual and difficult to make explicit, and the number of possible semantics can be unbounded. Therefore, our similarity-oracle based framework is suitable for a much broader class of noisy data.

	\section{The Algorithms}
	\label{sec:algo}
	
	In this section, we present algorithms for $F_p$ on noisy datasets in the data stream and coordinator models.
	
	We use $\MC(i_1, \dots, i_p)\ (i_1, \ldots, i_p \in [m])$ to denote the multinomial coefficient, defined as:
	$
	\MC(i_1, \dots, i_p) = \frac{p!}{\prod_{k \in [m]} \left(\sum_{j \in [p]} \mathbf{1}\{i_j = k\}\right)!}\ .
	$
	In words, $\MC(i_1, \dots, i_p)$ is the number of permutations that can be formed by $\{\sigma_{i_1}, \ldots, \sigma_{i_p}\}$.

	\subsection{The Streaming Algorithm}
	\label{sec:stream-algo}
	
	We begin with a one-pass streaming algorithm for $F_p$ with an approximation guarantee parameterized by $\eta_p$.  
	
	\vspace{2mm}
	\noindent{\bf Ideas and Technical Overview.\ }
	We say a tuple of nodes $(i_1, \ldots, i_p)$ is a $p$-clique in $G^\sigma$ if, for all $j, k \in [p]$, $\sigma_{i_j} \sim \sigma_{i_k}$, and a $p$-clique  is {\em ordered} if the sequence of its $p$ nodes matters.  As nodes in our setting are considered similar to themselves, we allow cliques to include duplicate nodes.
	
	We note that for the ground truth data stream $\tau$, $F_p$ is precisely $\abs{\K_p^\tau}$, where $\K_p^\tau$ is the set of ordered $p$-cliques in $G^\tau$.\footnote{For example, for $p = 3$ and $G^{\tau} = (V, E)$ where $V = \{1, 2\}$ and $E = \{(1, 1), (1, 2), (2, 2)\}$, then $\K_p^\tau$ contains $(1, 1, 1), (1, 1, 2), (1, 2, 1), (2, 1, 1), (1, 2, 2), (2, 1, 2), (2, 2, 1)$ and $(2, 2, 2)$.}   For a noisy data stream $\sigma$, let $\K_p^\sigma$ be the set of ordered $p$-cliques in $G^\sigma$. Our main observation is that for {\em any} ground truth $\tau$ such that $\eta_p$ is small, we have $\abs{\K_p^\sigma} \approx  \abs{\K_p^\tau} = F_p$. Therefore, if we can accurately estimate $\abs{\K_p^\sigma}$, we can also well-estimate $F_p$.
	
	The question now becomes how to estimate $\abs{\K_p^\sigma}$ in one pass.  A natural idea is to draw uniform random $p$-tuples from $[m]^p$ and then count how many of them belong to $\K_p^\sigma$. However, this approach has a problem: $\abs{\K_p^\sigma}$ may be as small as $m$, resulting in a low probability that a sample belongs to $\K_p^\sigma$. On the other hand, directly counting $\abs{\K_p^\sigma}$ is challenging in the data stream model, since the nodes within a tuple are ordered, and this ordering typically does not align with the order in which the nodes appear in the stream. We thus focus on {\em increasingly ordered} $p$-cliques, where the order of nodes in the tuple matches their order in the data stream.
	
	We design a sampling-and-counting procedure, denoted by $\A$, that leverages the observation that each increasingly ordered $j$-clique can be incrementally constructed from an increasingly ordered $(j-1)$-clique. Conceptually, the generative process can be represented as a tree, where the root corresponds to a $0$-clique and its children are $1$-cliques by adding one more node to the $0$-clique, continuing in this manner for $p$ levels. The leaves of the tree precisely enumerate the increasingly ordered $p$-cliques. We then perform a random walk from the root to a leaf to sample a path, and produces an estimate $X$ based on the probability of selecting this path and an appropriate multinomial coefficient. We can show that $X$ satisfies $\E[X] = |\mathcal{K}_p^\sigma|$.
	
	The critical step in the above approach is to determine the number of unbiased estimates needed for the final approximation.   Let $X_1, \ldots, X_t$ be estimates returned by $t$ independent runs of $\A$, where for every $j \in [t]$, $\E[X_j] = \abs{\K_p^\sigma}$.  Let $\bar{X} = \frac{1}{t} \sum_{j \in [t]} X_j$. If $\sigma$ is noiseless, it is not difficult to show that $t = O\left(\frac{1}{\eps^2}m^{1-1/p}\right)$ estimates are sufficient to ensure that $\bar{X}$ tightly concentrates around its expectation.  However, in the noisy setting, $\bar{X}$ may exhibit high instability in worst case scenarios.  For example, consider the case when $p = 2$ and $G^\sigma$ is a star graph. In this case, $\K_2^\sigma$ consists of all ordered pairs of similar  nodes, with approximately two-thirds of the pairs containing the central node. If the central node arrives first in the stream, then with probability $\left(1 - \frac{1}{m}\right)$ it is not sampled. As a result, roughly two-thirds of the pairs in $\K_2^\sigma$ are not counted, which prevent us to obtain a $1.01$-approximation with probability $0.99$ when $t = o(m)$. 
	
	What we manage to show is that when $\eta_p$ is small, we can estimate $\abs{\K_p^\sigma}$ accurately using a sublinear number of samples.  First, it is not difficult to establish an upper bound of $\bar{X}$ by a Markov inequality.  In fact, as $\bar{X}$ is the sum of independent random variables, we can apply an inequality from \cite{Feige04} to achieve a better bound.  The challenge lies in establishing the lower bound of $\bar{X}$.  The standard method is to bound the variance of $\bar{X}$ (or, the variance of each $X_j\ (j \in [t])$) and then apply Chebyshev's inequality. However, if we analyze the variance of $X_j$ directly without taking into account the mismatch ambiguity $\eta_p$, the variance could be very high.  Consider again the star graph, which has a high $\eta_p$ with respect to any ground truth.  Easy calculation gives $\E[X_j] = \Theta(m)$ and $\Var[X_j] = \Omega(m^p)$. Consequently, we have to use $t = \Omega\left(m^{p-2}\right)$ copies of $X_j$ to bring down the variance of $\bar{X}$ to $O(1)$.  Therefore, we must account for $\eta_p$ in our analysis. The difficulty is that the construction of $\bar{X}$ cannot incorporate $\eta_p$, as it is an unknown quantity to the algorithm.
	
	We circumvent the above issue by relating each random variable $X_j\ (j \in [t])$ to another random variable 
	$Y_j \triangleq \min\left\{X_j, m (p!) \left(d_{I_1}^\tau\right)^{p-1}\right\},$
	where $I_1$ is distributed uniformly at random in $\{1, \ldots, m\}$.  The intuition of creating such a random variable $Y_j$ is the following: recall that if we apply $\A$ to the ground truth $\tau$, whose graph representation $G^\tau$ is a cluster graph, we can obtain a tighter upper bound on the variance of $X_j$, and consequently only need $t = o(m)$ runs of $\A$.  Now, if $\eta_p$ is small, we would expect that  $\K_p^\sigma$ and $\K_p^\tau$ overlap significantly, and consequently, $\abs{\K_p^\sigma}$ and $\abs{\K_p^\sigma \cap \K_p^\tau}$ are close.  We can intuitively interpret $Y_j$ with $\E[Y_j] \approx \abs{\K_p^\sigma \cap \K_p^\tau}$ as being close to $X_j$ with $\E[X_j] = \abs{\K_p^\sigma}$ but with a smaller variance.  Therefore, we can first establish a lower bound for $Y_j$, which also provides a lower bound for  $X_j$.

	\vspace{2mm}
	\noindent{\bf Algorithm and Analysis.\ }
	The algorithm is described in   Algorithm~\ref{alg:Fp-one-pass}, which uses Algorithm~\ref{alg:Fp-one-sample} as a subroutine.  At a high level, Algorithm~\ref{alg:Fp-one-sample} aims to obtain an unbiased estimator of the number of ordered $p$-cliques in $G^\sigma$, which serves as a low-bias estimate of $F_p$. This is achieved by sampling and counting increasingly ordered $p$-cliques in its main loop (Lines~\ref{ln:start-loop}--\ref{ln:end-loop}), followed by scaling the result with appropriate multinomial coefficients (Line~\ref{ln:scale}). By executing Algorithm~\ref{alg:Fp-one-sample} multiple times in parallel, Algorithm~\ref{alg:Fp-one-pass} applies a min-average aggregation (Lines~\ref{ln:min-avg-1}--\ref{ln:min-avg-2}) to produce the final approximation for $F_p$.  
	
	We have the following result regarding Algorithm~\ref{alg:Fp-one-pass}. 
	
	\begin{algorithm}[t]
		\caption{Streaming-Robust-$F_p$-One-Sample}
		\label{alg:Fp-one-sample}
		\DontPrintSemicolon
		\SetAlgoNoEnd
		
		\KwIn{a noisy data stream $\sigma$}
		\KwOut{an estimate of $F_p$}
		
		$r_1, \dots, r_p,  i_1, \dots, i_p \gets 0$, $a_1, \dots, a_p \gets \Null$\;
		\ForEach{incoming $\sigma_j$\label{ln:start-loop}} {
			$r_1 \gets r_1 + 1$\;
			\text{with probability} $\frac{1}{r_1} $, $(a_1, i_1, r_2) \gets (\sigma_j, j, 0)$\;
			\If{$a_1 \sim \sigma_j$} {
				$r_2 \gets r_2 + 1$\;
				\text{with probability} $\frac{1}{r_2}$, $(a_2, i_2, r_3) \gets (\sigma_j, j, 0)$\;
				\If{$a_2 \sim \sigma_j$} {
					$r_3 \gets r_3 + 1$\;
					\text{with probability} $\frac{1}{r_3}$, $(a_3, i_3, r_4) \gets (\sigma_j, j, 0)$\;
					$\dots$\;
					\If{$a_{p-1} \sim \sigma_j$} {
						$r_p \gets r_p + 1$\;
						\text{with probability} $\frac{1}{r_p}$, $(a_p,i_p) \gets (\sigma_j, j)$\label{ln:end-loop}
					}
				}
			}
		}
		\Return $\MC(i_1, \dots, i_p) \prod_{k \in [p]} r_k$. \label{ln:scale}
	\end{algorithm}
	
	\begin{algorithm}[t]
		\caption{Streaming-Robust-$F_p$}
		\label{alg:Fp-one-pass}
		\DontPrintSemicolon
		\SetAlgoNoEnd
		
		\KwIn{a noisy data stream $\sigma$ of size $m$, parameter $\eps$}
		\KwOut{an $((\eps + O(\eta_p)), 0.01)$-approximation of $F_p$}
		
		$\ell \gets 100$, $t \gets \frac{10^6(p!)m^{1-{1}/{p}}}{\eps^2}$\; \label{ln:t}
		
		obtain $\ell t$ estimates $\{X_{i,j}\}_{i \in [\ell], j \in [t]}$ by running $\ell t$ instances of Algorithm~\ref{alg:Fp-one-sample} in parallel
		
		for $i \in [\ell]$, let $\Bar{X}_i \gets \frac{1}{t} \sum_{j \in [t]} X_{i,j}$\;  \label{ln:min-avg-1}
		\Return $\min\{\Bar{X}_1, \dots, \Bar{X}_{\ell}\}$.  \label{ln:min-avg-2}
	\end{algorithm}

	\begin{theorem}
		\label{thm:Fp-one-pass}
		For any constant $p \in \mathbb{Z}^+$, given a noisy input data stream of length $m$ with $\eta_p \leq \frac{1}{3(p!)}$, Algorithm~\ref{alg:Fp-one-pass} computes an $((\eps+O(\eta_p), 0.01)$-approximation of  $F_p$, using a single pass and $O\left(\frac{1}{\eps^2}m^{1-1/p}\right)$ words of space.
	\end{theorem}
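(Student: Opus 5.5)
We would prove the theorem in four steps: unbiasedness of Algorithm~\ref{alg:Fp-one-sample}, a combinatorial comparison of $\abs{\K_p^\sigma}$ with $F_p = \abs{\K_p^\tau}$, an upper-tail bound, and a lower-tail bound obtained through the truncated variables $Y_j$. The space bound is immediate: each instance of Algorithm~\ref{alg:Fp-one-sample} stores $O(p)=O(1)$ words, and there are $\ell t = O\left(\frac{1}{\eps^2}m^{1-1/p}\right)$ instances.

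\emph{Step 1 (unbiasedness).} Because the tests $a_1 \sim \sigma_j, a_2 \sim \sigma_j, \dots$ are nested, the final tuple $(i_1 \le \dots \le i_p)$ is always a nondecreasing $p$-clique of $G^\sigma$, possibly with repeated nodes. Conditioned on $(i_1, \dots, i_{k-1})$, the level-$k$ reservoir picks each candidate uniformly. Hence a fixed nondecreasing clique $(i_1, \dots, i_p)$ is output with probability exactly $1/\prod_k r_k$, where $r_k$ is the number of nodes $j \ge i_{k-1}$ similar to all of $i_1, \dots, i_{k-1}$ (with $r_1 = m$). The returned value is $\MC(i_1,\dots,i_p)\prod_k r_k$. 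So $\E[X]$ equals the sum of multinomial coefficients over nondecreasing cliques, which is $\abs{\K_p^\sigma}$.

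\emph{Step 2 (combinatorics).} We will show $\bigl|\abs{\K_p^\sigma} - F_p\bigr| \le \abs{\K_p^\sigma \triangle \K_p^\tau} \le p\,\eta_p F_p$. A tuple in the symmetric difference contains a pair of coordinates whose relation differs in $G^\sigma$ and $G^\tau$. Charge the tuple to one coordinate $i$ of such a pair. The other $p-1$ coordinates then lie in $B_i^\sigma \cup B_i^\tau$, and not all of them lie in $B_i^\sigma \cap B_i^\tau$. Summing over $i$ and over the $p$ choices of position gives the bound.

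\emph{Step 3 (upper tail).} Each $\bar X_i$ is a nonnegative average with mean $\abs{\K_p^\sigma}$. Markov's inequality, or Feige's inequality for sums of independent nonnegative variables, gives $\Pr[\bar X_i \le (1+\eps)\abs{\K_p^\sigma}] \ge c$ for a constant $c>0$. Taking the minimum over $\ell = 100$ copies, the output is at most $(1+\eps)(1+p\eta_p)F_p$ except with probability $(1-c)^{100} \le 0.005$.

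\emph{Step 4 (lower tail, the main obstacle).} We must show each $\bar X_i \ge (1-\eps-O(\eta_p))F_p$ with probability at least $1-5\cdot 10^{-5}$, so that a union bound over the $\ell$ copies works. Define $Y_j = \min\{X_j, m(p!)(d_{I_1}^\tau)^{p-1}\}$ with $I_1 = i_1$. Since $X_j \ge Y_j$, it suffices to lower-bound $\bar Y$ using two estimates.
\begin{itemize}
\item[(a)] $\E[Y] \ge (1 - O(\eta_p))F_p$. We show $\E[X-Y] = O(\eta_p F_p)$ by conditioning on $I_1 = i$. The excess $X - Y$ can be positive only if the conditional count of ordered cliques rooted at $i$, roughly $\abs{B_i^\sigma}^{p-1}$, exceeds $p!\,(d_i^\tau)^{p-1}$. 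The part of the conditional mean above the cap should then be dominated by the charged mismatch term $\abs{B_i^\sigma \cup B_i^\tau}^{p-1} - \abs{B_i^\sigma\cap B_i^\tau}^{p-1}$, and Step 2 supplies the reference value. The hypothesis $\eta_p \le 1/(3\,p!)$ is used here to keep constants under control.
\item[(b)] $\E[Y^2] \le m(p!)\,\E[(d_{I_1}^\tau)^{p-1}Y] = O(p!\,m^{1-1/p}F_p^2)$. We use $\E[Y \mid I_1 = i] \lesssim m\,(d_i^\tau)^{p-1}$-type bounds, which give a sum of the form $\sum_i (d_i^\tau)^{2p-2} \le \max_i (d_i^\tau)^{p-1}\cdot F_p$. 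Then $\max_i (d_i^\tau)^{p-1} \le F_p^{(p-1)/p}$ and $m \le m^{1-1/p}F_p^{1/p}$, since $F_p \ge m$ for $p \ge 1$.
\end{itemize}
With $t = 10^6(p!)m^{1-1/p}/\eps^2$, Chebyshev's inequality gives $\bar Y \ge \E[Y] - \eps F_p$ with the required probability. This yields the lower bound, and together with Step 3 proves the theorem.

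I expect (a) to be the delicate part. The cap involves $d^\tau$, but the samples at levels $k \ge 2$ are drawn from $\sigma$-neighbourhoods, so the charging must be done carefully on truncated, not merely mismatched, contributions.
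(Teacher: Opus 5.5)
Your outline follows the paper's architecture: unbiasedness, comparing $\K_p^\sigma$ with $\K_p^\tau$, Feige plus a minimum over $\ell=100$ copies for the upper side, and truncation to $Y_j$ plus Chebyshev for the lower side. Steps 1--2 are sound. Your bound $\abs{\K_p^\sigma\triangle\K_p^\tau}\le p\,\eta_pF_p$ is a slightly coarser but valid form of Lemma~\ref{lem:Fp-diff}.

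The genuine gap is Step 4(a), which is the crux of the lower tail and which you leave at ``should then be dominated''. Your heuristic for it is also not right as stated. Whether $X$ exceeds the cap depends on the realized product $\MC(\cdot)\prod_k R_k$, not on a clique count, and that product can reach $p!\,\abs{B_i^\sigma}^{p-1}$. So the relevant comparison is simply $\abs{B_i^\sigma}$ versus $\abs{B_i^\tau}$, not $\abs{B_i^\sigma}^{p-1}$ versus $p!\,(d_i^\tau)^{p-1}$.

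The missing idea is a \emph{pointwise} bound on $X$. Condition on $I_1=i$. Then $R_1=m$ and $\MC(\cdot)\le p!$. For $k\ge2$, $R_k=\abs{\Tail(i,I_2,\dots,I_{k-1})}\le\abs{\Tail(i)}\le\abs{B_i^\sigma}$, because tail sets are nested. Hence $X\le m(p!)\abs{B_i^\sigma}^{p-1}$ surely, and
\[
X-Y=\max\bigl\{X-m(p!)\abs{B_i^\tau}^{p-1},0\bigr\}\le m(p!)\max\bigl\{\abs{B_i^\sigma}^{p-1}-\abs{B_i^\tau}^{p-1},0\bigr\}\le m(p!)\bigl(\abs{B_i^\sigma\cup B_i^\tau}^{p-1}-\abs{B_i^\sigma\cap B_i^\tau}^{p-1}\bigr).
\]
Averaging over the uniform $I_1$ gives $\E[X-Y]\le p!\,\eta_pF_p$. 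Combined with $\E[X]=\abs{\K_p^\sigma}\ge F_p-\abs{\K_p^\tau\setminus\K_p^\sigma}$ and your Step 2, this gives $\E[Y]\ge(1-(p+p!)\eta_p)F_p$.

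No distributional or level-by-level charging is needed. The fact that levels $k\ge2$ sample from $\sigma$-neighbourhoods, which worried you, is exactly what gives $R_k\le\abs{B_i^\sigma}$. This also explains the choice of cap: $m(p!)(d_{I_1}^\tau)^{p-1}$ is this worst-case value of $X$ with $B^\sigma_{I_1}$ replaced by $B^\tau_{I_1}$. Note that (a) holds for every $\eta_p$. The paper uses the hypothesis $\eta_p\le 1/(3p!)$ only in the Chebyshev chain of Lemma~\ref{lem:Fp-small}, to keep the factor $1/(1-2(p!)\eta_p)$ from Claim~\ref{claim:low2} bounded, not in (a).

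Two smaller points.

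\emph{Step 3.} Markov only gives $\Pr[\bar X_i<(1+\eps)\abs{\K_p^\sigma}]\ge\eps/(1+\eps)$, which is not a constant. The minimum over $100$ copies would then not reach failure probability $0.005$ for small $\eps$. You must use Feige's inequality (Lemma~\ref{lem:feige} with $Z_i=X_i/\abs{\K_p^\sigma}$ and $\delta=1$), which gives the constant $1/13$.

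\emph{Step 4(b).} If you bound $\E[Y\mid I_1=i]$ by the cap itself, you get $\E[Y^2]\le(p!)^2m^{1-1/p}F_p^2$. That is valid and even independent of $\eta_p$. But with the algorithm's fixed $t=10^6(p!)m^{1-1/p}/\eps^2$, the per-copy failure probability becomes $p!\cdot10^{-6}$, which exceeds your $5\cdot10^{-5}$ budget for $p\ge5$. Instead, pull out $\max_i(d_i^\tau)^{p-1}\le F_p^{1-1/p}$ and use $\E[Y]\le\E[X]\le(1+p\eta_p)F_p$. This gives $\E[Y^2]\le p!\,(1+p\eta_p)\,m^{1-1/p}F_p^2$, essentially the paper's Claim~\ref{claim:low2}, and the stated $t$ then suffices.
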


	In the rest of this section, we prove Theorem~\ref{thm:Fp-one-pass}.
	We will make use of the following inequality.
	\begin{lemma}[\cite{Feige04}]
		\label{lem:feige}
		Let $Z_1, \dots, Z_t$ be arbitrary non-negative independent random variables, with expectations $\mu_1, \dots, \mu_t$ where $\mu_i \leq 1$ for all $i$. Then for any $\delta > 0$,
		\begin{equation*}
			\Pr\left[\sum_{i=1}^t Z_i < \left(\sum_{i=1}^t \mu_i\right) + \delta \right] > \min \left( \frac{\delta}{1 + \delta}, \frac{1}{13} \right).
		\end{equation*}
	\end{lemma}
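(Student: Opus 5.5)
Since this lemma is quoted from \cite{Feige04}, the paper only needs to invoke it. If I were to prove it from scratch, I would follow Feige's extremal-distribution strategy, which comes in three steps.

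\textbf{Step 1: normalization.} Write $S=\sum_i Z_i$ and $\mu=\sum_i\mu_i$. We want to lower bound $\Pr[S<\mu+\delta]$. By monotonicity we may assume $\delta\le 1$ and target the bound $\min(\delta/(1+\delta),1/13)$. First dispose of the trivial regime with Markov's inequality:
\[
\Pr[S\ge \mu+\delta]\le \frac{\mu}{\mu+\delta}.
\]
This already gives the claim whenever $\mu$ is comparable to $1$, for instance when $t=1$, where it yields exactly $\delta/(1+\delta)$. So the real content is the case $\mu\gg 1$, where Markov's inequality is useless and independence must be exploited.

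\textbf{Step 2: reduction to two-point distributions.} Fix all $Z_j$ with $j\ne i$. Then $\Pr[S<\mu+\delta]$ is a linear functional of the law of $Z_i$, and the set of non-negative laws with mean $\mu_i$ is convex. Its extreme points are distributions with at most two atoms. Hence, to minimize the probability, we may replace each $Z_i$ one at a time by a two-point variable supported on $\{a_i,b_i\}$, with $a_i\le\mu_i\le b_i$, without increasing $\Pr[S<\mu+\delta]$.

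Next I would argue that further worsening moves are possible:
\begin{itemize}
\item lowering the small atom $a_i$ to $0$, or raising it to $\mu_i$, while adjusting the probabilities to keep the mean fixed;
\item raising the large atom to just reach the threshold, of the form $b_i=\mu_i+\delta+(\text{slack})$, since mass placed beyond what is needed to push $S$ over the threshold is wasted.
\end{itemize}
The goal is a canonical form in which each variable either is deterministic at $\mu_i$, or takes value $0$ with some probability and a value of order $\mu_i+\delta$ otherwise. The "surplus" variables are thus those that occasionally jump high.

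\textbf{Step 3: the canonical instance.} Let $p_i$ be the probability that variable $i$ jumps. Then $S<\mu+\delta$ holds at least on the event that no variable jumps, up to careful accounting of the variables sitting below their mean.

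I would split into cases by $\sum_i p_i$:
\begin{itemize}
\item If $\sum_i p_i$ is small, $\Pr[\text{no jump}]\ge\prod_i(1-p_i)$ is bounded below. Here the constraint $\mu_i\le 1$ forces $p_i\le \mu_i/(\mu_i+\delta)$, and this is what produces the $\delta/(1+\delta)$ term.
\item If $\sum_i p_i$ is large, the variables at $0$ create a deficit below $\mu$ that exceeds the excess of the jumping ones with constant probability. I would show this with a second-moment (Chebyshev) or Paley--Zygmund argument on the centered sum, using that each centered summand has bounded range because $\mu_i\le1$.
\end{itemize}

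\textbf{Main obstacle.} The hard step is making this case analysis tight enough to get an absolute constant such as $1/13$, uniformly over all $\mu$ and $t$. The intermediate regime, where a few variables have jump probability of constant order, is where the crude bounds break down. There I would first try to condition on the number of jumps among the heavy variables and bound the remaining light ones by Chebyshev's inequality, accepting a worse constant if necessary.
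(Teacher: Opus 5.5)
The paper does not prove this lemma at all. It imports it from \cite{Feige04}, exactly as your first sentence says, and that citation is all that is required. Read as a self-contained proof, however, your sketch has a genuine gap, and it goes beyond not reaching $1/13$. You explicitly leave the intermediate regime open and are willing to settle for a worse constant, which would only prove a weaker lemma. The deeper problem is that the canonical form in Step~2 is not a valid without-loss-of-generality reduction.

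Whether a high value of $Z_i$ pushes $S$ over $\mu+\delta$ depends on the other summands. So \emph{raising the large atom to just reach the threshold} is not a per-variable operation. The only cap on a jump that comes for free is $b_i\le\mu+\delta$, since anything above is wasted, and that cap involves the global mean $\mu$, not $\mu_i$.

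Near-extremal instances really do use such jumps. Take $\mu_i=1$ and $Z_i\in\{0,t+\delta\}$ with $\Pr[Z_i=t+\delta]=1/(t+\delta)$. Any single jump already gives $S\ge\mu+\delta$, so $\Pr[S<\mu+\delta]=(1-\frac{1}{t+\delta})^t\approx e^{-1}$, with jumps of size about $t$.

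Consequently, your claim in Step~3 that the centered summands have bounded range \emph{because $\mu_i\le 1$} is false. The condition $\mu_i\le1$ only gives $Z_i-\mu_i\ge-1$; the upper tail is unbounded, and $\Var[S]$ can be arbitrarily large relative to $\delta^2$. This is precisely the unbounded-variance setting of Feige's title. Your Chebyshev/Paley--Zygmund step for large $\sum_i p_i$ therefore has nothing to stand on.

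Two parts of your reduction do survive:
\begin{itemize}
\item The reduction to two-point laws is sound.
\item The lower atom can indeed be moved to $0$, but by shifting, not by re-weighting. The event $\sum_i(Z_i-\mu_i)<\delta$ is invariant under $Z_i\mapsto Z_i-a_i$, and the new means $\mu_i-a_i$ are still at most $1$.
\end{itemize}
This leaves the bound $\Pr[\sum_i\Delta_iB_i<\sum_ip_i\Delta_i+\delta]$ for independent Bernoulli$(p_i)$ variables $B_i$, with $p_i\Delta_i\le1$ and arbitrary, heterogeneous jump sizes $\Delta_i$. Controlling that with an absolute constant is the real content of Feige's theorem, and your proposal does not supply it.

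Two smaller points:
\begin{itemize}
\item Markov's inequality settles only $\mu\le1$, not $\mu$ \emph{comparable to} $1$, since $\delta/(\mu+\delta)<\delta/(1+\delta)$ once $\mu>1$.
\item The strict inequality in the statement cannot hold for $\delta\le1/12$. Your own $t=1$ example, $Z_1\in\{0,1+\delta\}$ with $\Pr[Z_1=0]=\delta/(1+\delta)$, gives equality. The correct form, and Feige's, uses $\ge$, which is all the paper needs since it applies the lemma with $\delta=1$.
\end{itemize}
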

	
	We note that when $p = 1$, the output of Algorithm~\ref{alg:Fp-one-pass} is always $m$, which is exactly $F_1$.  We thus focus on integers $p \ge 2$.  
	
	We start by introducing the following concept.
	\begin{definition}[Ordered $p$-Clique]
		\label{def:p-clique}
		For a data stream $\sigma$ of length $m$, define the set of ordered $p$-cliques in $G^\sigma$ as
		$
		\K_p^\sigma = \left\{(i_1, \ldots, i_p) \in [m]^{p} \mid \forall (j, k) \in [p]^2, \sigma_{i_j} \sim \sigma_{i_k} \right\}.
		$
	\end{definition}
	
	\begin{fact}
		\label{fact:Fp}
		For a noiseless data stream $\tau$, $\abs{\K_p^\tau} = F_p$.
	\end{fact}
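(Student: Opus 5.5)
The plan is to prove Fact~\ref{fact:Fp} by double counting, using the cluster structure of $G^\tau$. Since $\tau$ is noiseless, similarity among the $\tau_i$'s is equality: the elements of $U$ are pairwise dissimilar and each is similar to itself. Hence $\tau_i \sim \tau_j$ holds if and only if $\tau_i = \tau_j$. It follows that $G^\tau$ is a disjoint union of cliques. For each $u_j \in U$ there is one clique, with vertex set $C_j = \{i \in [m] \mid \tau_i = u_j\}$ and $\abs{C_j} = f_j$.

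First we characterize the ordered $p$-cliques. A tuple $(i_1, \ldots, i_p) \in [m]^p$ lies in $\K_p^\tau$ if and only if $\tau_{i_a} = \tau_{i_b}$ for all $a, b \in [p]$. Equivalently, there is a unique $j$ (namely the index with $\tau_{i_1} = u_j$) such that all of $i_1, \ldots, i_p$ lie in $C_j$. Repeated indices are permitted by Definition~\ref{def:p-clique}, and the self-loop condition $\sigma_i \sim \sigma_i$ makes them consistent.

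Second we count. The sets $\{C_j^p\}_{j \in [n]}$ are pairwise disjoint, because the $C_j$ are disjoint and nonempty tuples determine $j$. Their union is exactly $\K_p^\tau$. Therefore
\[
\abs{\K_p^\tau} = \sum_{j=1}^n \abs{C_j}^p = \sum_{j=1}^n f_j^p = F_p(\tau).
\]
Elements with $f_j = 0$ contribute $0$ on both sides.

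There is no real obstacle here. The only point needing care is that tuples may repeat indices, so we count all of $C_j^p$ rather than only tuples with distinct entries. This is exactly why the count gives $f_j^p$ instead of a falling factorial.
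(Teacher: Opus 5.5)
Your proof is correct and follows essentially the same argument as the paper: decompose $G^\tau$ into one clique per distinct element, observe that every ordered $p$-clique lies inside a single clique, and sum $\abs{V_k}^p$ over the cliques to obtain $F_p$. Your remarks on repeated indices and on the disjointness of the sets $C_j^p$ spell out what the paper leaves as ``easy to see.''
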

	
	\begin{proof}
		Let $n_\tau$ be the number of distinct elements in $\tau$, and $\{V_1, \ldots, V_{n_\tau}\}$ be the set of cliques in $G^\tau$.  It is easy to see that for any $K_p \in \K_p^\tau$, all $p$ nodes in $K_p$ must belong to the same clique $V_k$ for some $k \in [n_\tau]$. Therefore, the total number of ordered $p$-cliques is $\sum_{k \in [n_\tau]} \abs{V_k}^p = F_p$.
	\end{proof}
	
	The following lemma shows that for a noisy data stream $\sigma$ and the corresponding ground truth  $\tau$, if $\eta_p$ is small, then $\K_p^\sigma$ and $\K_p^\tau$ overlap significantly. 
	
	\begin{lemma} 
		\label{lem:Fp-diff}
		For graphs $G^\sigma = ([m], E^\sigma)$ and $G^\tau = ([m], E^\tau)$, representing a noisy data stream $\sigma$ and the corresponding ground truth $\tau$ respectively, we have
		\begin{equation*}
			\abs{\K_p^{\sigma} \setminus \K_p^{\tau}} \leq \eta_p F_p, \quad \text{and} \quad \abs{\K_p^{\tau} \setminus \K_p^{\sigma}} \leq \frac{p}{2} \cdot \eta_p  F_p.
		\end{equation*}
	\end{lemma}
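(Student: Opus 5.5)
The plan is to charge each ``bad'' ordered clique to a single node $i \in [m]$, so that the number of bad cliques charged to $i$ is at most $\abs{B_i^{\sigma} \cup B_i^{\tau}}^{p-1} - \abs{B_i^{\sigma} \cap B_i^{\tau}}^{p-1}$. Summing over $i$ then gives $\eta_p F_p$ by the definition of $\eta_p$, with $F_p = \abs{\K_p^\tau}$ by Fact~\ref{fact:Fp}. The one structural fact used throughout is that $G^\tau$ is a cluster graph. Hence a tuple $(i_1,\dots,i_p)$ lies in $\K_p^\tau$ if and only if all of $i_2,\dots,i_p$ lie in $B_{i_1}^\tau$, and more generally if and only if all entries lie in $B_{i_j}^\tau$ for any single fixed $j$.

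\emph{First inequality.} Charge each tuple in $\K_p^\sigma \setminus \K_p^\tau$ to its first node $i = i_1$. Since the tuple is a $\sigma$-clique, every entry $i_2,\dots,i_p$ lies in $B_i^\sigma$. Since it is not a $\tau$-clique, the cluster-graph property forces some entry to lie outside $B_i^\tau$, hence outside $B_i^\sigma \cap B_i^\tau$. The number of $(p-1)$-tuples from $B_i^\sigma$ that are not entirely inside $B_i^\sigma \cap B_i^\tau$ is $\abs{B_i^\sigma}^{p-1} - \abs{B_i^\sigma\cap B_i^\tau}^{p-1}$. This is at most $\abs{B_i^\sigma\cup B_i^\tau}^{p-1} - \abs{B_i^\sigma\cap B_i^\tau}^{p-1}$. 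Summing over $i$ gives $\abs{\K_p^\sigma\setminus\K_p^\tau} \le \eta_p F_p$.

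\emph{Second inequality.} Here the factor $p/2$ has to come from a double count. Take a tuple in $\K_p^\tau \setminus \K_p^\sigma$. It is a $\tau$-clique, so for every position $j$ all entries lie in $B_{i_j}^\tau$. It is not a $\sigma$-clique, so there are positions $j\neq k$ with $\sigma_{i_j}\not\sim\sigma_{i_k}$. For each pair (position $j$, node $x$), count the tuples whose $j$-th entry is $x$, whose other $p-1$ entries all lie in $B_x^\tau$, and at least one of whose other entries lies outside $B_x^\sigma$. For fixed $(j,x)$ this count is at most $\abs{B_x^\tau}^{p-1} - \abs{B_x^\tau\cap B_x^\sigma}^{p-1} \le \abs{B_x^\sigma\cup B_x^\tau}^{p-1} - \abs{B_x^\sigma\cap B_x^\tau}^{p-1}$. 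Summing over all $p$ positions and all $x$ gives at most $p\,\eta_p F_p$.

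Every bad tuple is counted at least twice in this sum. Because the similarity relation is symmetric, the tuple is counted once at $(j, i_j)$, witnessed by $i_k \notin B_{i_j}^\sigma$, and once at $(k, i_k)$, witnessed by $i_j \notin B_{i_k}^\sigma$. These two charges are distinct because $j\neq k$. Dividing by $2$ gives $\abs{\K_p^\tau\setminus\K_p^\sigma}\le \frac{p}{2}\eta_p F_p$.

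The main obstacle is getting exactly $p/2$ rather than $p-1$ in the second bound. The naive approach charges only to the first endpoint of a violating pair, and that costs up to $p-1$ positions. The symmetric double count over both endpoints is what yields the factor $p/2$. The remaining steps are routine applications of the monotonicity $\abs{B^\sigma}, \abs{B^\tau} \le \abs{B^\sigma\cup B^\tau}$.
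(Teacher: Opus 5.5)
Your proof is correct and follows the paper's argument: the first bound charges each tuple to its first node exactly as the paper does, and the second uses the same double count over both endpoints of a $\sigma$-non-edge. Your per-(position, node) bookkeeping is a slightly more explicit version of the paper's per-node bound $n_i \le p\left(\abs{B_i^\tau}^{p-1} - \abs{B_i^\sigma \cap B_i^\tau}^{p-1}\right)$, and you state the cluster-graph transitivity that the paper uses implicitly when it asserts that some entry lies in $B_i^\sigma \setminus B_i^\tau$.
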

	
	\begin{proof}
		%	Abbreviate $\eta_p(\sigma, \tau)$ as $\eta_p$.
		We first bound the size of $\K_p^{\sigma} \setminus \K_p^{\tau}$. Consider an ordered $p$-clique $K_p = ({i_1}, \dots, {i_p})$ in $\K_p^{\sigma} \setminus \K_p^{\tau}$ with $i_1 = i$. There must exist a $k \in [p]$ such that ${i_k} \in B_i^{\sigma} \setminus B_i^{\tau}$. Therefore, the total number of ordered $p$-cliques  in $\K_p^{\sigma} \setminus \K_p^{\tau}$  with $i_1 = i$ is at most
		$
		\abs{B_i^{\sigma}}^{p-1} - \abs{B_i^{\sigma} \cap B_i^{\tau}}^{p-1} \leq \abs{B_i^{\sigma} \cup B_i^{\tau}}^{p-1} - \abs{B_i^{\sigma} \cap B_i^{\tau}}^{p-1}.
		$
		Summing over $i \in [m]$, we have 
		\begin{equation*}
			\abs{\K_p^{\sigma} \setminus \K_p^{\tau}} \le \sum_{i \in [m]} \left(\abs{B_i^{\sigma} \cup B_i^{\tau}}^{p-1} - \abs{B_i^{\sigma} \cap B_i^{\tau}}^{p-1}\right) = \eta_p  F_p.
		\end{equation*}
		
		We next bound the size of $\K_p^{\tau} \setminus \K_p^{\sigma}$. Let $n_i$ be the number of ordered $p$-cliques in $\K_p^\tau$ that contains $i$ and at least one node from $B_i^\tau \backslash B_i^\sigma$.  We have
		$$
		n_i \leq p  \left(\abs{B_i^{\tau}}^{p-1} - \abs{B_i^{\sigma} \cap B_i^{\tau}}^{p-1}\right) \leq p  \left( \abs{B_i^{\sigma} \cup B_i^{\tau}}^{p-1} - \abs{B_i^{\sigma} \cap B_i^{\tau}}^{p-1} \right).
		$$
		Note that for each $K_p =({i_1}, \dots, {i_p}) \in \K_p^{\tau} \setminus \K_p^{\sigma}$, there exists $x, y\in [p]$ such that ${i_x} \neq {i_y}$ and $({i_x}, {i_y}) \notin E^{\sigma}$, which implies that $K_p$ is counted in both $n_{i_x}$ and $n_{i_y}$. Summing over $i \in [m]$, we have
		\begin{equation*}
			\abs{\K_p^{\tau} \setminus \K_p^{\sigma}} \leq \frac{1}{2} \sum_{i \in [m]} n_i \leq \frac{p}{2}  \cdot \sum_{i \in [m]}\left( \abs{B_i^{\sigma} \cup B_i^{\tau}}^{p-1} - \abs{B_i^{\sigma} \cap B_i^{\tau}}^{p-1} \right) = \frac{p}{2} \cdot \eta_p  F_p.
		\end{equation*}
	\end{proof}
	
	Fact~\ref{fact:Fp} and Lemma~\ref{lem:Fp-diff} imply that we can estimate $F_p$ via estimating $\abs{\K_p^\sigma}$.  However, it seems difficult to estimate this quantity directly in the data stream model under sublinear memory space.  We therefore turn our attention to a closely related set.
	\begin{definition}[Increasingly Ordered $p$-Clique]
		For a data stream $\sigma$, define the set of increasingly ordered $p$-cliques in $G^\sigma$ as
		$\vec{\K}_p^\sigma = \{(i_1, \ldots, i_p) \in \K_p^\sigma \mid i_1 \le \ldots \le i_p\}$.
	\end{definition}
	
	Note that for each increasingly ordered $p$-clique $\vec{K}_p = (i_1, \dots, i_p) \in \vec{\K}_p^\sigma$, there are exactly $\MC(i_1, \dots, i_p)$ ordered $p$-cliques in $\K_p^{\sigma}$ that consist of the same multiset of nodes. 
	
	To facilitate the subsequent analysis, we introduce the following concept.
	\begin{definition} [Tail Set]
		For a data stream $\sigma$ and $j \in [p]$, define the tail set of an increasingly ordered $j$-clique $(i_1, \dots, i_j)$ in $G^\sigma$ as
		$
		\Tail(i_1, \dots, i_j) = \{i \in [m] \mid  i \geq i_j \land (\forall k \in [j], \sigma_i \sim \sigma_{i_k})\}.
		$
	\end{definition}
	In words, $	\Tail(i_1, \dots, i_j)$ contains all nodes $i \in [m]$ appearing after $i_j$ (including $i_j$) in the stream that, together with the $j$-clique $(i_1, \dots, i_j)$, form an increasingly ordered $(j+1)$-clique.
	
	The following lemma shows that Algorithm~\ref{alg:Fp-one-sample} outputs an unbiased estimator of $\abs{\K_p^\sigma}$.  Let $I_1, \dots, I_p$ correspond to $i_1, \ldots, i_p$  in Algorithm~\ref{alg:Fp-one-sample} be the random nodes, and let $R_1, \dots, R_p$ correspond to $r_1, \ldots, r_p$ in Algorithm~\ref{alg:Fp-one-sample} be the tail counts. Note that $R_1 = m$ is actually a fixed value, whereas others are true random variables.
	
	\begin{lemma} 
		\label{lem:Fp-exp}
		Let $X = \MC(I_1, \dots, I_p) \prod_{k \in [p]} R_k$ be the output of Algorithm~\ref{alg:Fp-one-sample}. It holds that $\E[X] = \abs{\K_p^\sigma}$.
	\end{lemma}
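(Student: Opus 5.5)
We first pin down the sampling distribution of Algorithm~\ref{alg:Fp-one-sample}, then sum over its outcomes. Note that $R_{j+1}$ is exactly the size of the tail set of the currently sampled prefix: $R_{j+1} = \abs{\Tail(I_1,\dots,I_j)}$ for $j \in [p-1]$, and $R_1 = m$. Indeed, after $i_j$ is reset at position $I_j$, $r_{j+1}$ is reset to $0$. From then on it is incremented precisely at positions $i \ge I_j$ (including $I_j$ itself, since $\sigma_{I_j} \sim \sigma_{I_j}$) with $\sigma_i \sim a_1, \dots, \sigma_i \sim a_j$. This requires showing that the nested \textbf{if}-conditions at a given position $i$ test similarity against the values $a_1,\dots,a_j$ \emph{current at that time}. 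Since $a_j$ is replaced only when $r_{j+1}$ is reset, the final $R_{j+1}$ counts exactly the positions after the last replacement, i.e. $\Tail(I_1,\dots,I_j)$.

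The key step is the reservoir-sampling claim. We will show
\[
\Pr[I_{j+1} = i \mid I_1=i_1,\dots,I_j=i_j] = \frac{1}{\abs{\Tail(i_1,\dots,i_j)}} \quad\text{for each } i \in \Tail(i_1,\dots,i_j),
\]
and $\Pr[I_1 = i] = 1/m$. The argument is by induction on $j$ over the stream. Conditioned on the final prefix being $(i_1,\dots,i_j)$, the level-$(j+1)$ reservoir restarted at $i_j$. It then saw exactly the elements of $\Tail(i_1,\dots,i_j)$ in order, each accepted with probability $1/r_{j+1}$, so the surviving index is uniform on that set.

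The subtle point, and the main obstacle, is the conditioning. The event that $(i_1,\dots,i_j)$ is the final prefix depends on future coin flips at levels $\le j$. However, those coins are independent of the level-$(j+1)$ coins at positions after $i_j$. Moreover, $I_{j+1}$ is determined only by level-$(j+1)$ coins at positions after $i_j$, together with the fixed tail set. We would formalize this by viewing all coins as pre-drawn independent uniforms $U_{\ell,i}$; acceptance at level $\ell$, position $i$ means $U_{\ell,i} \le 1/r_\ell$. The final prefix is a function of the level-$\le j$ coins, so conditioning on it leaves the level-$(j+1)$ coins i.i.d.

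Chaining the conditionals gives, for every $(i_1,\dots,i_p) \in \vec{\K}_p^\sigma$,
\[
\Pr[(I_1,\dots,I_p) = (i_1,\dots,i_p)] = \prod_{k\in[p]} \frac{1}{R_k}\Big|_{(i_1,\dots,i_p)},
\]
where the $R_k$ are the deterministic tail sizes. The support is exactly $\vec{\K}_p^\sigma$: each $I_{j+1}$ lies in the tail of the previous prefix, so every sampled tuple is an increasingly ordered $p$-clique, and every such clique has positive probability. Hence
\[
\E[X] = \sum_{(i_1,\dots,i_p)\in\vec{\K}_p^\sigma} \MC(i_1,\dots,i_p)\prod_k R_k \cdot \prod_k \frac{1}{R_k} = \sum_{\vec{\K}_p^\sigma} \MC(i_1,\dots,i_p).
\]
By the remark following the definition of $\vec{\K}_p^\sigma$, each multiset class contributes exactly $\MC(i_1,\dots,i_p)$ ordered cliques. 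Since the increasingly ordered representative of each multiset is unique, this sum equals $\abs{\K_p^\sigma}$.
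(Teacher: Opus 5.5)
Your proposal is correct and follows the same route as the paper's proof. In both, $I_1$ is uniform on $[m]$ and each $I_k$ is uniform on $\Tail(I_1,\dots,I_{k-1})$ with $R_k=\abs{\Tail(I_1,\dots,I_{k-1})}$, so every $(i_1,\dots,i_p)\in\vec{\K}_p^\sigma$ is drawn with probability $1/\prod_k R_k$ and $\E[X]$ collapses to $\sum_{\vec{\K}_p^\sigma}\MC(i_1,\dots,i_p)=\abs{\K_p^\sigma}$.

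Your pre-drawn-coins argument for the conditioning makes rigorous what the paper dismisses as ``easy to see.'' It is worth stating explicitly that a replacement at level $\ell$ forces replacements at all higher levels at the same position, by the counter reset plus self-similarity. That is what guarantees $a_1,\dots,a_j$ are frozen after $I_j$, which your identification of $R_{j+1}$ with the tail size relies on.
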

	
	\begin{proof}
		It is easy to see that in Algorithm~\ref{alg:Fp-one-sample}, $I_1$ is uniformly sampled from $[m]$ and $I_k$ is uniformly sampled from $\Tail(I_1, \dots, I_{k-1})$ for $k = 2, \dots, p$. Let $\vec{K}_p = (i_1, \dots, i_p) \in \vec{\K}_p^{\sigma}$ be an increasingly ordered $p$-clique. We have
		\begin{eqnarray}
			\Pr[I_1 = i_1, \dots, I_p = i_p] &=& \Pr[I_1 = i_1] \cdot \prod_{k=2}^{p}\Pr\left[I_k = i_k \mid I_1 = i_1, \dots, I_{k-1} = i_{k-1}\right] \nonumber\\
			&=& \frac{1}{m}  \prod_{k=2}^{p} \frac{1}{\abs{\Tail(i_1, \dots, i_{k-1})}}. \label{eq:c-1}
		\end{eqnarray}
		
		Note that $R_1 = m$, and $R_k = \abs{\Tail(I_1, \dots, I_{k-1})}$ for $k = 2, \dots, p$, and $(I_1, \dots, I_p)$ is always an increasingly ordered $p$-clique. It follows that
		\begin{eqnarray*}
			\E[X] &=& \sum_{(i_1, \dots, i_p) \in \vec{\K}_p^{\sigma}} \Pr[I_1 = i_1, \dots, I_p = i_p] \cdot \E[X \mid I_1 = i_1, \dots, I_p = i_p] \\
			&\stackrel{\eqref{eq:c-1}}{=}& \sum_{(i_1, \dots, i_p) \in \vec{\K}_p^{\sigma}}  \frac{1}{m} \prod_{k=2}^{p} \frac{1}{\abs{\Tail(i_1, \dots, i_{k-1})}} \cdot  \MC(i_1, \dots, i_p) \cdot m \prod_{k=2}^{p} \abs{\Tail(i_1, \dots, i_{k-1})} \\
			&=&\sum_{(i_1, \dots, i_p) \in \vec{\K}_p^{\sigma}} \MC(i_1, \dots, i_p) = \abs{\K_p^{\sigma}}.
		\end{eqnarray*}
	\end{proof}
	
	Let $X_1, \dots, X_t$ denote $t$ independent outputs of Algorithm~\ref{alg:Fp-one-sample}, and let $\Bar{X} = \frac{1}{t} \sum_{i \in [t]} X_i$. The next two lemmas show that when the mismatch ambiguity $\eta_p$ is small, $\Bar{X}$ is tightly concentrated around $F_p$ with good probability.
	
	\begin{lemma}[$\Bar{X}$ is not too large] \label{lem:Fp-large}
		$
		\Pr\left[\Bar{X} < (1 + \eps + \eta_p) \cdot F_p \right] > \frac{1}{13}.
		$
	\end{lemma}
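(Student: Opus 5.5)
We apply Lemma~\ref{lem:feige} to suitably normalized copies of the $X_i$. By Lemma~\ref{lem:Fp-exp} each $X_i$ is non-negative with $\E[X_i] = \abs{\K_p^\sigma}$. By Lemma~\ref{lem:Fp-diff} and Fact~\ref{fact:Fp},
\[
\abs{\K_p^\sigma} \le \abs{\K_p^\tau} + \abs{\K_p^\sigma \setminus \K_p^\tau} \le (1+\eta_p) F_p .
\]
Set $\mu = \abs{\K_p^\sigma}$ and define $Z_i = X_i/\mu$ for $i \in [t]$. If $\mu = 0$, then $X_i \equiv 0$ and the claim is trivial because $F_p \ge m > 0$. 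Otherwise the $Z_i$ are independent and non-negative with $\E[Z_i] = 1$, so the hypotheses of Lemma~\ref{lem:feige} hold with $\mu_i = 1$.

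Next we choose $\delta$. We want the event $\sum_i Z_i < t + \delta$ to coincide with $\bar{X} < \mu(1 + \delta/t)$. Take $\delta = \eps t$, so that this event reads $\bar{X} < (1+\eps)\mu$. Since $\eps t = 10^6 (p!) m^{1-1/p}/\eps \ge 1$ by the choice of $t$ in Line~\ref{ln:t}, we have $\delta/(1+\delta) \ge 1/2 > 1/13$. Lemma~\ref{lem:feige} therefore gives
\[
\Pr\bigl[\bar{X} < (1+\eps)\mu\bigr] > \tfrac{1}{13}.
\]

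Finally we compare with $F_p$. From the bound above,
\[
(1+\eps)\mu \le (1+\eps)(1+\eta_p)F_p = (1 + \eps + \eta_p + \eps\eta_p) F_p .
\]
The only real issue is absorbing the cross term $\eps\eta_p$. The cleanest route is to avoid creating it: apply Lemma~\ref{lem:feige} with the additive slack taken relative to $F_p$ rather than to $\mu$. That is, set $\delta = \eps t F_p/\mu$. This is still at least $\eps t/(1+\eta_p) \ge 1$, so the probability bound still exceeds $1/13$. The event $\sum_i Z_i < t+\delta$ then becomes $\bar{X} < \mu + \eps F_p \le (1+\eps+\eta_p)F_p$, which is exactly the statement. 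The remaining checks are routine: $\mu > 0$ whenever $m \ge 1$ (self-loops give $\mu \ge m$), and $\delta \ge 1$ holds for $\eps \le 1$.
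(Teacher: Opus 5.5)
Your proof is correct and follows the paper's route: normalize $Z_i = X_i/\abs{\K_p^\sigma}$, apply Lemma~\ref{lem:feige}, and combine with $\abs{\K_p^\sigma} \le (1+\eta_p)F_p$ from Fact~\ref{fact:Fp} and Lemma~\ref{lem:Fp-diff}. The only difference is in the choice of $\delta$. The paper takes $\delta = 1$, which gives $\bar{X} < (1+1/t)\abs{\K_p^\sigma}$, and then absorbs the excess $(1+\eta_p)F_p/t$ into $\eps F_p$. Your choice $\delta = \eps t F_p/\abs{\K_p^\sigma}$ builds the additive $\eps F_p$ slack in directly. Both versions need $\eps t F_p/\abs{\K_p^\sigma}$ to be at least a constant, which holds by the choice of $t$ and the standing assumption $\eta_p \le \frac{1}{3(p!)}$.
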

	
	\begin{proof}
		Lemma~\ref{lem:Fp-exp} gives $\E[\Bar{X}] = \abs{\K_p^{\sigma}}$ . Applying Lemma~\ref{lem:feige} (setting $Z_i = X_i/|\K_p^{\sigma}|$ and $\delta = 1$), we have
		\begin{equation}
			\Pr \left[\Bar{X} < \left(1 + \frac{1}{t}\right) \abs{\K_p^{\sigma}} \right] > \frac{1}{13}. \label{eq:apply-feige}
		\end{equation}
		By Fact~\ref{fact:Fp} and  Lemma~\ref{lem:Fp-diff},
		\begin{equation}
			\abs{\K_p^{\sigma}} \leq \abs{\K_p^{\tau}} + \abs{\K_p^{\sigma} \setminus \K_p^{\tau}} \leq (1 + \eta_p)  F_p. \label{eq:high0}
		\end{equation}
		The lemma follows from \eqref{eq:apply-feige}, \eqref{eq:high0}, and the fact that $\frac{1}{t} \ll \frac{\eps}{2}$.
	\end{proof}
	
	\begin{lemma} [$\Bar{X}$ is not too small] \label{lem:Fp-small}
		$
		\Pr\left[\Bar{X} \ge \left( 1 - \eps - 2(p!)\eta_p \right) \cdot  F_p\right] > 1 -  2\cdot10^{-5}.
		$
	\end{lemma}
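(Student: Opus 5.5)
Following the overview, I will compare each $X_j$ with the truncated variable
$$Y_j = \min\left\{X_j,\ m\,(p!)\,\bigl(d_{I_1}^\tau\bigr)^{p-1}\right\},$$
built from the same run of Algorithm~\ref{alg:Fp-one-sample}, and set $\bar Y = \frac1t\sum_j Y_j$. Since $Y_j \le X_j$ pointwise, it suffices to show
$$\Pr\bigl[\bar Y \ge (1-\eps-2(p!)\eta_p)F_p\bigr] > 1-2\cdot 10^{-5}.$$
The argument has two parts: a lower bound on $\E[Y_j]$, which is where $\eta_p$ enters, and an upper bound on $\Var[Y_j]$, which does not depend on $\eta_p$ at all. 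Chebyshev's inequality then finishes.

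\textbf{Lower bound on $\E[Y_j]$.} Condition on the sampled increasing clique $(I_1,\dots,I_p)=(i_1,\dots,i_p)$. On this event $X=\MC(i_1,\dots,i_p)\prod_k R_k$, and its probability is $1/\prod_k R_k$ by \eqref{eq:c-1}. Hence
$$\E[Y]=\sum_{\vec K\in\vec\K_p^\sigma}\MC(\vec K)\cdot \frac{\min\{\prod_k R_k,\ (p!)\,m\,(d_{i_1}^\tau)^{p-1}/\MC(\vec K)\}}{\prod_k R_k}.$$
Each term is at most $\MC(\vec K)$, and we lose only on cliques where truncation is active. I will restrict the sum to those $\vec K$ whose support lies inside $\K_p^\sigma\cap\K_p^\tau$ and whose tails are "clean", meaning $R_k\le d_{i_1}^\tau$ for all $k\ge2$. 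For such cliques $\prod_k R_k\le m(d_{i_1}^\tau)^{p-1}\le p!\,m(d_{i_1}^\tau)^{p-1}/\MC$, so nothing is truncated. The cliques we discard are of two kinds:
\begin{itemize}
\item those in $\K_p^\sigma\setminus\K_p^\tau$, and
\item those whose some tail $\Tail(i_1,\dots,i_{k-1})$ contains a node $j\notin B_{i_1}^\tau$, i.e.\ $j\in B_{i_1}^\sigma\setminus B_{i_1}^\tau$.
\end{itemize}
Because we weight by $\MC$, the lost mass counts ordered cliques. The first kind contributes at most $\eta_pF_p$ by Lemma~\ref{lem:Fp-diff}. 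For the second kind, I charge the lost clique to its first node $i_1$ together with a false-positive neighbour of $i_1$ appearing in the tail. Counting as in Lemma~\ref{lem:Fp-diff} gives, per $i_1$, at most $\MC$ times $(|B^\sigma_{i_1}\cup B^\tau_{i_1}|^{p-1}-|B^\sigma_{i_1}\cap B^\tau_{i_1}|^{p-1})$, so at most $p!$ times the summand of $\eta_pF_p$. Summing, and using $|\K_p^\sigma\cap\K_p^\tau|\ge F_p-\frac p2\eta_pF_p$, should give
$$\E[Y]\ge \bigl(1-c\,(p!)\,\eta_p\bigr)F_p .$$
Making the constant come out to exactly $2(p!)$ (or at most $2(p!)-\eps/2$) is the accounting I expect to be the main obstacle. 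If the clean-tail bookkeeping is lossy, my fallback is to truncate instead at $\MC\prod_k R_k$ evaluated with the $\tau$-tails, which makes the discarded set exactly the ordered cliques meeting a mismatch edge at $i_1$.

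\textbf{Upper bound on $\Var[Y_j]$.} Since $Y\le p!\,m\,(d_{I_1}^\tau)^{p-1}$ and $I_1$ is uniform on $[m]$,
$$\E[Y^2]\le (p!)^2 m^2\,\E\bigl[(d_{I_1}^\tau)^{2p-2}\bigr]=(p!)^2 m\sum_i (d_i^\tau)^{2p-2}=(p!)^2 m\,F_{2p-1}(\tau).$$
By monotonicity of $\ell_q$ norms, $F_{2p-1}\le F_p^{(2p-1)/p}$, and since $F_p\ge m^{p-1}$ this gives
$$m\,F_{2p-1}\le m\,F_p^{2-1/p}\le m^{1-1/p}F_p^2 .$$
So $\Var[Y]\le (p!)^2 m^{1-1/p}F_p^2$. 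This step is independent of the noise, which is exactly why we truncate by $\tau$-degrees.

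\textbf{Chebyshev.} With $t=10^6(p!)m^{1-1/p}/\eps^2$ we get $\Var[\bar Y]\le p!\,\eps^2F_p^2/10^6$. Using $\eta_p\le 1/(3\cdot p!)$, the mean satisfies $\E[\bar Y]\ge F_p/3$. Chebyshev at deviation $\eps F_p$ then yields failure probability $O(p!\cdot10^{-6})$. To reach $2\cdot10^{-5}$ for every constant $p$, I will absorb the $p!$ factor by using the variance bound $\E[Y^2]\le p!\,m\,F_{2p-1}$; this holds if the truncation constant is tightened to $\MC\cdot m(d^\tau)^{p-1}$, with $\MC\le p!$ applied only once. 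The lower bound $\bar Y\le\bar X$ then completes the proof.
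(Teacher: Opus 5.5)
Your skeleton (the same truncated $Y_j$, a lower bound on $\E[Y_j]$, a variance bound, Chebyshev, then $Y_j\le X_j$) is the paper's. However, the step that carries the $\eta_p$-dependence does not go through: discarding every increasing clique with an unclean tail loses far more than $O(p!)\,\eta_pF_p$. A single false-positive neighbour of $i_1$ that arrives later lies in $\Tail(i_1)$, so it spoils \emph{every} increasing clique starting at $i_1$, including cliques that touch no mismatch edge. For example, let $\tau$ be one cluster $V$ of size $d$ plus a singleton $j$ arriving after $V$, and let the only mismatch be a false-positive edge between $j$ and the first node $i$ of $V$. Then $R_2=\abs{\Tail(i)}=d+1>d_i^\tau$ whenever $I_1=i$. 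You therefore discard all ordered cliques of $V$ with minimum $i$, at least $p(d-1)^{p-1}$ of them, whereas $\eta_pF_p=(d+1)^{p-1}-d^{p-1}+2^{p-1}-1=O(d^{p-2})$. The discarded fraction of $F_p$ is $\Theta(1/d)$ while $\eta_p=\Theta(1/d^2)$. So your charging claim (per $i_1$, at most $\MC$ times the $\eta_p$-summand) is false, and no bound $(1-c\,(p!)\,\eta_p)F_p$ follows. Your fallback is only sketched and also lacks its claimed property: for $p=2$ in this example it truncates every clique $(i,v)$ with $v\in V$, since $\abs{\Tail(i)}=d+1$ exceeds the $\tau$-tail size $d$. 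The missing idea is to bound the excess $X-Y$ pointwise instead of discarding cliques. Conditioned on $I_1=i$, the nested tails give $X\le p!\,m\,\abs{\Tail(i)}^{p-1}\le p!\,m\,\abs{B_i^\sigma}^{p-1}$, so $X-Y\le p!\,m\max\{\abs{B_i^\sigma}^{p-1}-\abs{B_i^\tau}^{p-1},0\}$. Averaging over the uniform $I_1$ gives $\E[X-Y]\le p!\,\eta_pF_p$. Together with $\E[X]=\abs{\K_p^\sigma}\ge(1-\frac p2\eta_p)F_p$ (Lemma~\ref{lem:Fp-diff}) and $p/2\le p!$, this yields exactly $\E[Y]\ge(1-2(p!)\eta_p)F_p$.

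Your variance bound is correct once the justification is changed to $F_p\ge m$ (each $f_j\ge 1$); the stated $F_p\ge m^{p-1}$ is false for $p\ge 3$ (take all items distinct). More importantly, bounding $\E[Y^2]$ by the square of the cap costs $(p!)^2$. With the algorithm's $t$, Chebyshev then gives failure probability only about $p!\cdot 10^{-6}$, which exceeds $2\cdot 10^{-5}$ for $p\ge 4$. Your repair does not work as stated. With cap $\MC\cdot m\,(d^\tau_{I_1})^{p-1}$ you only get $Y^2\le p!\cdot\MC\cdot m^2(d^\tau_{I_1})^{2p-2}$, and $\MC$ stays inside the expectation (it equals $p!$ whenever the sampled nodes are distinct), so $\E[Y^2]\le p!\,mF_{2p-1}$ does not follow. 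The paper instead uses $\E[Y^2]\le(\sup Y)\,\E[Y]$ with $\sup Y\le p!\,m\max_k\abs{V_k}^{p-1}\le p!\,m^{1-1/p}F_p$, where the $V_k$ are the clusters of $G^\tau$; this follows from $\max_k\abs{V_k}^{p-1}\le F_p^{1-1/p}$ and H\"{o}lder. That keeps a single factor $p!$, which the choice of $t$ absorbs.
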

	
	\begin{proof}
		%	Let $X$ be a random variable distributed the same as $X_1, \ldots, X_t$.  
		For every $j \in [t]$, let 
		$Y_j = \min\left\{X_j, m (p!) \abs{B_{I_1}^{\tau}}^{p-1}\right\}.$
		We have the following claims, whose proofs will be given shortly. The first claim states that the expectation of each $Y_j$ is not far away from $F_p$ given that $\eta_p$ is small, and the second bounds the variance of each $Y_j$.
		\begin{claim} \label{claim:low1}  For any $j \in [t]$,
			$
			(1 - 2(p!) \eta_p)  F_p \leq \E[Y_j] \leq (1 + \eta_p)  F_p.
			$
		\end{claim}
		\begin{claim} \label{claim:low2}  For any $j \in [t]$,
			$
			\frac{\Var[Y_j]}{(\E[Y_j])^2} \leq \frac{p! }{1 - 2(p!) \eta_p} \cdot m^{1-1/p}.
			$
		\end{claim}
		
		We now have
		\begin{eqnarray}
			\Pr\left[\Bar{X} < \left( 1  - \eps - 2(p!)\eta_p \right) F_p\right] &=& \Pr\left[ \frac{1}{t} \sum_{j \in [t]} X_j - (1 - 2(p!)\eta_p)F_p < - \eps F_p \right] \label{eq:low1} \\
			&\leq& \Pr\left[ \frac{1}{t} \sum_{j \in [t]}Y_j - \E[Y_j] < - \eps F_p \right] \label{eq:low2} \\
			&\leq& \frac{\Var[Y_j]}{t \eps^2 F_p^2}  \label{eq:low21} \label{eq:low21}\\
			& \leq& \frac{(1 + \eta_p)^2 \Var[Y_j]}{t \eps^2 (\E[Y_j])^2} \label{eq:low3} \\
			&\leq& \frac{(1 + \eta_p)^2 }{t \eps^2 } \cdot \frac{p! \cdot m^{1-1/p}}{1 - 2(p!) \eta_p} \label{eq:low4} \\
			&\leq& \frac{1}{t} \cdot \frac{12 (p!) m^{1-1/p}}{\eps^2} \label{eq:low5} \\
			&\leq& 2\cdot 10^{-5}, \nonumber
		\end{eqnarray}
		where from \eqref{eq:low1} to \eqref{eq:low2}, we use the fact $X_j \geq Y_j$ and the first inequality of Claim~\ref{claim:low1}. From  \eqref{eq:low21} to \eqref{eq:low3}, we apply the second inequality of Claim~\ref{claim:low1}.  From  \eqref{eq:low3} to \eqref{eq:low4}, we apply Claim~\ref{claim:low2}. From \eqref{eq:low4} to \eqref{eq:low5}, we use our assumption $\eta_p < \frac{1}{3(p!)}$. The last inequality is due to our choice of $t$ (Line~\ref{ln:t} of Algorithm~\ref{alg:Fp-one-pass}).
	\end{proof}
	
	Now we are ready to prove Theorem~\ref{thm:Fp-one-pass}. Recall that we have set $\ell = 100$ in Algorithm~\ref{alg:Fp-one-pass}. Let $X_{\min} = \min\left\{\Bar{X}_1, \dots, \Bar{X}_{\ell}\right\}$ be the output of Algorithm~\ref{alg:Fp-one-pass}. By Lemma \ref{lem:Fp-large} and \ref{lem:Fp-small}, we have
	\begin{eqnarray*}
		\Pr \left[ X_{\min} < (1 + \eta_p + \eps) F_p \right] &\ge& 1 - \left( 1 - \frac{1}{13}\right)^{\ell} > 0.999, \quad \text{and}  \\
		\Pr \left[ X_{\min} \ge (1 - \eps - 2(p!)\eta_p) F_p \right] &\ge& 1 - \ell \cdot 2 \cdot 10^{-5} = 0.998.
	\end{eqnarray*}
	The theorem follows from a union bound.
	\medskip
	
	Finally, we give the proofs for the two claims used in Lemma~\ref{lem:Fp-small}.
	\begin{proof}[Proof of Claim~\ref{claim:low1}]
		We prove for every $j \in [t]$, and thus drop the subscript $j$ in $X_j$ and $Y_j$ and write them as $X$ and $Y$ for convenience. 
		We first look at the difference between $X$ and $Y$,
		\begin{equation*}
			\E[X - Y] = \sum_{i \in [m]} \Pr[I_1 = i] \cdot \E[X-Y \mid I_1 = i] = \frac{1}{m} \sum_{i \in [m]} \E[X-Y \mid I_1 = i].
		\end{equation*}
		Conditioned on $I_1 = i$, we have
		\begin{eqnarray}
			X &=& \MC(i, I_2, \dots, I_p) \cdot \prod_{k \in [p]} R_k \leq p! \cdot \prod_{k \in [p]} R_k  \nonumber \\
			&=& p! \cdot m \cdot \prod_{k=2}^p \abs{\Tail(i, \dots, I_{k-1})} \label{eq:low1-2} \\
			&\leq& p! \cdot m \cdot \abs{\Tail(i)}^{p-1}  \label{eq:low1-3} \\ 
			&\leq& m(p!) \abs{B_i^{\sigma}}^{p-1}, \label{eq:low1-4}
		\end{eqnarray}
		where from \eqref{eq:low1-2} to \eqref{eq:low1-3}, we use the fact that $\Tail(i, I_2, \dots, I_{p-1}) \subseteq \Tail(i, I_2, \dots, I_{p-2}) \subseteq \cdots \subseteq \Tail(i)$. The step from \eqref{eq:low1-3} to \eqref{eq:low1-4} follows from $\Tail(i) \subseteq B_i^{\sigma}$.
		
		We thus have
		\begin{equation*}
			\E[X - Y \mid I_1 = i] = \E\left[\left. \max \left\{X - m(p!)\abs{B_i^{\tau}}^{p-1}, 0\right\} \ \right|\ I_1 = i \right] \le m(p!) \cdot \max\left\{\abs{B_i^{\sigma}}^{p-1} - \abs{B_i^{\tau}}^{p-1}, 0 \right\}.
		\end{equation*}
		Noting that $\Pr[I_1 = i] = \frac{1}{m}$, we have
		\begin{eqnarray}
			\E[X - Y] &\leq& p! \cdot \sum_{i \in [m]} \max\left\{\abs{B_i^{\sigma}}^{p-1} - \abs{B_i^{\tau}}^{p-1}, 0 \right\} \nonumber \\
			&\leq& p! \cdot \sum_{i \in [m]} \left(\abs{B_i^{\sigma} \cup B_i^{\tau}}^{p-1} - \abs{B_i^{\tau} \cap B_i^{\sigma}}^{p-1} \right) \nonumber \\
			&=& p! \cdot \eta_p F_p. \label{eq:low1-5}
		\end{eqnarray}
		Consequently,
		\begin{eqnarray*}
			\E[Y] &=& \E[X] - \E[X-Y] \\
			&\ge& \abs{\K_p^{\sigma}} - p! \cdot \eta_p F_p \quad (\text{by \eqref{eq:low1-5} and Lemma~\ref{lem:Fp-exp}})\\
			&\geq& \abs{\K_p^{\tau}} - \abs{\K_p^{\tau} \setminus \K_p^{\sigma}} - p! \cdot \eta_p F_p \nonumber \\
			&\geq& (1 - 2(p!) \eta_p) F_p. \quad (\text{by Fact~\ref{fact:Fp} and Lemma~\ref{lem:Fp-diff}})
		\end{eqnarray*}
		On the other hand, by \eqref{eq:high0} and Lemma~\ref{lem:Fp-exp}, we have
		$
		\E[Y] \leq \E[X] = \abs{\K_p^{\sigma}} \leq (1 + \eta_p) F_p.
		$
		The claim follows.
	\end{proof}

	\begin{proof}[Proof of Claim~\ref{claim:low2}]
		Let $n_\tau$ be the number of distinct elements in the ground truth $\tau$, and $\{V_1, \ldots, V_{n_\tau}\}$ be the set of cliques in $G^\tau$.
		We prove for every $j \in [t]$, and thus drop the subscript $j$ in $Y_j$ for convenience. 
		
		By the definition of $Y$, we have
		$
		Y \leq m(p!) \cdot \max_{i \in [m]}  \abs{B_i^{\tau}}^{p-1} =  m(p!) \cdot \max_{k \in [n_{\tau}]}  \abs{V_k}^{p-1},
		$
		which implies 
		\begin{equation}
			\Var[Y] \leq \E[Y^2] \leq m(p!) \cdot \max_{k \in [n_{\tau}]}  \abs{V_k}^{p-1} \cdot \E[Y]. \label{eq:low2-1}
		\end{equation}
		Note that
		\begin{equation}
			\max_{k \in [n_{\tau}]}  \abs{V_k}^{p-1} = \left( \max_{k \in [n_{\tau}]} \abs{V_k}^p \right)^{1-1/p} \leq \left( \sum_{k \in [n_{\tau}]} \abs{V_k}^p \right)^{1-1/p} = F_p^{1-1/p}. \label{eq:low2-2}
		\end{equation}
		And by Hölder's inequality,
		\begin{equation}
			m = \sum_{k \in [n_{\tau}]} \abs{V_k} \leq n_{\tau}^{1-1/p} \left( \sum_{k \in [n_{\tau}]} \abs{V_k}^p \right)^{1/p} = n_{\tau}^{1-1/p} F_p^{1/p}. \label{eq:low2-3}
		\end{equation}
		Combining \eqref{eq:low2-1}, \eqref{eq:low2-2}, \eqref{eq:low2-3} and the first inequality of Claim~\ref{claim:low1}, we have
		\begin{equation*}
			\frac{\Var[Y]}{\E[Y]^2} \leq \frac{p! \cdot m \cdot \max_{k \in [n_{\tau}]}  \abs{V_k}^{p-1}}{\E[Y]} \leq \frac{p! \cdot n_{\tau}^{1-1/p} F_p^{1/p} \cdot F_p^{1-1/p}}{(1 - 2(p!) \eta_p) F_p}\leq \frac{p! \cdot m^{1-1/p} }{1 - 2(p!) \eta_p}.
		\end{equation*}
	\end{proof}

	\subsection{The Distributed Algorithm}
	\label{sec:distributed-algo}
	
	We note that our streaming algorithm cannot be directly translated into a round-efficient distributed algorithm, as sampling from the ordered $p$-cliques needs at least $p$ rounds. The key to designing a distributed algorithm that is efficient in both communication and rounds is the fact that, for the ground-truth dataset $\tau$, the quantity $F_p(\tau)$ coincides with the $(p-1)$-th {\em degree moment} of $G^{\tau}$, defined as 
	$M_{p-1}^{\tau} \triangleq \sum_{i \in [m]} \abs{B_i^{\tau}}^{p-1}.$
	When $\eta_p(\sigma, \tau)$ is small, the $(p-1)$-th degree moment $M_{p-1}^{\sigma} \triangleq \sum_{i \in [m]} \abs{B_i^{\sigma}}^{p-1}$ of $G^{\sigma}$ is expected to be close to $M^{\tau}_{p-1}$. Consequently, an accurate estimate of $M_{p-1}^{\sigma}$ yields a good approximation of $F_p(\tau)$.
	
	In the coordinator model, obtaining an unbiased estimate of $M_{p-1}^{\sigma}$ is relatively straightforward: we can uniformly sample nodes and query each site for their degrees. However, this vanilla sampling strategy suffers from instability, similar to the streaming setting. Fortunately, the analysis techniques developed for the data stream model can be applied in this context as well.
	
	Our two-round algorithm for approximating $F_p$ in the coordinator model is described in Algorithm~\ref{alg:FpDistGeneral}.  In the first round, the sites and coordinator sample a random set of nodes. At Line~\ref{ln:sample-transfer}, the coordinator tries to convert the sample from being ``without replacement'' to ``with replacement''.   In the second round, the players compute the $(p-1)$-th degree moment for each sampled node.  Finally, the coordinator use the min-average aggregation to output the final approximation.   
	
	\begin{algorithm}[t]
		\caption{Distributed-Robust-$F_p$}
		\label{alg:FpDistGeneral}
		\DontPrintSemicolon
		\SetAlgoNoEnd
		
		\KwIn{a noisy dataset $\sigma$ of size $m$ partitioned among $k$ sites, where the $\kappa$-th site holds $\sigma^{(\kappa)}$; a parameter $\eps$}
		\KwOut{an $((\eps + O(\eta_p)), 0.01)$-approximation of $F_p$}
		
		$t \gets \frac{10^6m^{1-{1}/{p}}}{\eps^2}$, $\ell \gets 100$\;  \label{ln:t-2}
		
		\gray{\tcp{First Round:}}
		Coordinator: send $m$ to all sites.
		
		\For{site $\kappa \gets 1, \dots, k$} {
			$S^{(\kappa)} \gets \emptyset$ \quad \gray{\tcc{set of sampled nodes at the $\kappa$-th site}} 
			\ForEach{$q \in \kV{\kappa}$} {
				with probability $\frac{2t \ell}{m}$, $S^{(\kappa)} \gets S^{(\kappa)} \cup \{q\}$
			}
			send $S^{(\kappa)}$ to the coordinator.
		}
		
		\gray{\tcp{Second Round:}}
		\gray{\tcc{convert ``sampled without replacement'' to ``sampled with replacement''}}
		Coordinator: $S^{\prime} \gets \bigcup_{\kappa \in [k]} S^{(\kappa)}$. If $\abs{S^{\prime}} < t \ell$, the coordinator outputs fail. Otherwise, it computes a sample $S$ from $S^{\prime}$ such that (1) $\abs{S} = t \ell$, and (2) each element in $S$ is uniformly sampled from $[m]$ with replacement. This can be done using a method described in \cite[Section 3.2]{CMY+12}. Send $S$ to all sites.   \label{ln:sample-transfer} \;
		\For{site $\kappa \gets 1, \dots, k$} {
			\ForEach{$i \in S$} {
				$d_i^{(\kappa)} \gets \abs{\kV{\kappa} \cap B_i^{\sigma} }$  \quad \gray{\tcc{local degree of node $i$ at the $\kappa$-th site}} 
			}
			send $\{d_i^{(\kappa)}\}_{i \in S}$ to the coordinator.
		}
		
		\gray{\tcp{Output:}}
		
		Coordinator: calculate $d_i \gets \sum_{\kappa \in [k]} d_i^{(\kappa)}$ for each $i \in S$. Divide $S$ into $\ell$ groups $\{S_1, \dots, S_{\ell}\}$, each containing $t$ samples. For each $j \in [\ell]$, calculate $\Bar{X}_j \gets \frac{m}{t}\sum_{i \in S_j} (d_i)^{p-1}$.
		
		\Return $\min\{\Bar{X}_1, \dots, \Bar{X}_{\ell}\}$.
	\end{algorithm}
	
	We have the following guarantees regarding Algorithm~\ref{alg:FpDistGeneral}.
	
	\begin{theorem}
		\label{thm:FpDistGeneral}
		For any constant $p \ge 1$, given an input dataset of size $m$ with $\eta_p \leq 0.4$ partitioned among $k$ sites in the coordinator model, Algorithm~\ref{alg:FpDistGeneral} computes a $((\eps + O(\eta_p)), 0.01)$-approximation of $F_p$, using two rounds and $O\left(\frac{1}{\eps^2}km^{1-1/p}\right)$ words of communication.
	\end{theorem}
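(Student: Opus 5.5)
The plan is to mirror the streaming analysis, replacing ordered $p$-cliques by the degree moment. First, the per-sample estimator. After the conversion at Line~\ref{ln:sample-transfer}, each $i\in S$ is uniform on $[m]$, and summing local degrees gives $d_i=\abs{B_i^\sigma}$. So each summand $X=m\,(d_I)^{p-1}$ with uniform $I$ satisfies $\E[X]=M_{p-1}^\sigma=\sum_i\abs{B_i^\sigma}^{p-1}$. Since $G^\tau$ is a cluster graph, $M_{p-1}^\tau=\sum_k\abs{V_k}\cdot\abs{V_k}^{p-1}=F_p$. We also need $\abs{M_{p-1}^\sigma-F_p}\le\sum_i\abs{\abs{B_i^\sigma}^{p-1}-\abs{B_i^\tau}^{p-1}}$. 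Each term is at most $\abs{B_i^\sigma\cup B_i^\tau}^{p-1}-\abs{B_i^\sigma\cap B_i^\tau}^{p-1}$, so $\abs{M_{p-1}^\sigma-F_p}\le\eta_pF_p$. For $p=1$ the estimator is exactly $m=F_1$, so we may assume $p\ge2$.

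For the upper tail I apply Lemma~\ref{lem:feige} to $Z_j=X_j/M_{p-1}^\sigma$ with $\delta=1$, exactly as in Lemma~\ref{lem:Fp-large}. This gives $\Pr[\bar X_j<(1+1/t)(1+\eta_p)F_p]>1/13$, hence $\Pr[\bar X_j<(1+\eps+2\eta_p)F_p]>1/13$. For the lower tail I truncate: let $Y=\min\{X,\,m\abs{B_I^\tau}^{p-1}\}$. The same computation as in Claim~\ref{claim:low1} gives
\[
\E[X-Y]\le\sum_i\max\{\abs{B_i^\sigma}^{p-1}-\abs{B_i^\tau}^{p-1},0\}\le\eta_pF_p,
\]
so $\E[Y]\ge(1-2\eta_p)F_p$. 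Here no $p!$ factor appears, and this is why the hypothesis $\eta_p\le0.4$ suffices: it keeps $1-2\eta_p\ge0.2$. Then, as in Claim~\ref{claim:low2}, $Y\le m\max_k\abs{V_k}^{p-1}$, so $\Var[Y]\le m\max_k\abs{V_k}^{p-1}\E[Y]$. Combining this with $\max_k\abs{V_k}^{p-1}\le F_p^{1-1/p}$ and Hölder's bound $m\le n_\tau^{1-1/p}F_p^{1/p}$ yields $\Var[Y]/\E[Y]^2\le m^{1-1/p}/(1-2\eta_p)\le5m^{1-1/p}$. Since $X\ge Y$, Chebyshev applied to the average of $t$ independent copies of $Y$ gives
\[
\Pr\bigl[\bar X_j<(1-\eps-2\eta_p)F_p\bigr]\le\frac{(1+\eta_p)^2\cdot5m^{1-1/p}}{t\eps^2}\le10^{-5}
\]
for the chosen $t$. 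Then the min over $\ell=100$ groups, together with a union bound, gives the $((\eps+O(\eta_p)),0.01)$ guarantee exactly as in the proof of Theorem~\ref{thm:Fp-one-pass}.

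Two technical points remain, and I expect the sampling conversion to be the main obstacle. First, the conversion must be valid: $\abs{S'}\sim\mathrm{Bin}(m,2t\ell/m)$ has mean $2t\ell$. A Chernoff bound shows that $\abs{S'}\ge t\ell$ with failure probability $e^{-\Omega(t\ell)}$, which is negligible (if $2t\ell/m\ge1$ we simply take all nodes). Given $\abs{S'}\ge t\ell$, the procedure of \cite{CMY+12} produces $t\ell$ i.i.d.\ uniform samples. To be careful, I would argue that, conditioned on $S'$ being large, the output is exactly distributed as i.i.d.\ uniform draws; that procedure should draw repeats from already-chosen elements with the appropriate probability. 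I would then add the small failure probability to the union bound. Second, the communication cost. In round one, the sites send $\abs{S'}=O(t\ell)$ ids in expectation, and with high probability by Chernoff. In round two, the coordinator sends $S$ to $k$ sites and each site returns $t\ell$ degrees. This totals $O(kt\ell)=O(\frac1{\eps^2}km^{1-1/p})$ words over two rounds.
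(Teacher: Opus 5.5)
Your proposal is correct and follows the paper's proof essentially step for step. The shared ingredients are the unbiased degree-moment estimator $X=m(d_I)^{p-1}$, Feige's inequality for the upper tail, and the truncation $Y=\min\{X,\,m(d_I^{\tau})^{p-1}\}$ with the H\"older-based variance bound and Chebyshev for the lower tail, followed by min-of-averages over $\ell=100$ groups and a Chernoff bound on $|S'|$. The differences are only cosmetic: you carry a factor $(1+\eta_p)^2$ where the paper uses $\E[Y]\le F_p$, you state a slightly looser upper-tail constant, and you spell out the $p=1$ case and the conditioning argument for the sampling conversion more carefully than the paper does.
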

	
	While Algorithm~\ref{alg:FpDistGeneral} differs considerably from Algorithm~\ref{alg:Fp-one-pass}, their analyses share similarity, especially regarding the handling of random variable instability. 
	
	\iffull
	We now prove Theorem~\ref{thm:FpDistGeneral}.
	First, by a Chernoff bound, we have $t\ell \leq \abs{S^{\prime}} \leq 3t\ell$ with probability $1-o(1)$, which implies that Algorithm~\ref{alg:FpDistGeneral} fails with probability $o(1)$.  Second, it is easy to see that Algorithm~\ref{alg:FpDistGeneral}  only uses two rounds and its communication complexity is bounded by $O(\abs{S^{\prime}} + k \abs{S}) = O(k t \ell) = O\left(\frac{1}{\eps^2}k m^{1-1/p}\right)$. It remains to prove the correctness of Algorithm~\ref{alg:FpDistGeneral}.
	
	We start with the following simple fact.
	
	\begin{fact} 
		\label{fact:deg-moments}
		For a noiseless dataset $\tau$, $F_p = \sum_{i \in [m]} (d_i^{\tau})^{p-1}$.
	\end{fact}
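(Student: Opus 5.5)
The identity follows by regrouping the sum over nodes according to the cliques of the cluster graph $G^\tau$, in the same spirit as the proof of Fact~\ref{fact:Fp}. Let $n_\tau$ be the number of distinct elements in $\tau$, and let $\{V_1, \ldots, V_{n_\tau}\}$ be the disjoint cliques of $G^\tau$. Each $V_k$ is exactly the set of positions $i$ with $\tau_i$ equal to the $k$-th distinct element. Hence $\abs{V_k}$ is that element's frequency, and elements with frequency zero contribute nothing to $F_p$.

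Step one is to determine $d_i^\tau$. For a node $i \in V_k$, we have $B_i^\tau = \{j \in [m] \mid \tau_j \sim \tau_i\}$. Since $\tau$ is noiseless, its elements lie in $U$, and elements of $U$ are pairwise dissimilar while each is similar to itself. So $\tau_j \sim \tau_i$ holds iff $\tau_j = \tau_i$, which is iff $j \in V_k$. Therefore $B_i^\tau = V_k$, and $d_i^\tau = \abs{V_k}$.

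Step two is to split the sum over the partition $[m] = \bigsqcup_k V_k$:
$$\sum_{i \in [m]} (d_i^\tau)^{p-1} = \sum_{k \in [n_\tau]} \sum_{i \in V_k} \abs{V_k}^{p-1} = \sum_{k \in [n_\tau]} \abs{V_k}^{p} = F_p.$$

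There is no real obstacle in this argument. The only point needing care is the justification that $B_i^\tau$ equals $V_k$ exactly, with no extra or missing nodes. This relies on the convention that $\sim$ is an exact equality test on ground-truth elements, which is precisely why $G^\tau$ is a cluster graph.
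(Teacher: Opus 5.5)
Your proof is correct and follows the paper's argument: the paper also partitions $[m]$ into the cliques $V_1,\ldots,V_{n_\tau}$ of $G^\tau$, uses $B_i^\tau = V_k$ for $i \in V_k$, and regroups $\sum_{i\in[m]}(d_i^\tau)^{p-1}$ as $\sum_{k}\sum_{i\in V_k}|V_k|^{p-1} = \sum_k |V_k|^p = F_p$. Your explicit justification that $B_i^\tau$ equals $V_k$ is a step the paper uses without comment.
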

	
	\begin{proof}
		Let $n_\tau$ be the number of distinct elements in $\tau$, and $\{V_1, \ldots, V_{n_\tau}\}$ be the set of cliques in $G^\tau$. Then we have
		\begin{equation*}
			F_p(\tau) = \sum_{k \in [n_{\tau}]} \abs{V_k}^p = \sum_{k \in [n_{\tau}]} \sum_{i \in V_k} \abs{V_k}^{p-1} = \sum_{k \in [n_{\tau}]} \sum_{i \in V_k} \abs{B_i^{\tau}}^{p-1} = \sum_{i \in [m]} \abs{B_i^{\tau}}^{p-1}.
		\end{equation*}
	\end{proof}
	
	We have the following lemma.
	\begin{lemma} \label{lem:FpDistGeneral-exp}
		For a dataset $\sigma$, let $I \in S$ be a sampled node from Algorithm~\ref{alg:FpDistGeneral} and $X = m (d_I)^{p-1}$. We have $\E[X] = \sum_{i \in [m]} (d_i)^{p-1}$.
	\end{lemma}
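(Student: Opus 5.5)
The plan is to reduce the statement to two facts: the sampled node $I$ is uniform on $[m]$, and the value $d_I$ that the coordinator computes equals $d_I^\sigma = \abs{B_I^\sigma}$. Everything else is a one-line expectation computation.

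\emph{Uniformity of $I$.} By Line~\ref{ln:sample-transfer} of Algorithm~\ref{alg:FpDistGeneral}, every element of $S$ is uniform over $[m]$. This is the guarantee of the conversion procedure of \cite[Section 3.2]{CMY+12}, and we condition on the event that the algorithm does not fail. The first-round sampling is independent and identical at each node, so the set $S'$ is exchangeable over $[m]$. The conversion then turns a uniform without-replacement sample into with-replacement draws. Hence $\Pr[I = i] = \frac{1}{m}$ for each $i \in [m]$.

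\emph{Identifying $d_I$.} The $k$ sites hold the disjoint sets $\kV{1},\dots,\kV{k}$, which partition $[m]$. Therefore
\[
d_i = \sum_{\kappa\in[k]} \abs{\kV{\kappa}\cap B_i^\sigma} = \abs{B_i^\sigma} = d_i^\sigma,
\]
which is deterministic once $i$ is fixed. (The lemma writes $d_i$ for this quantity.)

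\emph{Conclusion.} Combining the two facts,
\[
\E[X] = \sum_{i\in[m]} \Pr[I=i]\cdot m\,(d_i)^{p-1} = \sum_{i\in[m]} (d_i)^{p-1}.
\]

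The only step that is not routine is justifying the uniformity of each element of $S$. For that I would rely on the cited conversion method and note that the conditioning on non-failure does not bias the sample, since the failure event depends only on $\abs{S'}$ and not on which nodes were sampled.
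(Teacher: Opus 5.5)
Your proposal is correct and follows the paper's proof, which simply notes that $I$ is uniform on $[m]$ and computes $\E[X]=\sum_{i\in[m]}\frac{1}{m}\cdot m(d_i)^{p-1}$. The paper takes two points for granted that you spell out: the justification of uniformity (via the conversion step of \cite{CMY+12} and the fact that the failure event depends only on $\abs{S'}$), and the identification $d_i=\abs{B_i^\sigma}$.
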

	
	\begin{proof}
		Since $I$ is uniformly sampled from $[m]$, we have
		$$
		\E[X] = \sum_{i \in [m]} \Pr[I=i] \cdot m(d_i)^{p-1} = \sum_{i \in [m]} (d_i)^{p-1}.
		$$
	\end{proof}
	
	Let $X_1, \dots, X_t$ be i.i.d. copies of $X$ defined in Lemma~\ref{lem:FpDistGeneral-exp} and $\Bar{X} = \frac{1}{t} \sum_{i=1}^t X_i$.  The next two lemmas show that $\Bar{X}$ cannot be too large or too small.
	
	\begin{lemma}[$\Bar{X}$ is not too large] \label{lem:FpDistGeneral-large}
		$
		\Pr\left[\Bar{X} < (1 + \eps + \eta_p) \cdot F_p \right] > \frac{1}{13}.
		$
	\end{lemma}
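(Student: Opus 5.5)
The plan mirrors the proof of Lemma~\ref{lem:Fp-large}. By Lemma~\ref{lem:FpDistGeneral-exp}, each $X_i$ has expectation $M_{p-1}^{\sigma} = \sum_{i \in [m]} (d_i)^{p-1}$, where $d_i = \abs{B_i^{\sigma}}$ is exactly the degree the coordinator computes, since $d_i = \sum_{\kappa} \abs{\kV{\kappa} \cap B_i^{\sigma}}$. Hence $\E[\Bar{X}] = M_{p-1}^{\sigma}$.

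First, apply Lemma~\ref{lem:feige} with $Z_i = X_i / M_{p-1}^{\sigma}$. These variables are nonnegative and independent, because the samples in $S$ are i.i.d.\ uniform with replacement. Each has mean $\mu_i = 1/t \le 1$, so $\sum_i \mu_i = 1$. Strictly, the lemma needs $\mu_i \le 1$, so it is cleaner to set $Z_i = X_i/(t\,\E[\Bar X])$, which gives $\sum \mu_i = 1$. Taking $\delta = \eps/2$, or simply $\delta=1$ if one prefers exactly the $1/13$ bound, gives with probability greater than $\min(\delta/(1+\delta), 1/13)$ that $\Bar X < (1+\delta) M_{p-1}^{\sigma}$.

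Here I must be careful, because the paper's streaming proof wrote the threshold as $(1+1/t)$, which looks like a normalization slip. I would instead normalize by the sum: let $Z_i = X_i/(t M^\sigma_{p-1})$, so $\sum_i \mu_i = 1$. Taking $\delta = \eps/2$ gives probability at least $\min(\frac{\eps/2}{1+\eps/2}, \frac{1}{13})$, which is weaker than $1/13$ for small $\eps$. To obtain exactly $1/13$ I would follow the paper's normalization $Z_i = X_i/M_{p-1}^\sigma$, which makes $\sum_i \mu_i = t$. With $\delta = 1$ this gives $\Bar X < (1+1/t) M_{p-1}^{\sigma}$ with probability greater than $1/13$, and $1/t \ll \eps/2$ by the choice of $t$ at Line~\ref{ln:t-2}. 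This is the step I expect to be the main (if minor) obstacle: matching the constant $1/13$. I will present it exactly as in Lemma~\ref{lem:Fp-large}.

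Second, bound $M_{p-1}^{\sigma}$ in terms of $F_p$. By Fact~\ref{fact:deg-moments}, $F_p = \sum_i \abs{B_i^\tau}^{p-1}$. For every $i$,
\[
\abs{B_i^\sigma}^{p-1} - \abs{B_i^\tau}^{p-1} \le \abs{B_i^\sigma \cup B_i^\tau}^{p-1} - \abs{B_i^\sigma \cap B_i^\tau}^{p-1},
\]
since $\abs{B_i^\sigma} \le \abs{B_i^\sigma \cup B_i^\tau}$ and $\abs{B_i^\tau} \ge \abs{B_i^\sigma \cap B_i^\tau}$. Summing over $i$ and using the definition of $\eta_p$ gives $M_{p-1}^\sigma \le (1+\eta_p)F_p$.

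Combining the two steps, with probability greater than $1/13$,
\[
\Bar X < (1+1/t)(1+\eta_p)F_p \le (1+\eps+\eta_p)F_p.
\]
The last inequality holds because $(1/t)(1+\eta_p) \le 2/t \le \eps$ under $\eta_p \le 0.4$ and the value of $t$.
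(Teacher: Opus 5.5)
Your final argument is correct and is essentially the paper's proof: apply Lemma~\ref{lem:feige} with $Z_i = X_i/M_{p-1}^{\sigma}$ and $\delta = 1$ to get $\bar X < (1+1/t)M_{p-1}^{\sigma}$ with probability greater than $1/13$, then bound $M_{p-1}^{\sigma} \le (1+\eta_p)F_p$ termwise from the definition of $\eta_p$. That normalization is not a slip: $X_i/M_{p-1}^{\sigma}$ has mean $1$, not $1/t$, so $\sum_i \mu_i = t$ and the $(1+1/t)$ threshold is exactly what Feige's inequality gives; also, your explicit use of $\eta_p \le 0.4$ to absorb $(1+\eta_p)/t$ into $\eps$ is slightly more careful than the paper's ``$1/t \ll \eps/2$''.
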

	
	\begin{proof}
		By Lemma~\ref{lem:FpDistGeneral-exp}, we have $\E[\Bar{X}] = \sum_{i \in [m]} (d_i)^{p-1}$ . Applying Lemma~\ref{lem:feige} on $X_1, \ldots, X_t$, we get
		\begin{equation}
			\label{eq:apply-feige-2}
			\Pr \left[\Bar{X} < \left(1 + \frac{1}{t}\right) \sum_{i \in [m]} (d_i)^{p-1}  \right] > \frac{1}{13}.
		\end{equation}
		By Fact~\ref{fact:deg-moments}, $F_p = \sum_{i \in [m]} (d_i^{\tau})^{p-1}$. We thus have
		\begin{eqnarray}
			\sum_{i \in [m]} (d_i)^{p-1} &\leq& F_p + \abs{\sum_{i \in [m]} (d_i)^{p-1} - \sum_{i \in [m]} (d_i^{\tau})^{p-1}} \nonumber \\
			&\leq& F_p + \sum_{i \in [m]} \left(\abs{B_i^{\sigma} \cup B_i^{\tau}}^{p-1} - \abs{B_i^{\sigma} \cap B_i^{\tau}}^{p-1} \right) \nonumber \\
			&=& (1 + \eta_p) F_p. \label{eq:high1}
		\end{eqnarray}
		The lemma follows by \eqref{eq:apply-feige-2}, \eqref{eq:high1}, and the fact that $\frac{1}{t} \ll \frac{\eps}{2}$.
	\end{proof}
	
	\begin{lemma} [$\Bar{X}$ is not too small] \label{lem:FpDistGeneral-small}
		$
		\Pr\left[\Bar{X} < \left( 1 - \eps - 2\eta_p  \right) \cdot  F_p\right] < 10^{-5}.
		$
	\end{lemma}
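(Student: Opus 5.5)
We follow the proof of Lemma~\ref{lem:Fp-small}, replacing the clique estimator by the truncated degree estimator. For every $j \in [t]$, with $I^{(j)}$ the $j$-th sampled node, we define
$$
Y_j = \min\left\{X_j,\ m\,\abs{B_{I^{(j)}}^{\tau}}^{p-1}\right\} = m \cdot \min\left\{(d_{I^{(j)}})^{p-1},\ (d^\tau_{I^{(j)}})^{p-1}\right\}.
$$
Then $Y_j \le X_j$ pointwise, and the $Y_j$ are i.i.d. because the $I^{(j)}$ are uniform on $[m]$ with replacement. The proof rests on two claims about $Y = Y_j$ that are analogous to Claims~\ref{claim:low1} and~\ref{claim:low2}.

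\emph{Expectation.} We have $\E[Y] = \sum_{i\in[m]} \min\{(d_i)^{p-1}, (d_i^\tau)^{p-1}\}$. Since $B_i^\sigma\cap B_i^\tau$ is contained in both $B_i^\sigma$ and $B_i^\tau$, each term is at least $\abs{B_i^\sigma\cap B_i^\tau}^{p-1}$. Therefore
$$
\E[Y] \ge \sum_i (d_i^\tau)^{p-1} - \sum_i \left(\abs{B_i^\tau}^{p-1} - \abs{B_i^\sigma\cap B_i^\tau}^{p-1}\right) \ge F_p - \eta_p F_p,
$$
where the last step uses Fact~\ref{fact:deg-moments} and the monotonicity $\abs{B_i^\tau}\le \abs{B_i^\sigma\cup B_i^\tau}$. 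This gives $\E[Y] \ge (1-\eta_p)F_p$, which is better than the $p!$-loss in the streaming case. In the other direction, $\E[Y]\le \sum_i (d_i^\tau)^{p-1} = F_p$.

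\emph{Variance.} Since $Y \le m \max_i \abs{B_i^\tau}^{p-1} = m\max_k \abs{V_k}^{p-1}$, we get $\Var[Y]\le \E[Y^2] \le m\max_k\abs{V_k}^{p-1}\,\E[Y]$. Combining this with \eqref{eq:low2-2}, \eqref{eq:low2-3} and $n_\tau\le m$ gives
$$
\frac{\Var[Y]}{\E[Y]^2} \le \frac{m^{1-1/p}}{1-\eta_p}.
$$

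\emph{Chebyshev step.} We bound
$$
\Pr[\Bar X < (1-\eps-2\eta_p)F_p] \le \Pr\Bigl[\tfrac1t\sum_j Y_j - \E[Y] < -\eps F_p - \eta_p F_p\Bigr] \le \Pr\Bigl[\tfrac1t\sum_j Y_j - \E[Y] < -\eps F_p\Bigr],
$$
using $\Bar X \ge \frac1t\sum_j Y_j$ and $\E[Y]\ge(1-\eta_p)F_p$. Chebyshev's inequality bounds the last probability by $\Var[Y]/(t\eps^2F_p^2)$. Since $\E[Y]\le F_p$, this is at most $\Var[Y]/(t\eps^2\E[Y]^2) \le m^{1-1/p}/\bigl((1-\eta_p)t\eps^2\bigr)$. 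With $\eta_p\le 0.4$ and $t = 10^6 m^{1-1/p}/\eps^2$, this is at most $\frac{1}{0.6\cdot 10^6} < 10^{-5}$.

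The main obstacle is the variance step: the raw $X$ can have huge variance, for example when some $d_i$ is large because of false-positive edges. The truncation at $m(d_I^\tau)^{p-1}$ is what bounds the variance by ground-truth clique sizes. This truncation must cost only $O(\eta_p F_p)$ in expectation, which the min-with-intersection bound provides.
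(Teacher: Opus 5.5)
Your proof is correct and follows the paper's argument: the same truncation $Y_j=\min\{X_j,m(d^\tau_{I^{(j)}})^{p-1}\}$, the same variance bound via the largest ground-truth clique and H\"older's inequality, and the same Chebyshev chain with $t=10^6m^{1-1/p}/\eps^2$. The only difference is that you lower-bound $\E[Y]$ termwise by $\sum_i\abs{B_i^\sigma\cap B_i^\tau}^{p-1}$ and obtain $(1-\eta_p)F_p$, whereas the paper separately bounds $\E[X-Y]\le\eta_pF_p$ and $\abs{\sum_i d_i^{p-1}-F_p}\le\eta_pF_p$ and obtains $(1-2\eta_p)F_p$; your version is valid and just leaves more slack than the lemma needs.
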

	
	\begin{proof}
		For any $j \in [t]$, let $Y_j = \min\left\{X_j, m (d_I^{\tau})^{p-1}\right\}$. We have the following claims, whose proof will be given at the end of this section.
		
		\begin{claim} \label{claim:FpDistGeneral-low1}
			For any $j \in [t]$, $(1 - 2 \eta_p) F_p \leq \E[Y_j] \leq F_p$.
		\end{claim}
		
		\begin{claim} \label{claim:FpDistGeneral-low2}
			For any $j \in [t]$, $\frac{\Var[Y_j]}{(\E[Y_j])^2} \leq \frac{m^{1-1/p}}{1 - 2 \eta_p}$.
		\end{claim}
		
		We now have
		\begin{eqnarray}
			\Pr\left[\Bar{X} < \left( 1 - \eps - 2\eta_p \right) F_p\right] &=& \Pr\left[ \frac{1}{t} \sum_{j \in [t]} X_j - (1 - 2\eta_p)F_p < - \eps F_p \right] \label{eq:Fplow1} \\
			&\leq& \Pr\left[ \frac{1}{t} \sum_{j \in [t]} Y_j - \E[Y_j] < - \eps F_p \right] \label{eq:Fplow2} \\
			&\leq& \frac{\Var[Y_j]}{t \eps^2 F_p^2} \label{eq:Fplow21}\\
			&\leq& \frac{ \Var[Y_j]}{t \eps^2 (\E[Y_j])^2} \label{eq:Fplow3} \\
			&\leq& \frac{1 }{t \eps^2 } \cdot \frac{ m^{1-1/p}}{1 - 2 \eta_p} \label{eq:Fplow31}\\
			&\leq& 10^{-5}, \label{eq:Fplow4}
		\end{eqnarray}
		where from \eqref{eq:Fplow1} to \eqref{eq:Fplow2}, we use the fact that $X_j \geq Y_j$ from the definition of $Y_j$ and the first inequality of Claim~\ref{claim:FpDistGeneral-low1}. From \eqref{eq:Fplow2} to \eqref{eq:Fplow21}, we use Chebyshev's inequality. From \eqref{eq:Fplow21} to \eqref{eq:Fplow3}, we use the second inequality of Claim~\ref{claim:FpDistGeneral-low1}. From \eqref{eq:Fplow3} to \eqref{eq:Fplow31}, we apply Claim~\ref{claim:FpDistGeneral-low2}. The last inequality follows by our assumption that $\eta_p \leq 0.4$ and our choice of value $t$ (Line~\ref{ln:t-2} of Algorithm~\ref{alg:FpDistGeneral}).
	\end{proof}
	
	Now we are ready to prove Theorem~\ref{thm:FpDistGeneral}. Recall that we have set $\ell = 100$ in Algorithm~\ref{alg:FpDistGeneral}. Let $X_{\min} = \min\left(\Bar{X}_1, \dots, \Bar{X}_{\ell}\right)$ be the output of Algorithm~\ref{alg:FpDistGeneral}. By Lemma \ref{lem:FpDistGeneral-large} and \ref{lem:FpDistGeneral-small}, we have
	\begin{eqnarray*}
		\Pr \left[ X_{\min} < (1 + \eps + \eta_p) F_p \right] &\ge& 1 - \left( 1 - \frac{1}{13}\right)^{\ell} > 0.999, \quad \text{and} \\
		\Pr \left[ X_{\min} < (1 - \eps - 2\eta_p) F_p \right] &\le& \ell \cdot 10^{-5} = 0.001.
	\end{eqnarray*}
	The theorem follows by a union bound.
	
	Finally, we prove the two claims used in the proof of Lemma~\ref{lem:FpDistGeneral-small}. Our proofs work for every $j \in [t]$. Thus, for convenience, we drop the subscript $j$ in $X_j$ and $Y_j$. 
	
	\begin{proof}[Proof of Claim \ref{claim:FpDistGeneral-low1}]
		We first investigate the difference between $X$ and $Y$,
		\begin{eqnarray}
			\E[X - Y] &=& \sum_{i \in [m]}   \frac{1}{m} \cdot \E[X-Y \mid I = i] = \sum_{i \in [m]}  \max \left\{(d_i)^{p-1} - (d_i^{\tau})^{p-1}, 0\right\} \nonumber \\
			&\leq& \sum_{i \in [m]}  \left(\abs{B_i^{\sigma} \cup B_i^{\tau}}^{p-1} - \abs{B_i^{\sigma} \cap B_i^{\tau}}^{p-1} \right) \nonumber \\
			&=& \eta_pF_p. \label{eq:Fplow1-1}
		\end{eqnarray}
		Consequently, we have
		\begin{eqnarray*}
			\E[Y] &=& \E[X]- \E[X-Y] \\
			&\geq& \sum_{i \in [m]} (d_i)^{p-1} - \eta_pF_p \quad \quad (\text{by \eqref{eq:Fplow1-1} and Lemma~\ref{lem:FpDistGeneral-exp}}) \\
			&\geq& F_p - \abs{\sum_{i \in [m]} (d_i)^{p-1} - \sum_{i \in [m]} (d_i^{\tau})^{p-1}} - \eta_p F_p \quad \text{(by Fact~\ref{fact:deg-moments})}\\
			&\geq& (1 - \eta_p)F_p - \sum_{i \in [m]} \left(\abs{B_i^{\sigma} \cup B_i^{\tau}}^{p-1} - \abs{B_i^{\sigma} \cap B_i^{\tau}}^{p-1} \right) \\
			&=& (1-2\eta_p) F_p.
		\end{eqnarray*}
		
		For the other direction, by Fact~\ref{fact:deg-moments} we have
		\begin{eqnarray*}
			\E[Y] = \frac{1}{m} \sum_{i \in [m]} m \cdot \min \left\{(d_i)^{p-1}, (d_i^{\tau})^{p-1}\right\} \leq \sum_{i \in [m]} (d_i^{\tau})^{p-1} = F_p.
		\end{eqnarray*}
	\end{proof}
	
	\begin{proof}[Proof of Claim \ref{claim:FpDistGeneral-low2}]
		Let $n_\tau$ be the number of distinct elements in $\tau$, and $\{V_1, \ldots, V_{n_\tau}\}$ be the set of cliques in $G^\tau$.  By the definition of $Y$, we have
		\begin{equation*}
			Y \leq m \cdot \max_{i \in [m]}  \abs{B_i^{\tau}}^{p-1} =  m \cdot \max_{k \in [n_{\tau}]}  \abs{V_k}^{p-1},
		\end{equation*}
		which implies 
		\begin{equation}
			\Var[Y] \leq \E[Y^2] \leq m \cdot \max_{k \in [n_{\tau}]}  \abs{V_k}^{p-1} \cdot \E[Y]. \label{eq:Fplow2-1}
		\end{equation}
		Notice that
		\begin{equation}
			\max_{k \in [n_{\tau}]}  \abs{V_k}^{p-1} = \left( \max_{k \in [n_{\tau}]} \abs{V_k}^p \right)^{1-1/p} \leq \left( \sum_{k \in [n_{\tau}]} \abs{V_k}^p \right)^{1-1/p} = F_p^{1-1/p}. \label{eq:Fplow2-2}
		\end{equation}
		And by Hölder's inequality,
		\begin{equation}
			m = \sum_{k \in [n_{\tau}]} \abs{V_k} \leq n_{\tau}^{1-1/p} \left( \sum_{k \in [n_{\tau}]} \abs{V_k}^p \right)^{1/p} = n_{\tau}^{1-1/p} F_p^{1/p}. \label{eq:Fplow2-3}
		\end{equation}
		Combining \eqref{eq:Fplow2-1}, \eqref{eq:Fplow2-2}, \eqref{eq:Fplow2-3}, and the first inequality of Claim~\ref{claim:FpDistGeneral-low1}, we have
		\begin{equation*}
			\frac{\Var[Y]}{(\E[Y])^2} \leq \frac{m \cdot \max_{k \in [n_{\tau}]}  \abs{V_k}^{p-1}}{\E[Y]} \leq \frac{n_{\tau}^{1-1/p} F_p^{1/p} \cdot F_p^{1-1/p}}{(1 - 2 \eta_p) F_p} \leq \frac{ m^{1-1/p} }{1 - 2 \eta_p}.
		\end{equation*}
	\end{proof}
	\else
	Due to the space constraints, we defer the proof of Theorem~\ref{thm:FpDistGeneral} to the full version of this paper.
	\fi

	\section{The Lower Bounds}
	\label{sec:lb}
	
	We now explore the lower bounds, beginning with an overview of key ideas and proof techniques.
	
	\vspace{2mm}
	\noindent{\bf Ideas and Technical Overview.\ }
	We utilize the classical multiparty communication problem \kDISJ\ for proving the lower bounds.  In \kDISJ, we have $k$ players. Each player $\kappa\ (\kappa \in [k])$ is given a vector $X^{(\kappa)} = \left(X_1^{(\kappa)}, \ldots, X_n^{(\kappa)}\right) \in \{0, 1\}^n$, representing a subset of  $[n]$. We use the vector and subset representation interchangeably. The input $\{X^{(1)}, \ldots, X^{(k)}\}$ of \kDISJ\ satisfies one of the following two cases: 
	\begin{enumerate}
		\item {\em NO instance}: all $k$ subsets are pairwise disjoint. That is,  for any $\kappa \neq \kappa'$, $X^{(\kappa)} \cap X^{(\kappa')} = \emptyset$.
		
		\item {\em YES instance}: they have a unique common element but are otherwise disjoint. That is, $\abs{\bigcap_{\kappa \in [k]}  X^{(\kappa)}} = 1$ and for any $\kappa \neq \kappa'$, $X^{(\kappa)} \cap X^{(\kappa')} = \cap_{\kappa \in [k]} X^{(\kappa)}$.
	\end{enumerate}
	The problem is for the $k$ players to find out via communication whether the input is a YES instance or a NO instance. \kDISJ\ is originally studied in the blackboard model, where whenever a player sends a message, all other $(k-1)$ players can see it.  It is easy to see that any lower bound proved in the blackboard model also holds in the coordinator model, where the $k$ sites serve as the $k$ players, who communicate with each other through the coordinator.  The following result is known for \kDISJ.  We say an algorithm $\delta$-error if the algorithm errors with probability at most $\delta$.
	
	\begin{theorem}[\cite{CKS03}]
		\label{thm:kDISJ}
		Any $0.49$-error algorithm for \kDISJ\ needs $\Omega\left(\frac{n}{k\log k}\right)$ bits of communication in the blackboard model.
	\end{theorem}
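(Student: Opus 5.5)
This is a cited result, so the plan reconstructs the Chakrabarti--Khot--Sun argument, which goes through \emph{information complexity} and a \emph{direct-sum} reduction. First I will fix a hard distribution $\mu$ on NO instances, built coordinate by coordinate. For each coordinate $j\in[n]$, pick an auxiliary variable $D_j$ uniformly from $[k]$. Player $D_j$ receives $X_j^{(D_j)}\in\{0,1\}$ uniformly at random, and every other player receives $0$. Then the $k$ sets are pairwise disjoint, and conditioned on $D=(D_1,\dots,D_n)$ the players' inputs are independent. For a $0.49$-error protocol $\Pi$ I will lower bound its communication by the conditional information cost $I(X^{(1)},\dots,X^{(k)};\Pi\mid D)$, since this is at most $H(\Pi)$, which is at most the communication.

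Second, I will prove the direct sum $I(X;\Pi\mid D)\ge \sum_{j} I(X_j;\Pi\mid D)$, using the independence of coordinates given $D$. I will then embed a single-coordinate instance of $\mathrm{AND}_k$ into coordinate $j$: the players privately sample the other coordinates from $\mu$ conditioned on $D_{-j}$, which they can do because the conditional distribution is a product. A YES instance of \kDISJ\ restricted to coordinate $j$ is exactly the all-ones input. Hence each term is at least the conditional information cost $\mathrm{IC}_{\mu_1}(\mathrm{AND}_k)$ of a $0.49$-error protocol for the $k$-bit AND under the one-coordinate distribution $\mu_1$. This reduces the theorem to showing $\mathrm{IC}_{\mu_1}(\mathrm{AND}_k)=\Omega(1/(k\log k))$.

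Third, for the one-coordinate bound I will use Hellinger distance $h$ between the transcript distributions $\Pi_x$. The mutual information term $I(X;\Pi\mid D=d)$, averaged over $d$, is at least $\frac{1}{k}\sum_{i} h^2(\Pi_{\mathbf 0},\Pi_{e_i})$ up to constants. Correctness on the inputs $\mathbf 0$ and $\mathbf 1$ forces $h(\Pi_{\mathbf 0},\Pi_{\mathbf 1})=\Omega(1)$.

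The key structural facts are the rectangle, or cut-and-paste, properties of protocol transcripts in the blackboard model. Via the Pythagorean-type inequality for Hellinger distance, these give
\[
\sum_{i=1}^k h^2(\Pi_{\mathbf 0},\Pi_{e_i})\ \ge\ \Omega\!\left(\frac{1}{\log k}\right) h^2(\Pi_{\mathbf 0},\Pi_{\mathbf 1}).
\]
I will prove this by a recursive halving argument over the players: split $[k]$ into halves, relate the distance to $\mathbf 1$ with the distances to the two half-indicator vectors, and recurse for $\log k$ levels. The main obstacle is controlling the loss at each level of this recursion. Naively the triangle inequality loses a factor of 2 per level, which would give only $k^{-\Theta(1)}$. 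So I expect to need the sharper statement that for product-like transcript distributions $h^2(\Pi_{\mathbf 0},\Pi_{A\cup B})\le O(1)\,(h^2(\Pi_{\mathbf 0},\Pi_A)+h^2(\Pi_{\mathbf 0},\Pi_B))$ with a constant close to 1, so that only an additive $O(1)$ cost per level accumulates, giving the $\log k$ factor.

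Combining the steps, each coordinate contributes $\Omega(1/(k\log k))$ bits of information, and the direct sum then yields the $\Omega(n/(k\log k))$ communication bound.
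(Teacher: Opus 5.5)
The paper does not prove this statement; it is quoted from \cite{CKS03}, so your proposal has to be judged on its own. Most of your scaffolding is the standard Bar-Yossef--Jayram--Kumar--Sivakumar framework and is correct: the hard distribution with auxiliary $D$, the direct sum over coordinates, the private-sampling embedding into $\mathrm{AND}_k$, the bound of the one-coordinate information cost by $\frac{1}{k}\sum_i h^2(\Pi_{\mathbf 0},\Pi_{e_i})$, and $h^2(\Pi_{\mathbf 0},\Pi_{\mathbf 1})=\Omega(1)$ from $0.49$-error correctness. The gap is the one step that separates $n/(k\log k)$ from the older $n/k^2$. You never prove $\sum_i h^2(\Pi_{\mathbf 0},\Pi_{e_i})\ge\Omega(1/\log k)\,h^2(\Pi_{\mathbf 0},\Pi_{\mathbf 1})$, and the halving scheme you sketch cannot deliver it.

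Suppose you have a per-level inequality $h^2(\Pi_{\mathbf 0},\Pi_{A\cup B})\le c\,\bigl(h^2(\Pi_{\mathbf 0},\Pi_A)+h^2(\Pi_{\mathbf 0},\Pi_B)\bigr)$ and iterate it over $\log_2 k$ levels. You lose a factor $c^{\log_2 k}=k^{\log_2 c}$, which is polynomial in $k$ for every fixed $c>1$, however close to $1$. A multiplicative recursion of this kind never yields an ``additive $O(1)$ per level'' or a $\log k$ total loss. With $c=2$ (cut-and-paste plus the triangle inequality) you get exactly the $\Omega(1/k^2)$ information bound, hence $\Omega(n/k^2)$.

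Nor can you take $c=1$: squared Hellinger distance is not subadditive in this sense, even for $k=2$. Write $\Pi_{x_1x_2}$ for the transcript distribution. Let player~1 announce $m_1\in\{a,b\}$ with $\Pr[m_1=b]=\frac12$ if $x_1=0$ and $\frac34$ if $x_1=1$. Let player~2 then announce $x_2$ if $m_1=b$, and a fixed symbol otherwise. Then $h^2(\Pi_{00},\Pi_{10})=1-\sqrt{1/8}-\sqrt{3/8}\approx 0.034$ and $h^2(\Pi_{00},\Pi_{01})=\frac12$. However, $h^2(\Pi_{00},\Pi_{11})=1-\sqrt{1/8}\approx 0.646>0.534$. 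So interaction genuinely breaks subadditivity, and any level-by-level argument compounds the loss.

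What is missing is an argument that exploits the rectangle structure of all $k$ players simultaneously rather than two blocks at a time. \cite{CKS03} obtain the $1/\log k$ factor by a direct analytic estimate of the information cost of $\mathrm{AND}_k$, explicitly not via a Hellinger recursion; later work of Gronemeier and of Jayram even gives $\Omega(1/k)$. Without such an ingredient, your plan proves only the weaker $\Omega(n/k^2)$ bound.
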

	
	The \kDISJ\ problem described above was used to prove an $\tilde{\Omega}\left(m^{1-2/p}\right)$ space lower bound for the $F_p$ problem in the noiseless data stream model~\cite{CKS03}.  The standard reduction works as follows: given an  $(\eps, \delta)$-approximation streaming algorithm $\A$ for $F_p$, we can solve \kDISJ\ with parameters $n = \Theta(m)$ and $k = ((1 + 3\eps) n)^{1/p}$ in the following way: Each player $\kappa \in [k]$ constructs a set of items $\sigma^{(\kappa)}$ by including every $i \in X^{(\kappa)}$. The final stream $\sigma$ is the concatenation of all $\sigma^{(\kappa)}$, that is, $\sigma = \sigma^{(1)} \circ \ldots \circ \sigma^{(k)}$.   The $k$ players can simulate the streaming algorithm $\A$ to solve \kDISJ\ as follows: the first player runs $\A$ on $\sigma^{(1)}$ and write the end memory configuration $M_1$ to the blackboard. The second player continue runs $\A$ on $\sigma^{(2)}$ starting from $M_1$, and then write the end memory configuration $M_2$ to the blackboard, and so on. At the end, the last player outputs YES if the streaming algorithm outputs an $F_p$-estimation larger than $(1+\eps) n$, and NO otherwise. If multiple passes are allowed, then the last player can write its end memory configuration to the blackboard and restart the simulation from the first player.  If the streaming algorithm $\A$ uses $O(1)$ passes and $s$ bits of space, then the $k$ players solve \kDISJ\ problem using at most $O(s k)$ bits of communication.  By Theorem~\ref{thm:kDISJ}, we have  $s k = \Omega\left(\frac{n}{k\log k}\right)$, and consequently, $s = \Omega\left(\frac{n}{k^2\log k}\right) = \Omega\left(\frac{m^{1-2/p}}{\log m}\right)$. 
	
	However, this proof strategy is {\em not} generally applicable in the coordinator model, as the reduction requires $k = \Omega\left(m^{1/p}\right)$.  In fact, there exist $(\eps, 0.01)$-approximation algorithms for $F_p$ in the coordinator model with $k$ players using $\tilde{O}\left(\frac{k^{p-1}}{\eps^2}\right)$ bits of communication~\cite{EKM+24,HXZ+25}.  Our key observation is that in the noisy data setting, a similar reduction can be performed using only a {\em constant} number of players. Moreover, the proof only requires that the dataset has small $F_p$-mismatch-ambiguity, $\eta_p = O(\eps)$, under which an $(\eps+O(\eta_p), \delta)$-approximation is equivalent to an $(O(\eps), \delta)$-approximation.
	
	More precisely, each player $\kappa \in [k]$, for each element $i \in [n]$, adds ${t}/{k}$ pairwise {\em dissimilar} items (for an appropriate parameter $t$ to be specified later); let $Q_i^{(\kappa)}$ denote these items.  We further require that if two players $\kappa$ and $\kappa'$ share an element $i$ in the \kDISJ\ instance, then for any pair of items $u \in Q_i^{(\kappa)}$ and $v \in Q_i^{(\kappa')}$, it holds that $u \sim v$.  Conversely, if they do not share any element, then all items they create are pairwise dissimilar.  It is important to notice that this requirement cannot be satisfied in noiseless settings due to the transitivity constraint: If all items in $Q_i^{(\kappa)}$ were similar to those in $Q_i^{(\kappa')}$, then transitivity would imply that all items within $Q_i^{(\kappa)}$ to be mutually similar.  In contrast, we will show that in the noisy data setting, it is possible to construct an explicit dataset and an appropriate similarity function that jointly satisfy both requirements with probability $0.99$.  
	
	We present two concrete constructions. The first is more natural---its similarity oracle is induced by a natural distance function.  More precisely, let $s = \Theta\left(\frac{t^2\log t}{k^2}\right)$ be a parameter. For each player $\kappa \in [k]$ and each $i \in [n]$, we consider two cases (in the \kDISJ\ instance):
	\begin{enumerate}
		\item If $X_i^{(\kappa)} = 1$,  the player randomly partition the set $\{(i, k+1, 1), \ldots, (i, k+1, s)\}$ into ${t}/{k}$ subsets of equal size, and let item $q_{i,j}^{(\kappa)}$ be the $j$-th partition.  The randomness used by different players are independent. 
		
		\item If $X_i^{(\kappa)} = 0$,  the player arbitrarily partition the set $\{(i, \kappa, 1), \ldots, (i, \kappa, s)\}$ into ${t}/{k}$ subsets of equal size, and let item $q_{i,j}^{(\kappa)}$ be the $j$-th partition.  
	\end{enumerate}
	We then let $Q_i^{(\kappa)} = \left\{q_{i,1}^{(\kappa)}, \ldots, q_{i,{\frac{t}{k}}}^{(\kappa)}\right\}$. 
	
	This construction satisfies all the properties that we need. The only issue is that each item requires $O\left(m^{1/p} \log m\right)$ bits to describe, which renders a non-trivial word versus bit gap between the upper and lower bounds. We therefore propose a second, somewhat artificial construction that still satisfies all the required properties, while allowing each item to be described using only a logarithmic number of bits.  
	
	Now, let us look at the YES instance of \kDISJ. Let $i^*$ be the element shared among all players. It is easy to see that the set of items in $\sigma_Y$ created by the $k$ players from $i^*$ form a complete $k$-partitie graph, which is not far away from a clique. According to our definition of $F_p$-mismatch-ambiguity $\eta_p$, there is a noiseless dataset $\tau_Y$ such that, when used as the ground truth, the noisy dataset we construct for $F_p$ from the YES instance of \kDISJ\ has $\eta_p(\sigma_Y, \tau_Y) \le  \frac{10\eps p}{k}$, which is at most $\frac{\eps}{C+1}$ by choosing $k = 10p(C+1)$ (a constant!).
	
	On the other hand, for a NO instance of \kDISJ, the graph representation of the resulting dataset $\sigma_N$ for $F_p$ consists solely of singleton nodes, corresponding to a noiseless dataset $\tau_N$ with $\eta_p(\sigma_N, \tau_N) = 0$. By choosing the parameter $t  = (10\eps m)^{1/p}$ and $n = \frac{m}{t}$, we can show that $(1 - \eps - C\eta_p(\sigma_Y, \tau_Y))F_p(\tau_Y) > (1 + \eps + C\eta_p(\sigma_N, \tau_N))F_p(\tau_N)$.  Therefore, any $((\eps+C\eta_p), \delta)$-approximation streaming algorithm for $F_p$ that can be used to solve the \kDISJ\ problem with error at most $(\delta + 0.01)$, where the term $0.01$ counts the error probability of the input reduction.  Consequently, Theorem~\ref{thm:kDISJ} implies an $\Omega\left(\frac{1}{\eps^{1/p}} m^{1-1/p}\right)$ space lower bound for $((\eps + C\eta), 0.48)$-approximating $F_p$ in the noisy data stream.
	
	Using the same approach, we can prove a communication lower bound of $\Omega\left(\frac{1}{\eps^{1/p}} m^{1-1/p}\right)$ bits in the coordinator model.

	\vspace{2mm}
	\noindent{\bf The Lower Bounds.\ }
	We prove the following multiparty communication lower bound for $F_p$-estimation on noisy datasets. 
	
	\begin{theorem}
		\label{thm:Fp-lb}
		For any constants $p > 1$, $C \ge 0$ and $\eps \in (0, \frac{1}{3})$, any $((\eps + C\eta_p), 0.48)$-approximation algorithm for computing $F_p$ on any noisy dataset of up to $m$ items distributed over $k \ge 10p(C+1)$ players needs $\Omega\left(\frac{1}{\eps^{1/p}}m^{1-1/p}\right)$ bits of communication in the blackboard model.
	\end{theorem}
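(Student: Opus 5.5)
We reduce from \kDISJ\ (Theorem~\ref{thm:kDISJ}) with a constant number of players $k = 10p(C+1)$; with more players, the extra ones receive empty inputs. Set $t = (10\eps m)^{1/p}$, rounded so that $k$ divides $t$, and set $n = m/t$. Given a \kDISJ\ instance $X^{(1)},\dots,X^{(k)}$ over $[n]$, player $\kappa$ builds the items $Q_i^{(\kappa)}$ for every $i \in X^{(\kappa)}$, each of size $t/k$. Since the sets in a NO instance are pairwise disjoint, the total number of items is at most $nt = m$, and the same holds in a YES instance.

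We realize the similarity oracle with the set-based construction. Each item $q^{(\kappa)}_{i,j}$ is a block of a random equipartition of $\{(i,k+1,1),\dots,(i,k+1,s)\}$ when $X^{(\kappa)}_i=1$, and a block of a partition of the private set $\{(i,\kappa,\cdot)\}$ otherwise. Two items are similar iff they are equal or intersect. Blocks of one player are disjoint, so the items inside each $Q_i^{(\kappa)}$ are pairwise dissimilar. Items from different $i$, or involving a private set, never intersect. For two players both holding $i$, a block of size $sk/t$ from one random partition misses a fixed block of another with probability at most $(1-k/t)^{sk/t}$. Choosing $s = \Theta(t^2\log t/k^2)$ makes this at most $1/t^3$, and a union bound over the at most $t^2$ pairs shows that all cross pairs intersect with probability $0.99$. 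If word-size items are required, we swap in the artificial logarithmic-bit construction mentioned in the overview; this changes only the encoding.

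Next we compute the ground truths and the ambiguities. In a NO instance every item has only itself as a neighbor, so $\tau_N=\sigma_N$, $\eta_p=0$, and $F_p(\tau_N)=|\sigma_N|\le m$; more precisely it equals the number of items, which is at most $nt$. In a YES instance with common element $i^*$, we take $\tau_Y$ in which the $t$ items from $i^*$ form one cluster and all other items are singletons. Then $F_p(\tau_Y)\ge t^p = 10\eps m$, added to the singleton count. In $G^{\sigma_Y}$ the cluster on $i^*$ is a complete $k$-partite graph, so for each of those $t$ nodes $B^\sigma_i\subseteq B^\tau_i$ with $|B^\sigma\cap B^\tau| = t - t/k + 1$. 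The contribution to the ambiguity is therefore $t\bigl(t^{p-1}-(t-t/k)^{p-1}\bigr)\le t^p (p-1)/k$. Hence $\eta_p(\sigma_Y,\tau_Y)\le p\,t^p/(k F_p)\le p/k\le 1/(10(C+1))$. I expect this to be tighter when measured against $F_p$ relative to $\eps$: the overview states $10\eps p/k$, which likely uses the normalization $F_p\approx m$. The exact bookkeeping is the step I would double-check.

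Finally we separate the two cases. The YES estimate must satisfy $(1-\eps-C\eta_p)F_p(\tau_Y)\ge (1-\eps-\tfrac{C}{10(C+1)})(m'+t^p - t)$, where $m'$ denotes the number of items. The NO estimate must satisfy $(1+\eps)F_p(\tau_N)\le(1+\eps)m'$. With $t^p=10\eps m$, $m'\approx m$, and $\eps<1/3$, the first bound exceeds the second, because roughly $(1-\eps-0.1)(1+10\eps)>1+\eps$. This is the delicate constant check: it may require a slightly larger constant in $t$, or a sharper use of $\eta_p\lesssim \eps p/k$. Given the separation, any $((\eps+C\eta_p),0.48)$-approximation protocol decides \kDISJ\ with error at most $0.49$. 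By Theorem~\ref{thm:kDISJ} and $k=O(1)$, its cost is $\Omega(n/(k\log k))=\Omega(m/t)=\Omega(\eps^{-1/p}m^{1-1/p})$ bits.
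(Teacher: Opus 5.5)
Your reduction follows the paper's skeleton, but the final separation step does not go through as written, and the cause is in your construction. You have player $\kappa$ create items only for $i\in X^{(\kappa)}$, so your ``otherwise'' clause never fires. This makes the item count input-dependent. A NO instance can consist of $nt/k=m/k$ singletons, while a YES instance may contain only the $i^*$ cluster, with $F_p(\tau_Y)=t^p=10\eps m$. For $\eps<1/(10k)$ the YES value is then smaller than the NO value, so you have neither a common $m'$ nor $m'\approx m$.

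The same choice is why you only get $\eta_p(\sigma_Y,\tau_Y)\le p/k$. You divide the mismatch sum, which is at most $p\,t^p/k$, by the weak bound $F_p(\tau_Y)\ge t^p$. That gives a constant, and you then charge $C\eta_p\le C/(10(C+1))$ against all of $F_p(\tau_Y)$. The resulting check $(1-\eps-0.1)(1+10\eps)>1+\eps$ is false for small $\eps$ whenever $C>0$; with $\eps=0.01$ the left side is about $0.979<1.01$. No constant in $t$ repairs this. Losing a constant fraction of $F_p$ swamps the $\Theta(\eps m)$ gap unless $t^p=\Theta(m)$, and that forfeits the $\eps^{-1/p}$ factor.

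The paper avoids this by having every player create $t/k$ items for every $i\in[n]$, routing a zero entry $X_i^{(\kappa)}=0$ to the private universe $\{(i,\kappa,\cdot)\}$. Then $|\sigma|=nt=m$ on every input and $F_p(\tau_N)=m$. The $(n-1)t$ singletons give $F_p(\tau_Y)=(n-1)t+t^p\in((1+9\eps)m,(1+10\eps)m)$. Dividing the same mismatch sum by $F_p(\tau_Y)>m$ yields $\eta_p(\sigma_Y,\tau_Y)\le 10\eps p/k\le \eps/(C+1)$, so $C\eta_p<\eps$. The YES estimate is then at least $(1-2\eps)(1+9\eps)m>(1+\eps)m$, which is exactly where the hypothesis $\eps<1/3$ is used.

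Alternatively, you could keep your variable-size construction and treat $C\eta_p F_p(\tau_Y)$ as the additive quantity $C\cdot p\,t^p/k\le t^p/10=\eps m$. The players would also announce $|X^{(\kappa)}|$ using $O(k\log n)$ bits, so the threshold becomes $(1+\eps)m'$ for the actual $m'$. Some such step is needed either way.

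Your handling of $k>10p(C+1)$ by giving the extra players empty inputs is fine. It is also more explicit than the paper on that point.
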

	
	This theorem yields two immediate corollaries, one for the data stream model and one for the coordinator model.  In the standard connection between the data stream model and $k$-party communication (see the technical overview), we set the number of players $k = 10p(C+1) = \Theta(1)$. 
	
	\begin{corollary}
		\label{cor:streaming-lb}
		For any constants $p > 1$, $C \ge 0$ and $\eps \in (0, \frac{1}{3})$, any $O(1)$-pass $((\eps + C\eta_p), 0.48)$-approximation algorithm for computing $F_p$ on any noisy data stream of length up to $m$ needs $\Omega\left(\frac{1}{\eps^{1/p}}m^{1-1/p}\right)$ bits of space.
	\end{corollary}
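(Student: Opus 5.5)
The plan is to derive the corollary from Theorem~\ref{thm:Fp-lb} by the standard simulation of a multi-pass streaming algorithm by $k$ players in the blackboard model, described in the technical overview. We fix the constant number of players $k = \lceil 10p(C+1) \rceil$, which satisfies the hypothesis $k \ge 10p(C+1)$ of Theorem~\ref{thm:Fp-lb}. Suppose $\A$ is an $r$-pass streaming algorithm, with $r = O(1)$, that $((\eps + C\eta_p), 0.48)$-approximates $F_p$ on noisy streams of length up to $m$ using $s$ bits of space. We will build from $\A$ a blackboard protocol for the $k$-party distributed problem with communication $O(rks) = O(s)$. The theorem then forces $s = \Omega\left(\frac{1}{\eps^{1/p}}m^{1-1/p}\right)$.

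The protocol works as follows. Player $\kappa$ holds a local dataset $\sigma^{(\kappa)}$, and the players view the concatenation $\sigma = \sigma^{(1)} \circ \cdots \circ \sigma^{(k)}$ as the stream. In each pass, player $1$ runs $\A$ on $\sigma^{(1)}$ starting from the current memory state and writes the resulting $s$-bit state on the blackboard. Player $2$ reads that state, continues $\A$ on $\sigma^{(2)}$, writes its state, and so on. After player $k$ finishes a pass, its state seeds the next pass. After $r$ passes, player $k$ announces $\A$'s output.

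Three points need checking for correctness. First, the similarity oracle is a fixed predicate on items, so every player can evaluate $\sim$ locally. $\A$ only ever compares items it has stored against the current item, so the simulation is exact. Second, the stream $\sigma$ has the same ground truth $\tau$ as the distributed instance, so $F_p(\tau)$ and $\eta_p(\sigma,\tau)$ do not change. Neither quantity depends on the order of items, since the graph $G^\sigma$ is relabeled but isomorphic. Hence the protocol's output satisfies the same $((\eps + C\eta_p), 0.48)$ guarantee. Third, $\A$'s internal randomness can be treated as shared randomness. Alternatively, it can be carried inside the memory state that is passed along, since that state already contains whatever random bits $\A$ has drawn and stored.

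The total communication is $r \cdot k \cdot s$ bits plus the $O(\log m)$-bit output, and this is $O(s)$ because $r$ and $k$ are constants. Theorem~\ref{thm:Fp-lb} says any such protocol uses $\Omega\left(\frac{1}{\eps^{1/p}}m^{1-1/p}\right)$ bits, so $s = \Omega\left(\frac{1}{\eps^{1/p}}m^{1-1/p}\right)$ as claimed.

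I expect no real obstacle here. The only delicate points are the two already handled above: the order invariance of $\eta_p$ under concatenation, and the fact that $k$ is a constant depending only on $p$ and $C$, so dividing by $rk$ loses only a constant factor.
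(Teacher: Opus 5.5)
Your proposal is correct and follows essentially the paper's route: the paper obtains the corollary directly from Theorem~\ref{thm:Fp-lb} by the standard simulation from the technical overview, in which $k = 10p(C+1) = \Theta(1)$ players pass the streaming algorithm's $s$-bit memory state along the blackboard over $O(1)$ passes, so the protocol uses $O(ks) = O(s)$ bits. Your extra checks (that $F_p$ and $\eta_p$ do not depend on the order of the concatenated stream, that each player can evaluate the oracle locally, and how the algorithm's randomness is handled) are details the paper leaves implicit.
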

	
	\begin{corollary}
		\label{cor:distributed-lb}
		For any constants $p > 1$, $C \ge 0$ and $\eps \in (0, \frac{1}{3})$, any $((\eps + C\eta_p), 0.48)$-approximation algorithm for computing $F_p$ in the coordinator model on any noisy dataset of up to $m$ items distributed over $k \ge 10p(C+1)$ sites needs $\Omega\left(\frac{1}{\eps^{1/p}}m^{1-1/p}\right)$ bits of communication, regardless the number of rounds.
	\end{corollary}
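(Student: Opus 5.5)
Corollary~\ref{cor:distributed-lb} should follow directly from Theorem~\ref{thm:Fp-lb}, via the standard observation that any coordinator-model protocol can be simulated in the blackboard model at no more than a constant-factor increase in communication. The plan is to take an arbitrary $((\eps + C\eta_p), 0.48)$-approximation protocol $\Pi$ for $F_p$ in the coordinator model with $k \ge 10p(C+1)$ sites and total communication $c$ bits. From it we build a blackboard protocol $\Pi'$ for the same task, with the same $k$ players and the same error guarantee, using $O(c)$ bits. Theorem~\ref{thm:Fp-lb} then gives $c = \Omega\left(\frac{1}{\eps^{1/p}}m^{1-1/p}\right)$.

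\textbf{The simulation.} Player $\kappa$ in $\Pi'$ plays site $\kappa$ of $\Pi$ and holds the same portion $\sigma^{(\kappa)}$ of the dataset. The coordinator has no input, so its state at any moment is a function only of the messages it has received and its own randomness. In $\Pi'$ the coordinator's randomness is placed in the public random string, which the blackboard model allows and which does not weaken the lower bound; alternatively, one fixed player can generate it and write it when it is needed.

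\textbf{Replaying a round.} Each round of $\Pi$ is replayed in $\Pi'$ in two steps. First, for each $\kappa$, the message the coordinator would send to site $\kappa$ is already determined by the board contents, so every player can compute it locally and nothing extra needs to be written. Second, each player $\kappa$ writes on the board, in the fixed order $\kappa = 1, \dots, k$, the reply that site $\kappa$ would send. Every player sees the whole board, so every player can track the coordinator's full state. At the end, any player (say player $1$) computes and announces the coordinator's output.

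\textbf{Accounting and correctness.} The one technical point is parsing. In $\Pi$, message lengths and ordering are implicit in the point-to-point channels; on a shared board the boundaries between concatenated messages must be recoverable. We handle this in the usual way, either by making messages self-delimiting (prefix-free), which costs at most a constant factor plus $O(\log)$ overhead per message, or by noting that whose turn it is and when a message ends are functions of the current board contents. Either way, the total number of bits written in $\Pi'$ is $O(c)$, plus the one-time announcement of the output, which is lower order. The output distribution of $\Pi'$ is identical to that of $\Pi$ on every input, so $\Pi'$ is also an $((\eps + C\eta_p), 0.48)$-approximation. Theorem~\ref{thm:Fp-lb} therefore applies to $\Pi'$, and the bound on $c$ follows, regardless of how many rounds $\Pi$ uses.

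The main obstacle is only this encoding subtlety, namely ensuring the blackboard transcript is uniquely parseable without more than a constant-factor blowup; the rest of the argument is routine.
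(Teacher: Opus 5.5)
Your proposal is correct and takes the same route as the paper, which derives this corollary from Theorem~\ref{thm:Fp-lb} by noting that any blackboard lower bound carries over to the coordinator model: the sites act as the $k$ players and the coordinator is simulated on the board. You spell out the simulation the paper calls easy to see (public coins for the coordinator, replies written to the board, parseability); one small caution is that in your fallback without public coins, the designated player should write the coordinator's messages rather than its random bits, since only the messages are guaranteed to fit within the $c$-bit budget.
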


	\iffull
	We now prove Theorem~\ref{thm:Fp-lb}. 
	As a convention, we use ``element'' to refer to an object in the input of \kDISJ, and ``item'' to refer to an object in the input of $F_p$ after the input reduction.
	
	Let $t = (10\eps m)^{1/p}$, and $n = \frac{m}{t} = (10\eps)^{-1/p} m^{1-1/p}$.  
	
	We assume $\eps \in (0, \frac{1}{3})$ for simplicity, but the proof holds for any constant $\eps \in (0, 1)$ by appropriately adjusting the constant coefficient of $t$. 
	
	\vspace{2mm}
	\noindent{\bf Input Reduction.\ }  
	We set $k = 10p(C+1)$ in \kDISJ.
	Each player $\kappa \in [k]$, for each element $i \in [n]$, creates $\frac{t}{k}$ items $Q_i^{(\kappa)} = \left\{q_{i,1}^{(\kappa)}, \ldots, q_{i,{\frac{t}{k}}}^{(\kappa)}\right\}$ for $F_p$.  Let  
	$\sigma = \cup_{i \in [n]} \cup_{\kappa \in [k]} Q_i^{(\kappa)}$
	be the resulting input for $F_p$.  Each item $q_{i,j}^{(\kappa)}$ in $\sigma$ corresponds to a node $(i, j, \kappa)$ in the graph representation $G^\sigma$.
	
	Before presenting the detailed construction of $\sigma$, we first specify its desired properties.  The explicit construction is provided at the end of this section.  
	\begin{property}
		\label{prop:input-reduction}
		The input $\sigma$ for $F_p$ after the input reduction has the following properties.
		\begin{enumerate}
			\item For any $\kappa \in [k]$ and $i \in [n]$, we have $q_{i,j}^{(\kappa)} \not\sim q_{i,j'}^{(\kappa)}$ for any $j, j' \in [\frac{t}{k}]$ ($j \neq j'$).  In words, items created from the same element w.r.t.~the same player are pairwise dissimilar. 
			
			\item For any $i \in [n]$ and any $\kappa, \kappa' \in [k]\ (\kappa \neq \kappa')$  with $X_i^{(\kappa)} = X_i^{(\kappa')} = 1$, we have $q_{i,j}^{(\kappa)} \sim q_{i,j'}^{(\kappa')}$ for any $j, j' \in [\frac{t}{k}]$.   In words, if two players $\kappa, \kappa'$ share an element $i$ (i.e., $X_i^{(\kappa)} = X_i^{(\kappa')} = 1$), then any item created by player $\kappa$ from $i$ is similar to any item created by player $\kappa'$ from $i$.
			
			\item For any $i \in [n]$ and any $\kappa, \kappa' \in [k]\ (\kappa \neq \kappa')$  with $X_i^{(\kappa)} = 0$ or $ X_i^{(\kappa')} = 0$, we have $q_{i,j}^{(\kappa)} \not\sim q_{i,j'}^{(\kappa')}$ for any $j, j' \in [\frac{t}{k}]$.   In words, an item created from a zero entry (i.e., $X_i^{(\kappa)} = 0$) is dissimilar to any other items created by the same element.
			
			\item For any $i, i' \in [n]\ (i \neq i')$, we have  $q_{i,j}^{(\kappa)} \sim q_{i,j'}^{(\kappa')}$ for any $j, j' \in [\frac{t}{k}]$ and $\kappa, \kappa' \in [k]$.  In words, items created from different elements are pairwise dissimilar.
		\end{enumerate}
	\end{property}
	
	It is easy to see that each \kDISJ\ element $i \in [n]$ held by only one player is mapped into $\frac{t}{k} \cdot k = t$ singleton nodes in $G^\sigma$, while the element held by all players (if exists) is mapped into a complete $k$-partite graph in $G^\sigma$ with each of the $k$ players holding an independent set of size $\frac{t}{k}$.
	
	\vspace{2mm}
	\noindent{\bf Mismatch Ambiguity of $\sigma$.\ }
	We now investigate the properties of $\sigma$ when it is generated by either a YES instance or a NO instance of \kDISJ.
	
	Let $\sigma_N$ be the resulting dataset for $F_p$ after the input reduction from a NO \kDISJ\ instance. Considering the ground truth $\tau_N$ that coincides $\sigma_N$, we have
	\begin{equation}
		\label{eq:tau_N}
		\eta_p(\sigma_N, \tau_N) = 0, \quad \text{and} \quad F_p(\tau_N) = n \cdot t = m.
	\end{equation}
	
	Let $\sigma_Y$ be the resulting dataset for $F_p$ after the input reduction from a YES \kDISJ\ instance.  We consider the following ground truth dataset $\tau_Y$ with $G^{\tau_Y}$: 
	For the element $i^* \in [n]$ that is held by all $k$ players, we create clique consisting of nodes $\left\{(i^*, j, \kappa)\ |\ j \in [\frac{t}{k}], \kappa \in [k]\right\}$.  For each $i \in [n]\backslash\{i^*\}$, we create a set of singleton nodes $\left\{(i, j, \kappa)\ |\ j \in [\frac{t}{k}], \kappa \in [k]\right\}$.
	
	We have 
	\begin{equation}
		\label{eq:tau_Y}
		F_p(\tau_Y) = (n-1) \cdot t + t^p \in ((1 + 9\eps)m, (1 + 10\eps)m) .
	\end{equation}

	To compute $\eta_p(\sigma_Y, \tau_Y)$, we notice that the only difference between $\sigma_Y$ and $\tau_Y$ comes from items indexed by $\{(i^*, j, \kappa)\ |\ j \in [\frac{t}{k}], \kappa \in [k]\}$, or, the complete $k$-partite graph in $G^{\sigma_Y}$ versus the clique in $G^{\tau_Y}$.  We have
	\begin{eqnarray}
		\eta_p(\sigma_Y, \tau_Y) &=& \frac{1}{F_p(\tau_Y)} \sum_{\kappa \in [k]}\sum_{j \in [\frac{t}{k}]}  \left(\abs{B_{(i^*, j, \kappa)}^{\sigma} \cup B_{(i^*, j, \kappa)}^{\tau}}^{p-1} - \abs{B_{(i^*, j, \kappa)}^{\sigma} \cap B_{(i^*, j, \kappa)}^{\tau}}^{p-1} \right) \nonumber \\
		&\stackrel{\eqref{eq:tau_Y}}{\le}& \frac{1}{m} \cdot k \cdot \frac{t}{k} \cdot \left(t^{p-1} - \left(t - \frac{t}{k} + 1\right)^{p-1}\right)\nonumber \\
		&\le& \frac{1}{m} \cdot t^p \cdot \left(1 - \left(1 - \frac{1}{k}\right)^{p-1}\right) 
		\le  \frac{1}{m} \cdot t^p \cdot \left(1 - \left(1 - \frac{1}{k}\right)^p\right) \nonumber \\
		&\le& 10\eps \cdot \frac{p}{k} \label{eq:ambiguity-threshold} \\
		&\le& \frac{\eps}{C+1}, \label{eq:eta-yes} 
	\end{eqnarray} 
	where the last inequality holds since we have set $k = 10p(C+1)$.
	
	\vspace{2mm}
	\noindent{\bf Reducing \kDISJ\ to $F_p$.\ } 
	Suppose we have a $(\eps + C\eta_p, \delta)$-approximation algorithm $\A$ for $F_p$, we construct an algorithm $\A'$ for \kDISJ\ as follows:  giving an input $I$ for \kDISJ, we use our input reduction to create an input $\sigma$ for $F_p$.  We then run $\A$ on $\sigma$ and get an estimate $\tilde{F}_p$.  If $I$ is a YES instance, then by \eqref{eq:tau_Y} and \eqref{eq:eta-yes}, we have  with probability $(1-\delta-o(1))$ (the extra $o(1)$ error comes from the randomized input reduction, as it will be evident from our explicit constructions shortly),
	\begin{equation*}
		\tilde{F}_p \ge (1 - \eps - C\eta_p(\sigma_Y, \tau_Y)) F_p(\tau_Y)  \ge (1-2\eps) (1 + 9\eps)m > (1+\eps)m.
	\end{equation*}
	If $I$ is a NO instance, then by \eqref{eq:tau_N}, we have with probability $(1-\delta-o(1))$,
	\begin{equation*}
		\tilde{F}_p \le (1 + \eps + C \eta_p(\sigma_N, \tau_N)) F_p(\tau_N)  \le (1+\eps)m.
	\end{equation*}
	
	Therefore, by checking whether $\tilde{F}_p > (1+\eps) m$, we can decide whether \kDISJ$(I) = 1$ or $0$. Consequently, any $((\eps + C\eta_p), \delta)$-approximation algorithm for $F_p$ can be used to solve \kDISJ\ with probability $(1-\delta-o(1))$. By Theorem~\ref{thm:kDISJ}, we have that any algorithm that $((\eps + C\eta_p), 0.48)$-approximation algorithm for $F_p$ needs at least $\frac{n}{k \log k} = \Omega\left(\frac{1}{\eps^{1/p}} m^{1-1/p}\right)$ bits of communication.

	\subsection{Explicit Constructions of $\sigma$}
	\label{sec:input-construct}
	We now give some explicit constructions of $\sigma = \cup_{i \in [n]} \cup_{\kappa \in [k]} Q_i^{(\kappa)}$, where $Q_i^{(\kappa)} = \left\{q_{i,1}^{(\kappa)}, \ldots, q_{i,{\frac{t}{k}}}^{(\kappa)}\right\}$, that satisfies Property~\ref{prop:input-reduction}.
	
	We begin with a construction where similarity is defined using a natural distance function---the set symmetric difference. A drawback of this approach is that each item (a set) requires  $b = O\left(m^{1/p}\log m\right)$
	bits to represent, creating a gap of $b$ between the upper and lower bounds in terms of $m$ if both are measured by bits.\footnote{In the noiseless setting, item size does not affect the performance of subsampling-based algorithms, provided that we can subsample each item using a hash function. However, in the noisy setting, we must store all items in full, since the similarity function may require access to their complete descriptions.} To address this, we present a second construction in which each item can be represented using only $O(\log m)$ bits, making the upper and lower bounds nearly tight in terms of $m$. The tradeoff is that the similarity function in this case is not derived from a natural distance function. Nevertheless, our algorithms apply to any dataset~$\sigma$, as long as a binary similarity function can be defined on~$\sigma$.
	
	\vspace{2mm}
	\noindent{\bf A Construction Based on Set Symmetric Differences.\ }
	Let $s = \frac{3t^2\log t}{k^2}$. For each player $\kappa \in [k]$ and each $i \in [n]$, we consider two cases:
	\begin{enumerate}
		\item If $X_i^{(\kappa)} = 1$,  the player randomly partition the set $\{(i, k+1, 1), \ldots, (i, k+1, s)\}$ into $\frac{t}{k}$ subsets of equal size, and let item $q_{i,j}^{(\kappa)}$ be the $j$-th partition.  The randomness used by different players are independent. 
		
		\item If $X_i^{(\kappa)} = 0$,  the player arbitrarily partition the set $\{(i, \kappa, 1), \ldots, (i, \kappa, s)\}$ into $\frac{t}{k}$ subsets of equal size, and let item $q_{i,j}^{(\kappa)}$ be the $j$-th partition.  
	\end{enumerate}
	Each item can be described using $O\left(s/\frac{t}{k} \cdot \log (nks)\right) = O\left(m^{1/p} \log m\right)$ bits.  
	
	We now check whether $\sigma$ satisfies the four requirements in Property~\ref{prop:input-reduction}.
	
	First, due to the set partition in our construction, it is clear that for any $\kappa \in [k]$, $i \in [n]$, and $j, j' \in [\frac{t}{k}]$, we have $\abs{q_{i,j}^{(\kappa)} \cap q_{i,j'}^{(\kappa)}} = 0$.
	
	Second, suppose two players $\kappa, \kappa' \in [k]\ (\kappa \neq \kappa')$ share the same element $i^*$.  Set $z = \frac{sk}{t}$. For $j, j' \in [\frac{t}{k}]$, we have 
	\begin{eqnarray*}
		\Pr\left[\abs{q_{i,j}^{(\kappa)} \cap q_{i,j'}^{(\kappa)}} \ge 1\right] & = & 1 - \frac{\binom{s-z}{z}}{\binom{s}{z}} =  1 - \frac{(s-z)! / (z!(s-2z)!)}{s! / (z!(s-z)!)} \\
		&=& 1 - \frac{(s-z)(s-z-1)\cdots(s-2z+1)}{s(s-1)\cdots(s-z+1)} \\
		&=& 1 - \prod_{a=0}^{z-1} \left(1 - \frac{z}{s-a}\right) \\
		&=& 1 - \exp\left(\sum_{a=0}^{z-1} \ln\left(1 - \frac{z}{s-a}\right) \right) \\
		&\ge& 1 - \exp\left(\sum_{a=0}^{z-1} -\frac{z}{s-a} \right) \\
		&\ge&  1 - \exp\left(-\frac{z^2}{s}\right) = 1 - \frac{1}{t^3}.
	\end{eqnarray*}
	By a union bound, with probability at least $1 - t^2\cdot  \frac{1}{t^3} = 1 - o(1)$, for all $\kappa, \kappa' \in [k]\ (\kappa \neq \kappa')$ and $j, j' \in [\frac{t}{k}]$, we have $\abs{q_{i^*,j}^{(\kappa)} \cap q_{i^*,j'}^{(\kappa')}} \ge 1$.  
	
	Third, for any $i \in [n]$ and any $\kappa, \kappa' \in [k]$ with $\kappa \neq \kappa'$, 
	if $X_i^{(\kappa)} = 0$ or $X_i^{(\kappa')} = 0$, then our construction guarantees that 
	the second entries of all tuples in $q_{i,j}^{(\kappa)}$ are distinct from those in $q_{i,j'}^{(\kappa')}$ for any $j, j' \in [\frac{t}{k}]$. 
	Therefore, $\abs{q_{i,j}^{(\kappa)} \cap q_{i,j'}^{(\kappa')}} = 0$.
	
	Finally, for any $i, i' \in [n]\ (i \neq i')$, it is clear that in our construction, the first entries of all tuples in $q_{i,j}^{(\kappa)}$ are distinct from those in $q_{i,j'}^{(\kappa')}$ for any $j, j' \in [\frac{t}{k}]$ and $\kappa, \kappa' \in [k]$. Therefore, $\abs{q_{i,j}^{(\kappa)} \cap q_{i',j'}^{(\kappa')}} = 0$.
	
	Now, for two sets $u, v \in \sigma$, if we define $u \sim v$ when $\abs{u \cap v} \ge 1$ (or, the symmetric difference $\abs{u \oplus v} \le \frac{2sk}{t} - 2)$, and  $u \not\sim v$ otherwise, then the resulting input for robust $F_p$ is able to satisfy all requirements in Property~\ref{prop:input-reduction} with probability $(1 - o(1))$. 
	
	\vspace{2mm}
	\noindent{\bf A Construction with Compact Item Representations.\ }  We now give a construction in which each item can be described in $O(\log m)$ bits.  This construction is in fact simpler than the previous one, but its similarity function is not derived from a distance metric. 
	
	For each player $\kappa \in [k]$ and each $i \in [n]$, we again consider two cases: 
	\begin{enumerate}
		\item If $X_i^{(\kappa)} = 1$,  for each $j \in [\frac{t}{k}]$, $q_{i,j}^{(\kappa)} = (i, k+1, jn + Y_\kappa)$, where $Y_\kappa$ is sampled from $\{1, \ldots, n\}$ uniformly at random.   
		
		\item If $X_i^{(\kappa)} = 0$,  for each $j \in [\frac{t}{k}]$, $q_{i,j}^{(\kappa)} = (i, \kappa, jn + Y_\kappa)$, where $Y_\kappa$ is sampled from $\{1, \ldots, n\}$ uniformly at random.
	\end{enumerate}
	We then define the similarity between two items $(a, b, c)$ and $(a', b', c')$ as follows: 
	\[
	(a,b,c) \sim (a',b',c') \;=\;
	\begin{cases}
		\text{true}, & \text{if } (a = a') \wedge (b = b') \wedge (c = c'), \\[6pt]
		\text{true}, & \text{if } (a = a') \wedge (b = b') \wedge (c \neq c' \pmod n), \\[6pt]
		\text{false}, & \text{otherwise.}
	\end{cases}
	\]

	We now check the four requirements in Property~\ref{prop:input-reduction}. 
	The third and fourth requirements are straightforward, as in each case the conditions ensure that the two items under consideration have distinct first or second entries. The first requirement is satisfied since $Y_\kappa$ is the same for the constructions of all nodes in the $\kappa$-th site. For the second requirement, consider the set of items $Q_{i^*} = \{q_{i^*,j}^{(\kappa)}\ |\ \kappa \in [k], j \in [\frac{t}{k}]\} = \{(i^*, k+1, jm + Y_\kappa)\ |\ \kappa \in [k], j \in [\frac{t}{k}]\}$, where $i^*$ is the unique common element in a YES instance of \kDISJ. Since all $Y_\kappa$ are sampled randomly from $\{1, \ldots, n\}$, the probability that there exists a pair of identical items in $\{Y_1, \ldots, Y_k\}$ is upper bounded by
	$
	\frac{k^2}{n} = o(1).
	$
	Therefore, with probability $(1 - o(1))$, the alternative construction also satisfies all requirements in Property~\ref{prop:input-reduction}.
	
	\else
	
	Due to the space constraints, we defer the proof of Theorem~\ref{thm:Fp-lb} to the full version of this paper.
	
	\fi

	\section{An Improved Distributed Algorithm for Low-Ambiguity Regime}
	\label{sec:small-eta-p}
	
	As noted in the introduction, the communication cost of Algorithm~\ref{alg:FpDistGeneral} presented in Section~\ref{alg:FpDistGeneral} is asymptotically optimal in terms of $m$. However, we find that when the mismatch ambiguity $\eta_p$ falls below a certain threshold, it is possible to design algorithms with communication cost {\em independent} of $m$.   In this section, we present a distributed algorithm in the coordinator model for datasets with small $F_p$-mismatch ambiguity, achieving a communication cost of $\text{poly}\left(k, \frac{1}{\eps}\right)$ words. 
	
	\vspace{2mm}
	\noindent{\bf Ideas and Technical Overview.\ }
	For ease of reference, we summarize in Table~\ref{tab:notation} a set of notations that will be used throughout the remainder of this section.
	
	\begin{table}[t]
		\centering
		\renewcommand{\arraystretch}{1.3}
		\begin{tabular}{>{$}l<{$} p{0.5\linewidth}}
			\hline
			\textbf{Notation} & \textbf{Description} \\
			\hline
			\kV{\kappa} & subset of $[m]$ held by site $\kappa \in [k]$. \\
			
			d_i^{(\kappa)} \triangleq \abs{\kV{\kappa} \cap B_i^{\sigma}} 
			& number of nodes similar to $\sigma_i$ held by site $\kappa$. \\
			
			d_i \triangleq \sum_{\kappa \in [k]} d_i^{(\kappa)} = \abs{B_i^\sigma} 
			& global degree of $\sigma_i$. \\
			
			\dl{i} \triangleq d_{i}^{(\kappa(i))} 
			& local degree of $\sigma_i$, where $\kappa(i) \in [k]$ is the site that holds $\sigma_i$. \\
			
			\Dl{\kappa} \triangleq \sum_{i \in \kV{\kappa}} (\dl{i})^{p-1} 
			& sum of local degree moments of $\kV{\kappa}$. \\
			
			D_{\loc} \triangleq \sum_{\kappa \in [k]} \Dl{\kappa} = \sum_{i \in [m]} (\dl{i})^{p-1} 
			& sum of local degree moments of $[m]$. \\
			
			\dl{i}[\tau] \triangleq \abs{B_i^{\tau} \cap \kV{\kappa(i)}} 
			& local degree of $\tau_i$. \\
			
			d_i^{\tau} \triangleq \abs{B_i^{\tau}} 
			& global degree of $\tau_i$. \\
			\hline
		\end{tabular}
		\caption{Summary of notations.}
		\label{tab:notation}
	\end{table}

	Same as that in the general case (Section~\ref{sec:distributed-algo}), our goal is to estimate the $(p-1)$-th degree moment $M_{p-1}^{\sigma}$, which is close to $F_p$ when $\eta_p$ is small. 
	Intuitively, we can view $\eta_p$ as a budget available to an adversary, who perturbs the ground truth graph $G^\tau$ by adding or removing edges.  Consider, for example, a graph consisting of $(m - m^{1/p})$ isolated nodes and a clique of size $m^{1/p}$, whose nodes are evenly distributed across $k$ sites. When $\eta_p = \Theta(\eps)$, the adversary can use their budget to convert the clique into a $k$-partite graph by deleting all edges with both end nodes located at the same site. As a result, each site cannot locally determine whether a node is part of a clique or merely an isolated node---yet this distinction is critical due to the substantial contribution of clique nodes to the quantity $M_{p-1}^{\sigma}$.
	
	Our main idea for datasets with small $\eta_p$ is that when the adversary does not have sufficient budget to eliminate all local structural information, then we can exploit the remaining local structures to identify high-degree nodes with a small amount of communication.  Note that the contribution of high-degree nodes dominate the value of $M_{p-1}^{\sigma}$.
	
	We first consider the noiseless case to see how the local information helps.  Let $\sigma$ be a noiseless dataset, and let $V$ denote the largest clique of $G^{\sigma}$.  For each $\kappa \in [k]$, let $\kV{\kappa}$ be the nodes in $V$ that is held by site $\kappa$. Let $\alpha = \frac{k}{\eps}$ be a threshold.  If $\frac{\abs{V}}{\abs{\kV{\kappa}}} > \alpha$, then $\kV{\kappa}$ is small and the site $\kappa$ may not be able to identify $\kV{\kappa}$ as part of a large clique $V$ without communication.  But this is not a big issue, since the contribution of $\kV{\kappa}$ is at most $\frac{1}{\alpha}$-fraction of that of $V$ and can thus be ignored.  On the other hand, if $\frac{\abs{V}}{\abs{\kV{\kappa}}} \leq \alpha$, then $\abs{\kV{\kappa}}$ is large given $\abs{V}$ is large. In this case, site $\kappa$ can identify $V^{(\kappa)}$ as part of a large clique without any communication. 
	Therefore, to estimate $M_{p-1}^\sigma$, it suffices to estimate the contribution of large $V^{(\kappa)}$s.

	To implement this idea, we sample nodes according to the probabilities $\left\{\frac{(\dl{i})^{p-1}}{D_{\loc}} \right\}_{i \in [m]}$ (i.e., proportional to the $(p-1)$-th power of their local degrees; definitions for $\dl{i}$, $D_{\loc}$, and other newly introduced notations in this overview are provided in Table~\ref{tab:notation}). For each sampled node $I$, we query all $k$ sites to obtain its global degree $d_I$. It is easy to verify that $Y = D_{\loc} \cdot\left(\frac{d_I}{\dl{I}}\right)^{p-1}$ serves as an unbiased estimator for $M_{p-1}^{\sigma}$. However, this estimator is unstable due to its high variance. To resolve this issue, we introduce a truncated estimator $X = Y \cdot \mathbf{1}\left\{ \frac{d_I}{\dl{I}} \leq \alpha \right\}$. $X$ is an biased estimator of $M^\sigma_{p-1}$, where the bias comes from the nodes whose global-to-local degree ratio exceeds $\alpha$ (i.e., the nodes from small $\kV{\kappa}$s), which counts for at most a $\frac{k}{\alpha} (= \eps)$ fraction of $M^\sigma_{p-1}$.  We can further show that $O\left(\frac{\alpha^{p-1}}{\eps^2}\right) = O\left(\frac{k^{p-1}}{\eps^{p+1}}\right)$ samples are enough for an accurate approximation of $\E[X]$.

	Now consider the case that the dataset has a small ambiguity. For our method to fail, the adversary must select some nodes from large $\kV{\kappa}$s and change their global-to-local degree ratios by adding or removing edges, in order to push them above the threshold $\alpha$.  To this end, we show that the adversary’s most effective strategy is to remove local edges.  As a simple example, consider the case when $p = 2$ and a node $i$ such that ${d_i^{\tau}}/{\dl{i}[\tau]} = {\alpha}/{2}$.  To increase ${d_i^{\tau}}/{\dl{i}[\tau]}$ to $\alpha$, the adversary needs to remove roughly ${\dl{i}[\tau]}/{2} = {d_i^{\tau}}/{\alpha}$ local edges incident to node $i$, which eliminates node $i$ from the set of low global-to-local ratio nodes. Since the contribution of node $i$ to $F_2$ is $d_i \le d_i^\tau$, this operation will only increase the bias of $X$ by at most $d_i^\tau$.  Therefore, given a budget of $\eta_2 F_2$ edge insertions/deletions, the adversary can only increase the bias of the input by at most $\alpha \eta_2 F_2$, which is at most $\eps F_2$ when $\eta_2 \le \frac{\eps}{\alpha} = O\left(\frac{\eps^2}{k}\right)$.  We can extend this argument to the general setting and any $p \ge 2$, showing that the bias of $X$ is at most $\eps F_p$ when $\eta_p \le O\left(\frac{\eps^p}{k^{p-1}}\right)$.

	\vspace{2mm}
	\noindent{\bf Algorithm and Analysis.\ }
	The algorithm is described in Algorithm~\ref{alg:FpDist}.  
	Let us briefly describe it in words.  In the first round, each site sends the sum of the local $(p-1)$-th degree moments of its nodes to the coordinator, who aggregates them to obtain $D_{\loc}$, the total $(p-1)$-th degree moments over all $m$ nodes. In the second round, the sites and the coordinator jointly sample each node with probability proportional to its local $(p-1)$-th degree moment. In the third round, the exact global degree of each sampled node is computed. Finally, using $D_{\loc}$ together with the local and global $(p-1)$-th degree moments of the sampled nodes, the coordinator constructs the final truncated estimator.

	\begin{algorithm}[!th]
		\caption{Distributed-Robust-$F_p$-Low-Noise}
		\label{alg:FpDist}
		\DontPrintSemicolon
		\SetAlgoNoEnd
		
		\KwIn{a noisy dataset $\sigma$ of size $m$ partitioned among $k$ sites, where the $\kappa$-th site holds $\sigma^{(\kappa)}$; a parameter $\eps$}
		\KwOut{an $((\eps + O(\eta_p)), 0.01)$-approximation of $F_p$}
		
		$t \gets \frac{36\cdot 4^p \cdot k^{p-1}}{\eps^{p+1}}$, $\theta \gets 2 \left(\frac{4k}{\eps}\right)^{p-1}$
		
		\gray{\tcp{First Round:}}
		Coordinator: send ``start'' to all sites.
		
		\For{site $\kappa \gets 1, \dots, k$} {
			$\Dl{\kappa} \gets \sum_{i \in \kV{\kappa}} (\dl{i})^{p-1}$\;
			send $\Dl{\kappa}$ to the coordinator
		}
		
		\gray{\tcp{Second Round:}}
		Coordinator: calculate $D_{\loc} \gets \sum_{\kappa \in [k]} \Dl{\kappa}$. Then, pick $t$ i.i.d.\ samples from $[k]$ according to probabilities $\left\{\frac{\Dl{\kappa}}{D_{\loc}}\right\}_{\kappa \in [k]}$. Let $s_{\kappa}$ be the number of times $\kappa$ is sampled. Send $s_{\kappa}$ to site $\kappa$ for each $\kappa \in [k]$.
		
		\For{site $\kappa \gets 1, \dots, k$} {
			pick $s_{\kappa}$ i.i.d.\ samples from $\kV{\kappa}$ according to probabilities $\left\{\frac{(\dl{i})^{p-1}}{\Dl{\kappa}}\right\}_{i \in V^{(\kappa)}}$\;
			send all the sampled nodes to the coordinator
		}

		\gray{\tcp{Third Round:}}
		Coordinator: let $S$ be the set of sampled nodes. Send $S$ to all sites.\;
		\For{site $\kappa \gets 1, \dots, k$} {
			\ForEach{$i \in S$} {
				$d_i^{(\kappa)} \gets \abs{\kV{\kappa} \cap B_i^{\sigma} }$
			}
			send $\{d_i^{(\kappa)}\}_{i \in S}$ to the coordinator
		}
		
		\gray{\tcp{Output:}}
		
		Coordinator: calculate $d_i \gets \sum_{\kappa \in [k]} d_i^{(\kappa)}$ for each $i \in S$.
		
		\Return $\frac{1}{t} \sum_{i \in S} \left(D_{\loc} \cdot \left(\frac{d_i}{\dl{i}}\right)^{p-1} \cdot \mathbf{1}\left\{ \left(\frac{d_i}{\dl{i}}\right)^{p-1} \leq \theta \right\}\right)$.  \label{ln:output}
	\end{algorithm}
	
	We have the following result regarding Algorithm~\ref{alg:FpDist}.  
	
	\begin{theorem}
		\label{thm:FpDist}
		For any constant $p \ge 2$ and a noisy size-$m$ dataset $\sigma$ with $\eta_p \leq \frac{\eps^p}{4^{p+1} \cdot k^{p-1}}$, Algorithm~\ref{alg:FpDist} computes an $((\eps + O(\eta_p)), 0.01)$-approximation of $F_p$ in the coordinator model with $k$ sites, using three rounds and $O\left(\frac{k^p}{\eps^{p+1}}\right)$ words of communication.
	\end{theorem}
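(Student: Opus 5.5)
The proof splits into three parts: rounds and communication, the bias of the truncated estimator, and its concentration.

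\emph{Rounds and communication.} These are immediate from the description of Algorithm~\ref{alg:FpDist}. Round one uses $O(k)$ words. Round two sends the counts $s_\kappa$ and $t$ sampled nodes, which is $O(k+t)$ words. Round three broadcasts $S$ to all sites and collects $k|S|$ local degrees. The total is $O(kt)=O(k^p/\eps^{p+1})$ words.

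\emph{Setting up the estimator.} The site-then-node sampling picks each $i\in[m]$ with probability $(\dl{i})^{p-1}/D_{\loc}$. Let $X=D_{\loc}\,(d_I/\dl{I})^{p-1}\mathbf{1}\{(d_I/\dl{I})^{p-1}\le\theta\}$ be a single sample. A direct computation gives
\[
\E[X]=\sum_{i\in L} d_i^{p-1},\qquad L=\{i: (d_i/\dl{i})^{p-1}\le\theta\}.
\]
This expectation is at most $M^\sigma_{p-1}$, which is at most $(1+\eta_p)F_p$ by the computation in \eqref{eq:high1} together with Fact~\ref{fact:deg-moments}. Also $X\le \theta D_{\loc}$ and $D_{\loc}\le M^\sigma_{p-1}$.

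\emph{Concentration.} We have $\Var[X]\le \theta D_{\loc}\E[X]\le \theta(1+\eta_p)F_p\,\E[X]$. Once we know $\E[X]\ge F_p/2$, the relative variance is $O(\theta)=O((4k/\eps)^{p-1})$. Chebyshev with $t=36\cdot4^pk^{p-1}/\eps^{p+1}$ samples then gives error at most $\eps F_p/2$ with probability at least $0.99$. Because there is a single average and no median or min, the plain Chebyshev bound suffices.

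\emph{Bias: the main step.} We must show $\sum_{i\notin L} d_i^{p-1}\le (\eps/2+O(\eta_p))F_p$. First compare to the ground truth: replacing $d_i^{p-1}$ by $(d_i^\tau)^{p-1}$ costs at most $\eta_pF_p$ in total.

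Now fix a cluster $V$ of $G^\tau$ and let $V^{(\kappa)}=V\cap \kV{\kappa}$. Call a site small if $|V^{(\kappa)}|<|V|/\alpha$, where $\alpha=4k/\eps$ and $\theta\approx 2\alpha^{p-1}$. Nodes on small sites contribute at most $k\cdot|V|^{p}/\alpha=\eps|V|^p/4$ per cluster, so at most $\eps F_p/4$ overall.

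Next take a node $i$ on a large site with $i\notin L$. For such a node,
\[
\left(\frac{d_i}{\dl{i}}\right)^{p-1}>\theta=2\alpha^{p-1}\ge 2\left(\frac{|V|}{\dl{i}[\tau]}\right)^{p-1}.
\]
So either $d_i^{p-1}$ exceeds $(d_i^\tau)^{p-1}$ by a constant factor, or $\dl{i}^{p-1}$ is smaller than $(\dl{i}[\tau])^{p-1}$ by a constant factor. In both cases $|B_i^\sigma\cup B_i^\tau|^{p-1}-|B_i^\sigma\cap B_i^\tau|^{p-1}$ is large relative to what the node can lose. Specifically, it is $\Omega((\dl{i}[\tau])^{p-1})\ge\Omega(|V|^{p-1}/\alpha^{p-1})$ in the second case, and $\Omega(d_i^{p-1})$ in the first. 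The first case costs only $O(\eta_pF_p)$ overall.

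The second case is the main obstacle. The argument charges each such node's lost contribution $(d_i^\tau)^{p-1}\le|V|^{p-1}$ to its own mismatch term, times a factor $O(\alpha^{p-1})$. This bounds the total loss by $O(\alpha^{p-1}\eta_pF_p)=O((4k/\eps)^{p-1}\eta_pF_p)$. Under the hypothesis $\eta_p\le\eps^p/(4^{p+1}k^{p-1})$, this is at most $\eps F_p/4$.

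The delicate point is getting the constants in ``constant-factor drop implies a large mismatch term'' right, using $x^{p-1}-y^{p-1}\ge(1-2^{-1})x^{p-1}$ when $y^{p-1}\le x^{p-1}/2$. One also needs to check that the threshold factor $2$ in $\theta$ leaves room for both cases. Combining the pieces, $\E[X]\ge(1-\eps/2-O(\eta_p))F_p$ and $\E[X]\le(1+\eta_p)F_p$. Together with the concentration bound, this gives the $((\eps+O(\eta_p)),0.01)$ guarantee.
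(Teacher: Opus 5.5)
Your bias argument is essentially the paper's proof of Lemma~\ref{lem:FpDist-exp}. It has the same $\eta_pF_p$ cost for replacing $d_i^{p-1}$ by $(d_i^\tau)^{p-1}$, and the same split by the ground-truth ratio $d_i^\tau/\dl{i}[\tau]$ against $\alpha=4k/\eps$: your small sites are the paper's set $H$, bounded by $kF_p/\alpha$ exactly as in Claim~\ref{claim:small}. It also charges every truncated node on a large site to its own mismatch term at rate $O(\alpha^{p-1})$, as in Claim~\ref{claim:large}. The paper does this last step in one rearrangement rather than your two cases; your split shows in addition that pure degree-inflation nodes cost only $O(\eta_pF_p)$.

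One bridge is left implicit in your second case. The inequality $x^{p-1}-y^{p-1}\ge\frac12x^{p-1}$ only compares the local quantities $\dl{i}[\tau]$ and $\dl{i}$. To reach the global mismatch term you also need $\abs{B_i^\sigma\cap B_i^\tau\cap\kV{\kappa(i)}}\le\dl{i}$ and
$(d_i^\tau)^{p-1}-\abs{B_i^\sigma\cap B_i^\tau}^{p-1}\ge(\dl{i}[\tau])^{p-1}-\abs{B_i^\sigma\cap B_i^\tau\cap\kV{\kappa(i)}}^{p-1}$.
This is the convexity fact ($B\subseteq A$ implies $|A|^{p-1}-|B|^{p-1}\ge|A\cap C|^{p-1}-|B\cap C|^{p-1}$ for $p\ge2$) that the paper states explicitly. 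Also, since only the product of the two ratios exceeds $2$, a symmetric split gives a factor $\sqrt2$ in each case, not $2$. The constants still close, because $\alpha^{p-1}\eta_p\le\eps/16$ and $\frac{1}{16(1-2^{-1/2})}<\frac14$.

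The step that actually fails as written is the concentration claim. With the algorithm's $t$ and $\theta$, one computes $\theta/(t\eps^2)=1/72$. Your bound $\Var[X]\le\theta(1+\eta_p)F_p\,\E[X]\le\theta(1+\eta_p)^2F_p^2$, with Chebyshev at deviation $\eps F_p/2$, then gives failure probability $4\theta(1+\eta_p)^2/(t\eps^2)=(1+\eta_p)^2/18\approx0.056$, not $0.01$. So your remark that a plain Chebyshev bound on the single average suffices is incorrect for the stated $t$. The paper's Lemma~\ref{lem:FpDist-var} only gets success probability $2/3$ from Chebyshev, and then applies the median trick over $O(1)$ independent copies run in parallel, which keeps three rounds and $O(k^p/\eps^{p+1})$ words. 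Alternatively, you could enlarge $t$ by a constant factor. Either repair is routine, but one of them is needed for the $0.01$ error guarantee.
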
 
	
	In the rest of this section, we prove Theorem~\ref{thm:FpDist}. The round complexity is clear from Algorithm~\ref{alg:FpDist}. The communication cost is dominated by communicating $S$ between sites and the coordinator, which uses $O(kt) = O(k^p / \eps^{p+1})$ words.
	In the rest of this section, we prove the correctness of the algorithm.  
	
	\iffull
	\else
	We start with the following simple fact.
	
	\begin{fact} 
		\label{fact:deg-moments}
		For a noiseless dataset $\tau$, $F_p = \sum_{i \in [m]} (d_i^{\tau})^{p-1}$.
	\end{fact}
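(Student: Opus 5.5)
The statement is Fact~\ref{fact:deg-moments}: for a noiseless dataset $\tau$, $F_p = \sum_{i\in[m]} (d_i^\tau)^{p-1}$. The plan is a direct counting argument that uses the cluster-graph structure of $G^\tau$.

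Let $n_\tau$ be the number of distinct elements in $\tau$, and let $V_1,\dots,V_{n_\tau}$ be the cliques of $G^\tau$. Each $V_k$ is the set of positions carrying the $k$-th distinct element, so $|V_k|$ is exactly that element's frequency. By definition, $F_p(\tau) = \sum_{k} |V_k|^p$.

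The key observation is that for any $i \in V_k$ we have $B_i^\tau = V_k$. Indeed, $\tau_j \sim \tau_i$ holds if and only if $\tau_j = \tau_i$, since elements of $U$ are pairwise dissimilar and each is similar to itself. Hence $d_i^\tau = |V_k|$.

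Now write $|V_k|^p = \sum_{i\in V_k} |V_k|^{p-1}$ and replace $|V_k|^{p-1}$ by $(d_i^\tau)^{p-1}$ inside the inner sum. Summing over $k$ gives
\[
F_p=\sum_k\sum_{i\in V_k}(d_i^\tau)^{p-1}=\sum_{i\in[m]}(d_i^\tau)^{p-1},
\]
where the last equality holds because the $V_k$ partition $[m]$.

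There is no real obstacle here. The only point needing care is the identity $B_i^\tau = V_k$, and it follows immediately from the definition of the similarity relation on ground-truth elements.
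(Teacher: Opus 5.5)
Your proposal is correct and matches the paper's proof: both partition $[m]$ into the cliques $V_1,\dots,V_{n_\tau}$ of $G^\tau$, write $|V_k|^p = \sum_{i\in V_k}|V_k|^{p-1}$, and use $B_i^\tau = V_k$ for $i\in V_k$. Your explicit justification of $B_i^\tau = V_k$ from the similarity relation on $U$ is a step the paper uses without stating it.
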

	
	\begin{proof}
		Let $n_\tau$ be the number of distinct elements in $\tau$, and $\{V_1, \ldots, V_{n_\tau}\}$ be the set of cliques in $G^\tau$. Then we have
		\begin{equation*}
			F_p(\tau) = \sum_{k \in [n_{\tau}]} \abs{V_k}^p = \sum_{k \in [n_{\tau}]} \sum_{i \in V_k} \abs{V_k}^{p-1} = \sum_{k \in [n_{\tau}]} \sum_{i \in V_k} \abs{B_i^{\tau}}^{p-1} = \sum_{i \in [m]} \abs{B_i^{\tau}}^{p-1}.
		\end{equation*}
	\end{proof}
	\fi
	
	The following lemma shows that each summand at Line~\ref{ln:output} of Algorithm~\ref{alg:FpDist} does not deviate from $F_p$ by much.
	\begin{lemma} \label{lem:FpDist-exp}
		For a dataset $\sigma$, let $I \in S$ be a sampled node from Algorithm~\ref{alg:FpDist} and $X = D_{\loc} \cdot \left(\frac{d_I}{\dl{I}}\right)^{p-1} \cdot \mathbf{1}\left\{ \left(\frac{d_I}{\dl{I}}\right)^{p-1} \leq \theta \right\}$. We have
		$
		\left(1 - 2\eta_p - \frac{\eps}{2}\right)F_p \leq \E[X] \leq (1 + \eta_p) F_p.
		$
	\end{lemma}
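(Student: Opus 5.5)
The plan is to compute $\E[X]$ exactly and then compare it with the degree moment $\sum_i (d_i)^{p-1}$. In Algorithm~\ref{alg:FpDist}, a node $i$ is sampled with probability $\frac{\Dl{\kappa(i)}}{D_{\loc}}\cdot\frac{(\dl{i})^{p-1}}{\Dl{\kappa(i)}}=\frac{(\dl{i})^{p-1}}{D_{\loc}}$. Hence
\[
\E[X]=\sum_{i\in[m]} (d_i)^{p-1}\,\mathbf{1}\left\{\left(\tfrac{d_i}{\dl{i}}\right)^{p-1}\le\theta\right\}.
\]
The upper bound then follows directly: drop the indicator and use the chain \eqref{eq:high1}, which rests on Fact~\ref{fact:deg-moments} and the per-node inequality $\abs{(d_i)^{p-1}-(d_i^\tau)^{p-1}}\le \Delta_i$, where $\Delta_i:=\abs{B_i^\sigma\cup B_i^\tau}^{p-1}-\abs{B_i^\sigma\cap B_i^\tau}^{p-1}$. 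This gives $\E[X]\le (1+\eta_p)F_p$.

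For the lower bound, write $\E[X]=\sum_i (d_i)^{p-1}-\sum_{i\in\mathcal{B}}(d_i)^{p-1}$, where $\mathcal{B}$ is the set of truncated (``bad'') nodes. The first sum is at least $(1-\eta_p)F_p$, again by \eqref{eq:high1}. For $i\in\mathcal{B}$ we have $(d_i)^{p-1}\le (d_i^\tau)^{p-1}+\Delta_i$, and the $\Delta_i$ terms sum to at most $\eta_pF_p$. It therefore suffices to show
\[
\sum_{i\in\mathcal{B}} (d_i^\tau)^{p-1}\le \frac{\eps}{2}F_p .
\]
I split $\mathcal{B}$ according to the ground-truth ratio, using the threshold $\beta=\frac{4k}{\eps}$.

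Case (a) covers bad nodes with $d_i^\tau/\dl{i}[\tau]>\beta$. Such a node lies in a ground-truth clique $V$ whose part at its site has fewer than $\frac{\eps}{4k}\abs{V}$ nodes. Summing over the at most $k$ sites, these nodes contribute at most $k\cdot\frac{\eps\abs V}{4k}\cdot\abs V^{p-1}=\frac{\eps}{4}\abs V^p$ per clique. Summing over cliques, case (a) contributes at most $\frac{\eps}{4}F_p$; this part mirrors the noiseless intuition in the overview.

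Case (b) covers bad nodes with $d_i^\tau/\dl{i}[\tau]\le\beta$ but $(d_i/\dl{i})^{p-1}>\theta=2\beta^{p-1}$. Here I charge $(d_i^\tau)^{p-1}$ to $\Delta_i$, with a multiplicative loss of order $(k/\eps)^{p-1}$. Write $A=d_i^\tau$, $L=\dl{i}[\tau]\ge A/\beta$, $a=d_i$ and $b=\dl{i}$. The ratio must have grown by a factor $2^{1/(p-1)}$, so either many local edges were deleted or many global edges were added. In the first subcase, $b$ drops well below $L$; then $\abs{B_i^\sigma\cap B_i^\tau}\le A-\Omega(L)\le A(1-\Omega(\eps/k))$ while $\abs{B_i^\sigma\cup B_i^\tau}\ge A$, giving $\Delta_i\ge \Omega(\eps/k)\,A^{p-1}$. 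In the second subcase, $b$ stays comparable to $L$, which forces $a$ to exceed $A$ by a constant factor, so $\Delta_i\ge a^{p-1}-A^{p-1}=\Omega(A^{p-1})$. In both subcases $A^{p-1}\le O\!\left((k/\eps)^{p-1}\right)\Delta_i$. Summing over case (b) gives at most $O\!\left((4k/\eps)^{p-1}\right)\eta_pF_p$, and the hypothesis $\eta_p\le\frac{\eps^p}{4^{p+1}k^{p-1}}$ makes this at most $\frac{\eps}{4}F_p$.

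Cases (a) and (b) together give $\frac{\eps}{2}F_p$, and therefore $\E[X]\ge(1-2\eta_p-\frac{\eps}{2})F_p$. The main obstacle is the two-subcase analysis in case (b): the constants must be tuned so that the split point between ``local degree dropped'' and ``global degree grew'' actually yields $\Delta_i\gtrsim (\eps/k)^{p-1}A^{p-1}$ in both branches. I would first try the split $b\le L/2$ versus $b>L/2$. If the factor-$2$ slack in $\theta$ proves too tight for that split, I would adjust the split point (or $\beta$), keeping the final charging factor at most $(4k/\eps)^{p-1}$ up to constants.
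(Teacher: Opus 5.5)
Your decomposition is the paper's. The truncation loss splits into $\sum_i\Delta_i\le\eta_pF_p$ plus $\sum_{i\in\mathcal B}(d_i^\tau)^{p-1}$, and your case (a) is exactly the paper's Claim on the set $H$, with your $\beta$ equal to the paper's $\alpha=\frac{4k}{\eps}$. The difference is case (b).

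\textbf{How the paper handles case (b).} There is no case split. From $\theta<\abs{B_i^\sigma\cup B_i^\tau}^{p-1}/\abs{B_i^\sigma\cap B_i^\tau\cap\kV{\kappa(i)}}^{p-1}$ the paper rearranges to $\theta(\dl{i}[\tau])^{p-1}-(d_i^\tau)^{p-1}\le\theta\Delta_i$. This uses $\abs{A}^{p-1}-\abs{B}^{p-1}\ge\abs{A\cap C}^{p-1}-\abs{B\cap C}^{p-1}$ for $B\subseteq A$. The factor $2$ in $\theta=2\alpha^{p-1}$ then makes the left side at least $(d_i^\tau)^{p-1}$ on $L$. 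So each bad node is charged to $\Delta_i$ with factor exactly $\theta$.

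\textbf{Your first split fails.} If $b>L/2$ you only get $a>2^{1/(p-1)}\beta b>2^{1/(p-1)}A/2$, and this is at most $A$ for every $p\ge2$. So the second branch yields nothing, and the adjustment you mention is necessary, not optional.

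\textbf{A split that works.} Split at $b\le\gamma L$ with $2\gamma^{p-1}>1$, for example $\gamma=(3/4)^{1/(p-1)}$; Bernoulli's inequality gives $1-\gamma\ge\frac{1}{4(p-1)}$.
\begin{itemize}
\item First branch: $\abs{B_i^\sigma\cap B_i^\tau}\le A-(1-\gamma)A/\beta$, so $\Delta_i\ge\frac{\eps}{16(p-1)k}A^{p-1}$.
\item Second branch: $a>2^{1/(p-1)}\gamma A$, so $\Delta_i\ge a^{p-1}-A^{p-1}\ge\frac12A^{p-1}$.
\end{itemize}
With $\eta_p\le\frac{\eps^p}{4^{p+1}k^{p-1}}$, case (b) totals at most $\frac{(p-1)\eps^{p-1}}{4^{p-1}k^{p-2}}F_p\le\frac{\eps}{4}F_p$. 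At $p=2$ this lands exactly on $\frac{\eps}{4}F_p$, so your ``up to constants'' remark is not enough on its own; this explicit check is needed.

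\textbf{What each route buys.} Your route, once repaired, gives a sharper charging factor: $O(pk/\eps)$ instead of $\theta=2(4k/\eps)^{p-1}$. It therefore shows the lower bound of the lemma already holds for $\eta_p$ up to order $\eps^2/(pk)$, which is much weaker than the paper's hypothesis when $p\ge3$. The paper's route gives a single clean inequality with no split point to tune.
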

	
	\begin{proof}
		We start with the second inequality.  Let $p_i = \Pr[I = i]$. We have
		\begin{eqnarray*}
			p_i &=& \Pr[\text{site } \kappa(i) \text{ is sampled}]\cdot \Pr[i \text{ is sampled} \mid \text{site } \kappa(i) \text{ is sampled}] \\
			&=& \frac{\Dl{\kappa(i)}}{D_{\loc}} \cdot \frac{(\dl{i})^{p-1}}{\Dl{\kappa(i)}} = \frac{(\dl{i})^{p-1}}{D_{\loc}}.
		\end{eqnarray*}
		Let $Y = D_{\loc} \cdot \left(\frac{d_I}{\dl{I}}\right)^{p-1}$. By the definition of $X$, we have
		\begin{equation}
			\E[X] \leq \E[Y] = \sum_{i \in [m]} p_i \cdot D_{\loc} \cdot \left(\frac{d_i}{\dl{i}}\right)^{p-1} = \sum_{i \in [m]} (d_i)^{p-1}. \label{eq:exp1}
		\end{equation}
		Note that
		\begin{equation}
			\abs{\sum_{i \in [m]} (d_i)^{p-1} - \sum_{i \in [m]} (d_i^{\tau})^{p-1}} \leq \sum_{i \in [m]} \left(\abs{B_i^{\sigma} \cup B_i^{\tau}}^{p-1} - \abs{B_i^{\sigma} \cap B_i^{\tau}}^{p-1} \right) = \eta_pF_p. \label{eq:exp2}
		\end{equation}
		By \eqref{eq:exp2} and Fact~\ref{fact:deg-moments}, we have
		\begin{equation}
			(1 - \eta_p) F_p \leq \sum_{i \in [m]} (d_i)^{p-1} \leq (1+\eta_p) F_p. \label{eq:exp3}
		\end{equation}
		Combining \eqref{eq:exp1} and \eqref{eq:exp3}, it follows that $\E[X] \leq (1 + \eta_p)F_p$.
		
		We now prove the first inequality.   We start by bounding the difference between $X$ and $Y$.
		\begin{eqnarray*}
			\E[Y-X] &=& \sum_{i \in [m]} p_i \cdot D_{\loc} \cdot \left(\frac{d_i}{\dl{i}}\right)^{p-1} \cdot \mathbf{1}\left\{ \left(\frac{d_i}{\dl{i}}\right)^{p-1} > \theta \right\} \\
			&=& \sum_{i \in [m]} (d_i)^{p-1} \cdot \mathbf{1}\left\{ \left(\frac{d_i}{\dl{i}}\right)^{p-1} > \theta \right\} \\
			&\leq& \sum_{i \in [m]} \abs{B_i^{\sigma} \cup B_i^{\tau}}^{p-1} \cdot \mathbf{1}\left\{ \left(\frac{d_i}{\dl{i}}\right)^{p-1} > \theta \right\} \\
			&\leq& \underbrace{\sum_{i \in [m]} \left(\abs{B_i^{\sigma} \cup B_i^{\tau}}^{p-1} - (d_i^{\tau})^{p-1}\right)}_{\textcircled{1}} + \underbrace{\sum_{i \in [m]} (d_i^{\tau})^{p-1} \cdot \mathbf{1}\left\{ \left(\frac{d_i}{\dl{i}}\right)^{p-1} > \theta \right\}}_{\textcircled{2}}.
		\end{eqnarray*}
		For the first part,
		\begin{eqnarray}
			\textcircled{1} \leq \sum_{i \in [m]} \left(\abs{B_i^{\sigma} \cup B_i^{\tau}}^{p-1} - \abs{B_i^{\sigma} \cap B_i^{\tau}}^{p-1}\right) = \eta_p F_p. \label{eq:exp4}
		\end{eqnarray}
		For the second part, we divide $[m]$ into two sets using a threshold $\alpha = \left(\frac{\theta}{2}\right)^{\frac{1}{p-1}}$: 
		$$
		H = \left\{i \in [m]\ \left|\ \frac{d_i^{\tau}}{\dl{i}[\tau]} > \alpha \right.\right\} \quad \text{and} \quad L = \left\{i \in [m]\ \left|\ \frac{d_i^{\tau}}{\dl{i}[\tau]} \leq \alpha \right. \right\}.
		$$
		We utilize the following two claims to bound the contributions of items in $H$ and $L$, respectively. Their proofs will be given shortly.
		\begin{claim} \label{claim:small}
			\begin{equation*}
				\sum_{i \in H} (d_i^{\tau})^{p-1} \leq \frac{k F_p}{\alpha}.
			\end{equation*}
		\end{claim}
		\begin{claim} \label{claim:large}
			\begin{equation*}
				\sum_{i \in L} (d_i^{\tau})^{p-1} \cdot \mathbf{1}\left\{ \left(\frac{d_i}{\dl{i}}\right)^{p-1} > \theta \right\} \leq \theta \eta_p F_p.
			\end{equation*}
		\end{claim}
		By Claim~\ref{claim:small} and Claim~\ref{claim:large}, we have
		\begin{eqnarray}
			\textcircled{2} &\leq& \sum_{i \in H} (d_i^{\tau})^{p-1} + \sum_{i \in L} (d_i^{\tau})^{p-1} \cdot \mathbf{1}\left\{ \left(\frac{d_i}{\dl{i}}\right)^{p-1} > \theta \right\} \nonumber \\
			&\leq& \frac{k F_p}{\alpha} + \theta \eta_p F_p \leq \frac{\eps F_p}{2}, \label{eq:exp5}
		\end{eqnarray}
		where in the last inequality, we have used $\theta = 2\left(\frac{4k}{\eps}\right)^{p-1}$, $\alpha = \left(\frac{\theta}{2}\right)^{\frac{1}{p-1}} = \frac{4k}{\eps}$, and our assumption $\eta_p \leq \frac{\eps^p}{4^{p+1} \cdot k^{p-1}}$.
		Combining \eqref{eq:exp4} and \eqref{eq:exp5}, we have
		\begin{eqnarray}
			\E[Y-X] \leq \eta_p F_p + \frac{\eps F_p}{2}. \label{eq:exp6}
		\end{eqnarray}
		By \eqref{eq:exp1}, \eqref{eq:exp3} and \eqref{eq:exp6}, we have
		\begin{eqnarray*}
			\E[X] = \E[Y] - \E[Y-X] \geq \left(1 - 2\eta_p - \frac{\eps}{2} \right) F_p. 
		\end{eqnarray*}
	\end{proof}
	
	Finally, we prove the two claims.
	\begin{proof}[Proof of Claim~\ref{claim:small}]
		Let $n_\tau$ be the number of distinct elements in $\tau$, and $\{V_1, \ldots, V_{n_\tau}\}$ be the set of cliques in $G^\tau$. Note that
		\begin{eqnarray}
			\sum_{i \in H} (d_i^{\tau})^{p-1} &=& \sum_{\kappa \in [k]} \sum_{j \in [n_\tau]} \sum_{i \in \kV{\kappa} \cap V_j} (d_i^{\tau})^{p-1} \cdot \mathbf{1}\left\{ \frac{d_i^{\tau}}{\dl{i}[\tau]} > \alpha \right\} \label{eq:claim1.1}  \\
			&=& \sum_{\kappa \in [k]} \sum_{j \in [n_\tau]} \sum_{i \in \kV{\kappa} \cap V_j} \abs{V_j}^{p-1} \cdot \mathbf{1}\left\{ \frac{\abs{V_j}}{\abs{\kV{\kappa} \cap V_j}} > \alpha \right\} \label{eq:claim1.2} \\
			&=& \sum_{\kappa \in [k]} \sum_{j \in [n_\tau]} \abs{V_j}^{p-1} \cdot \abs{\kV{\kappa} \cap V_j} \cdot \mathbf{1}\left\{ \frac{\abs{V_j}}{\abs{\kV{\kappa} \cap V_j}} > \alpha \right\} \nonumber \\
			&<& \sum_{\kappa \in [k]} \sum_{j \in [n_\tau]} \frac{\abs{V_j}^p}{\alpha} = \sum_{\kappa \in [k]} \frac{F_p}{\alpha} = \frac{k F_p}{\alpha}, \nonumber
		\end{eqnarray}
		where from \eqref{eq:claim1.1} to \eqref{eq:claim1.2}, we use the fact that if $i \in V_j \cap \kV{\kappa}$, then $d_i^{\tau} = \abs{B_i^{\tau}} = \abs{V_j}$ and $\dl{i}[\tau] = \abs{B_i^{\tau} \cap \kV{\kappa}} = \abs{V_j \cap \kV{\kappa}}$.
	\end{proof}
	
	\begin{proof}[Proof of Claim~\ref{claim:large}]
		For any $i \in L$ such that $\left(\frac{d_i}{\dl{i}}\right)^{p-1} > \theta$, we have
		\begin{eqnarray}
			\theta &<& \frac{(d_i)^{p-1}}{(\dl{i})^{p-1}} \label{eq:claim2.0}\\ 
			&\leq& \frac{\abs{B_i^{\sigma} \cup B_i^{\tau}}^{p-1}}{\abs{B_i^{\sigma} \cap B_i^{\tau} \cap \kV{\kappa(i)}}^{p-1}} \label{eq:claim2.1} \\
			&=& \frac{(d_i^{\tau})^{p-1} + \left(\abs{B_i^{\sigma} \cup B_i^{\tau}}^{p-1} - (d_i^{\tau})^{p-1}\right)}{(\dl{i}[\tau])^{p-1} - \left((\dl{i}[\tau])^{p-1} - \abs{B_i^{\sigma} \cap B_i^{\tau} \cap \kV{\kappa(i)}}^{p-1} \right)}, \label{eq:claim2.2}
		\end{eqnarray}
		where from 	\eqref{eq:claim2.0} to	\eqref{eq:claim2.1} we use $d_i = \abs{B_i^{\sigma}} \leq \abs{B_i^{\sigma} \cup B_i^{\tau}}$ and $\dl{i} = \abs{B_i^{\sigma} \cap \kV{\kappa(i)}} \geq \abs{B_i^{\sigma} \cap B_i^{\tau} \cap \kV{\kappa(i)}}$.
		
		Rearranging the terms in \eqref{eq:claim2.2} gives
		\begin{eqnarray}
			\theta (\dl{i}[\tau])^{p-1} - (d_i^{\tau})^{p-1}
			&\leq& \left(\abs{B_i^{\sigma} \cup B_i^{\tau}}^{p-1} - (d_i^{\tau})^{p-1}\right) + \theta \left((\dl{i}[\tau])^{p-1} - \abs{B_i^{\sigma} \cap B_i^{\tau} \cap \kV{\kappa(i)}}^{p-1} \right) \nonumber \label{eq:claim2.3} \\
			&\leq& \left(\abs{B_i^{\sigma} \cup B_i^{\tau}}^{p-1} - (d_i^{\tau})^{p-1}\right) + \theta \left((d_i^{\tau})^{p-1}- \abs{B_i^{\sigma} \cap B_i^{\tau}}^{p-1} \right) \nonumber \label{eq:claim2.4} \\
			&\leq& \theta \left(\abs{B_i^{\sigma} \cup B_i^{\tau}}^{p-1} - \abs{B_i^{\sigma} \cap B_i^{\tau}}^{p-1}\right), \label{eq:claim2.5}
		\end{eqnarray}
		where 
		in the second inequality, we use the fact that for any $p \ge 2$ and any three sets $A,B,C$ such that $B \subseteq A$, we always have $\abs{A}^{p-1} - \abs{B}^{p-1} \geq \abs{A \cap C}^{p-1} - \abs{B \cap C}^{p-1}$.  The last inequality holds since $\theta \ge 1$.
		
		On the other hand, since $i \in L$ and $\alpha = \left(\frac{\theta}{2}\right)^{\frac{1}{p-1}}$, we have
		\begin{equation}
			\theta (\dl{i}[\tau])^{p-1} - (d_i^{\tau})^{p-1} \geq \theta \left(\frac{d_i^{\tau}}{\alpha}\right)^{p-1} - (d_i^{\tau})^{p-1} = (d_i^{\tau})^{p-1}. \label{eq:claim2.6}
		\end{equation}
		
		Combining \eqref{eq:claim2.5} and \eqref{eq:claim2.6}, we have for any $i \in L$ such that $\left(\frac{d_i}{\dl{i}}\right)^{p-1} > \theta$,
		\begin{equation*}
			(d_i^{\tau})^{p-1} \leq \theta \left(\abs{B_i^{\sigma} \cup B_i^{\tau}}^{p-1} - \abs{B_i^{\sigma} \cap B_i^{\tau}}^{p-1}\right).
		\end{equation*}
		It follows that
		\begin{eqnarray*}
			\sum_{i \in L} (d_i^{\tau})^{p-1} \cdot \mathbf{1}\left\{ \left(\frac{d_i}{\dl{i}}\right)^{p-1} > \theta \right\} \leq \sum_{i \in [m]} \theta \left(\abs{B_i^{\sigma} \cup B_i^{\tau}}^{p-1} - \abs{B_i^{\sigma} \cap B_i^{\tau}}^{p-1}\right) = \theta \eta_p F_p.
		\end{eqnarray*}
	\end{proof}
	
	\begin{lemma} \label{lem:FpDist-var}
		Let $\Bar{X} = \frac{1}{t} \sum_{i \in [t]} X_i$, where $X_i$ is defined as in Lemma~\ref{lem:FpDist-exp} for the $i$-th sample in $S$. Then with probability at least $\frac{2}{3}$,
		$
		(1 - 2\eta_p - \eps) F_p \leq \Bar{X} \leq (1 + \eta_p + \eps) F_p.
		$
	\end{lemma}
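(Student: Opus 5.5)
The plan is to use Chebyshev's inequality on $\Bar{X}$, with Lemma~\ref{lem:FpDist-exp} controlling the mean. The $t$ samples in $S$ are i.i.d.: the coordinator draws sites i.i.d.\ according to $\{\Dl{\kappa}/D_{\loc}\}$, and each site then draws its nodes i.i.d.\ within its own set. Hence every sampled node has marginal law $p_i = (\dl{i})^{p-1}/D_{\loc}$, and $X_1,\dots,X_t$ are i.i.d.\ copies of $X$. This gives $\E[\Bar{X}]=\E[X]$ and $\Var[\Bar{X}]=\Var[X]/t$.

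The key step is a variance bound that exploits the truncation. Since $X \le D_{\loc}\,\theta$ always, we get $\E[X^2] \le \theta D_{\loc}\E[X]$. I need $D_{\loc}$ to be comparable to $F_p$. This is immediate because $\dl{i} \le d_i$, so $D_{\loc} \le \sum_i d_i^{p-1} \le (1+\eta_p)F_p$ by \eqref{eq:exp3}. Combining with $\E[X]\le(1+\eta_p)F_p$ from Lemma~\ref{lem:FpDist-exp}, I obtain
\[
\Var[X] \le \E[X^2] \le \theta(1+\eta_p)^2F_p^2 \le 2\theta F_p^2 .
\]
Here I use that $\eta_p$ is tiny under the hypothesis of Theorem~\ref{thm:FpDist}.

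Next I apply Chebyshev with deviation $\frac{\eps}{2}F_p$:
\[
\Pr\left[\abs{\Bar{X}-\E[X]} \ge \tfrac{\eps}{2}F_p\right] \le \frac{4\cdot 2\theta}{t\eps^2}.
\]
With $\theta = 2(4k/\eps)^{p-1}$ and $t = 36\cdot 4^p k^{p-1}/\eps^{p+1}$, this ratio is $16\cdot 4^{p-1}k^{p-1}\eps^{-(p+1)}/t = 16/(4\cdot 36) = 1/9 < 1/3$. On the complementary event, Lemma~\ref{lem:FpDist-exp} gives
\[
\Bar{X} \ge \left(1-2\eta_p-\tfrac{\eps}{2}\right)F_p - \tfrac{\eps}{2}F_p = (1-2\eta_p-\eps)F_p,
\]
and $\Bar{X} \le (1+\eta_p)F_p + \frac{\eps}{2}F_p \le (1+\eta_p+\eps)F_p$, which is the claim.

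The only real subtlety is the bound $D_{\loc} = O(F_p)$ and the constant bookkeeping against $t$ and $\theta$. I do not expect a genuine obstacle, since the truncation at $\theta$ was designed precisely to make the second moment at most $\theta\, D_{\loc}\E[X]$.
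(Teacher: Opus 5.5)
Your proof is correct and follows the paper's argument. The truncation gives $\E[X^2]\le \theta D_{\loc}\E[X]$, the bound $D_{\loc}\le\sum_i (d_i)^{p-1}\le(1+\eta_p)F_p$ controls the second moment, and Chebyshev plus Lemma~\ref{lem:FpDist-exp} finishes. The only difference is cosmetic. You measure the deviation additively as $\frac{\eps}{2}F_p$, which needs only the upper bound on $\E[X]$ in the variance step and gives failure probability $1/9$. The paper instead normalizes by $\E[X]$: it uses the lower bound from Lemma~\ref{lem:FpDist-exp} to get $\Var[X]\le 6\theta\,\E[X]^2$, bounds a multiplicative $\frac{\eps}{2}\E[\Bar{X}]$ deviation, and obtains failure probability exactly $1/3$.
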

	
	\begin{proof}
		We first bound the variance of $X_i$.
		\begin{eqnarray*}
			\Var[X_i] &\leq& \E[X_i^2] \\
			&\leq& \theta \cdot D_{\loc} \cdot \E[X_i] \quad \text{(since $X_i \le  \theta \cdot D_{\loc}$)}\\
			&=& \theta \cdot \left( \sum_{i \in [m]} (\dl{i})^{p-1} \right) \cdot \E[X_i]  \le \theta \cdot \left( \sum_{i \in [m]} (d_i)^{p-1} \right) \cdot \E[X_i] \\
			&\leq& \theta \cdot (1 + \eta_p) F_p \cdot \E[X_i] \quad \text{(by \eqref{eq:exp3})}  \\
			&\leq& \theta \cdot \frac{1 + \eta_p}{1 - 2 \eta_p - \frac{\eps}{2}} \cdot \E[X_i]^2 \quad \text{(by Lemma~\ref{lem:FpDist-exp})} \\
			&\leq& 6\theta \cdot \E[X_i]^2.  \quad \text{(by assumption $\eta_p \leq {\eps^p}/({4^{p+1} \cdot k^{p-1}})$)}
		\end{eqnarray*}
		By Chebyshev's inequality,
		\begin{eqnarray*}
			\Pr\left[ \abs{\Bar{X} - \E[\Bar{X}]} > \frac{\eps}{2} \cdot \E[\Bar{X}] \right] \leq \frac{4}{t \eps^2}\cdot \frac{\Var(X_1)}{\E[X_1]^2} \leq \frac{24 \theta}{t \eps^2} = \frac{1}{3}.
		\end{eqnarray*}
		By Lemma~\ref{lem:FpDist-exp}, we have with probability $\frac{2}{3}$,
		\begin{eqnarray*}
			\Bar{X} &\leq& \left(1 + \frac{\eps}{2} \right)(1 + \eta_p) F_p \leq (1 + \eps + \eta_p) F_p, \quad \text{and} \\
			\Bar{X} &\geq& \left(1 - \frac{\eps}{2} \right)\left(1 - 2\eta_p - \frac{\eps}{2}\right) F_p \geq (1  - \eps - 2\eta_p) F_p,
		\end{eqnarray*}
		which completes the proof.
	\end{proof}
	
	Finally, Theorem~\ref{thm:FpDist} follows directly from Lemma~\ref{lem:FpDist-var}.  Note that we can always apply the median trick to reduce the error probability of Algorithm~\ref{alg:FpDist} to $0.01$ without changing the asymptotic communication complexity and the round complexity.  
	
	\begin{remark}
		We note that the lower bound of $\Omega\left(\frac{k^{p-1}}{\eps^2}\right)$ bits for $F_p$ in the coordinator model with noiseless data \cite{WZ12} also holds in the presence of noise. Consequently, in terms of $k$, the communication cost in Theorem~\ref{thm:FpDist} is only a factor of $k$ from optimal.
	\end{remark}
	
	\begin{remark}
		\label{rem:phase-transition}
		In our lower bound construction, the input of $F_p$, obtained from a YES instance of \kDISJ, has mismatch ambiguity at most $\frac{c_2\eps}{k}$ for a constant $c_2 = 10p$.  For $p = 2$, the ambiguity threshold in Theorem~\ref{thm:FpDist} is up to $\frac{c_1\eps^2}{k}$ for a constant $c_1 = \frac{1}{4^{p+1}}$.  Thus, the communication cost phase transition for $F_2$ happens in the range $[ \frac{c_1\eps^2}{k},   \frac{c_2\eps}{k}]$.
	\end{remark}

	\section{Concluding Remarks}
	In this work, we introduce the parameter {\em $F_p$-mismatch-ambiguity} and leverage it to design sublinear algorithms for the $F_p$ problem on noisy data in the data stream model and the coordinator model. Several directions remain open following this work. First, there are still gaps in $\eps$ between our upper and lower bounds that need to be closed. 
	
	Second, in the coordinator model, it would be interesting to obtain a complete characterization of the tradeoff between mismatch ambiguity and communication cost. In particular, is the communication cost phase transition mentioned in Remark~\ref{rem:phase-transition} smooth or abrupt?
	
	Third, it is worth exploring a setting where the similarity oracle $f$, given two items $\sigma_i$ and $\sigma_j$, outputs a real value in $[0, 1]$ rather than a Boolean value in $\{0, 1\}$ that indicates whether they are similar or dissimilar. By convention, we set $f(i, i) = 1$ for all $i \in [m]$. Define $g: [m] \times [m] \to \{0, 1\}$ such that $g(i, j) = 1$ if $\sigma_i$ and $\sigma_j$ correspond to the same ground truth element, and $g(i, j) = 0$ otherwise. We can then generalize the definition of $F_p$-mismatch-ambiguity as follows:
	\begin{equation*}
		\eta_p^{\text{wt}}(\sigma, \tau) = \frac{1}{F_p(\tau)}\sum_{i \in [m]} \left[\left(\sum_{j \in [m]}\max\{f(i, j), g(i, j)\}\right)^{p-1} - \left(\sum_{j \in [m]}\min\{f(i, j), g(i, j)\}\right)^{p-1}\right].
	\end{equation*}
	We leave open the question of whether efficient streaming and distributed algorithms can be developed to estimate $F_p$ when using $\eta_p^{\text{wt}}$ as a parameter in the approximation ratio.
	
	Finally, it would be interesting to study other statistical problems under our mismatch ambiguity framework.

	%%
	%% The next two lines define the bibliography style to be used, and
	%% the bibliography file.
	
\bibliographystyle{plain}
\bibliography{paper}

@inproceedings{ZH25,
	author       = {Qin Zhang and
	Mohsen Heidari},
	editor       = {Sudeepa Roy and
	Ahmet Kara},
	title        = {Quantum Data Sketches},
	booktitle    = {ICDT},
	series       = {LIPIcs},
	volume       = {328},
	pages        = {16:1--16:19},
	publisher    = {Schloss Dagstuhl - Leibniz-Zentrum f{\"{u}}r Informatik},
	year         = {2025}
}

@inproceedings{JMM+07,
	author       = {T. S. Jayram and
	Andrew McGregor and
	S. Muthukrishnan and
	Erik Vee},
	editor       = {Leonid Libkin},
	title        = {Estimating statistical aggregates on probabilistic data streams},
	booktitle    = {PODS},
	pages        = {243--252},
	publisher    = {{ACM}},
	year         = {2007}
}

@inproceedings{JKV07,
	author       = {T. S. Jayram and
	Satyen Kale and
	Erik Vee},
	editor       = {Nikhil Bansal and
	Kirk Pruhs and
	Clifford Stein},
	title        = {Efficient aggregation algorithms for probabilistic data},
	booktitle    = {SODA},
	pages        = {346--355},
	publisher    = {{SIAM}},
	year         = {2007}
}

@inproceedings{CG07,
	author       = {Graham Cormode and
	Minos N. Garofalakis},
	editor       = {Chee Yong Chan and
	Beng Chin Ooi and
	Aoying Zhou},
	title        = {Sketching probabilistic data streams},
	booktitle    = {SIGMOD},
	pages        = {281--292},
	publisher    = {{ACM}},
	year         = {2007}
}

@article{JYC+08,
	author       = {Cheqing Jin and
	Ke Yi and
	Lei Chen and
	Jeffrey Xu Yu and
	Xuemin Lin},
	title        = {Sliding-window top-k queries on uncertain streams},
	journal      = {Proc. {VLDB} Endow.},
	volume       = {1},
	number       = {1},
	pages        = {301--312},
	year         = {2008}
}

@inproceedings{ZLY08,
	author       = {Qin Zhang and
	Feifei Li and
	Ke Yi},
	editor       = {Jason Tsong{-}Li Wang},
	title        = {Finding frequent items in probabilistic data},
	booktitle    = {SIGMOD},
	pages        = {819--832},
	publisher    = {{ACM}},
	year         = {2008}
}

@inproceedings{HXZ+25,
	author       = {Zengfeng Huang and
	Zhongzheng Xiong and
	Xiaoyi Zhu and
	Zhewei Wei},
	editor       = {Michal Kouck{\'{y}} and
	Nikhil Bansal},
	title        = {Simple and Optimal Algorithms for Heavy Hitters and Frequency Moments
	in Distributed Models},
	booktitle    = {Proceedings of the 57th Annual {ACM} Symposium on Theory of Computing,
	{STOC} 2025, Prague, Czechia, June 23-27, 2025},
	pages        = {371--382},
	publisher    = {{ACM}},
	year         = {2025}
}

@inproceedings{LW13,
	author       = {Yi Li and
	David P. Woodruff},
	editor       = {Prasad Raghavendra and
	Sofya Raskhodnikova and
	Klaus Jansen and
	Jos{\'{e}} D. P. Rolim},
	title        = {A Tight Lower Bound for High Frequency Moment Estimation with Small
	Error},
	booktitle    = {APPROX-RANDOM},
	series       = {Lecture Notes in Computer Science},
	volume       = {8096},
	pages        = {623--638},
	publisher    = {Springer},
	year         = {2013}
}

@article{CMY+12,
	author       = {Graham Cormode and
	S. Muthukrishnan and
	Ke Yi and
	Qin Zhang},
	title        = {Continuous sampling from distributed streams},
	journal      = {J. {ACM}},
	volume       = {59},
	number       = {2},
	pages        = {10:1--10:25},
	year         = {2012}
}

@inproceedings{WZ12,
	author       = {David P. Woodruff and
	Qin Zhang},
	editor       = {Howard J. Karloff and
	Toniann Pitassi},
	title        = {Tight bounds for distributed functional monitoring},
	booktitle    = {STOC},
	pages        = {941--960},
	publisher    = {{ACM}},
	year         = {2012}
}

@inproceedings{Feige04,
	author       = {Uriel Feige},
	editor       = {L{\'{a}}szl{\'{o}} Babai},
	title        = {On sums of independent random variables with unbounded variance, and
	estimating the average degree in a graph},
	booktitle    = {Proceedings of the 36th Annual {ACM} Symposium on Theory of Computing,
	Chicago, IL, USA, June 13-16, 2004},
	pages        = {594--603},
	publisher    = {{ACM}},
	year         = {2004}
}

@inproceedings{EKM+24,
	author       = {Hossein Esfandiari and
	Praneeth Kacham and
	Vahab Mirrokni and
	David P. Woodruff and
	Peilin Zhong},
	editor       = {Bojan Mohar and
	Igor Shinkar and
	Ryan O'Donnell},
	title        = {Optimal Communication Bounds for Classic Functions in the Coordinator
	Model and Beyond},
	booktitle    = {STOC},
	pages        = {1911--1922},
	publisher    = {{ACM}},
	year         = {2024}
}

@article{Zhang25,
	author       = {Qin Zhang},
	title        = {Robust Statistical Analysis on Streaming Data with Near-Duplicates
	in General Metric Spaces},
	journal      = {Proc. {ACM} Manag. Data},
	volume       = {3},
	number       = {2},
	pages        = {111:1--111:25},
	year         = {2025}
}

@inproceedings{CKS03,
	author       = {Amit Chakrabarti and
	Subhash Khot and
	Xiaodong Sun},
	title        = {Near-Optimal Lower Bounds on the Multi-Party Communication Complexity
	of Set Disjointness},
	booktitle    = {CCC},
	pages        = {107--117},
	publisher    = {{IEEE} Computer Society},
	year         = {2003}
}

@inproceedings{BO13,
	author       = {Vladimir Braverman and
	Rafail Ostrovsky},
	editor       = {Prasad Raghavendra and
	Sofya Raskhodnikova and
	Klaus Jansen and
	Jos{\'{e}} D. P. Rolim},
	title        = {Generalizing the Layering Method of Indyk and Woodruff: Recursive
	Sketches for Frequency-Based Vectors on Streams},
	booktitle    = {APPROX-RANDOM},
	series       = {Lecture Notes in Computer Science},
	volume       = {8096},
	pages        = {58--70},
	publisher    = {Springer},
	year         = {2013}
}

@inproceedings{PVZ12,
	author       = {Jeff M. Phillips and
	Elad Verbin and
	Qin Zhang},
	editor       = {Yuval Rabani},
	title        = {Lower bounds for number-in-hand multiparty communication complexity,
	made easy},
	booktitle    = {SODA},
	pages        = {486--501},
	publisher    = {{SIAM}},
	year         = {2012}
}

@article{BYJ+04,
	author       = {Ziv Bar{-}Yossef and
	T. S. Jayram and
	Ravi Kumar and
	D. Sivakumar},
	title        = {An information statistics approach to data stream and communication
	complexity},
	journal      = {J. Comput. Syst. Sci.},
	volume       = {68},
	number       = {4},
	pages        = {702--732},
	year         = {2004}
}

@inproceedings{Zhang15,
	author    = {Qin Zhang},
	title     = {Communication-Efficient Computation on Distributed Noisy Datasets},
	booktitle = {SPAA},
	pages     = {313--322},
	year      = {2015}
}

@book{HSW07,
	title={Data quality and record linkage techniques},
	author={Herzog, Thomas N and Scheuren, Fritz J and Winkler, William E},
	volume={1},
	year={2007},
	publisher={Springer}
}

@article{EIV07,
	author    = {Ahmed K. Elmagarmid and
	Panagiotis G. Ipeirotis and
	Vassilios S. Verykios},
	title     = {Duplicate Record Detection: A Survey},
	journal   = {IEEE Trans. Knowl. Data Eng.},
	volume    = {19},
	number    = {1},
	year      = {2007},
	pages     = {1-16}
}

@inproceedings{KSS06,
	title={Record linkage: similarity measures and algorithms},
	author={Koudas, Nick and Sarawagi, Sunita and Srivastava, Divesh},
	booktitle={SIGMOD},
	pages={802--803},
	year={2006},
	organization={ACM}
}

@article{DN09,
	title={Data fusion: resolving data conflicts for integration},
	author={Dong, Xin Luna and Naumann, Felix},
	journal={Proceedings of the VLDB Endowment},
	volume={2},
	number={2},
	pages={1654--1655},
	year={2009},
	publisher={VLDB Endowment}
}

@inproceedings{CZ16,
	author       = {Di Chen and
	Qin Zhang},
	editor       = {Fatma {\"{O}}zcan and
	Georgia Koutrika and
	Sam Madden},
	title        = {Streaming Algorithms for Robust Distinct Elements},
	booktitle    = {SIGMOD},
	pages        = {1433--1447},
	publisher    = {{ACM}},
	year         = {2016}
}

@inproceedings{CZ18,
	author       = {Jiecao Chen and
	Qin Zhang},
	editor       = {Jan Van den Bussche and
	Marcelo Arenas},
	title        = {Distinct Sampling on Streaming Data with Near-Duplicates},
	booktitle    = {PODS},
	pages        = {369--382},
	publisher    = {{ACM}},
	year         = {2018}
}

@article{AMS99,
	author    = {Noga Alon and
	Yossi Matias and
	Mario Szegedy},
	title     = {The Space Complexity of Approximating the Frequency Moments},
	journal   = {J. Comput. Syst. Sci.},
	volume    = {58},
	number    = {1},
	pages     = {137--147},
	year      = {1999}
}

@inproceedings{IW05,
	author    = {Piotr Indyk and
	David P. Woodruff},
	title     = {Optimal approximations of the frequency moments of data streams},
	booktitle = {STOC},
	pages     = {202--208},
	year      = {2005}
}

@inproceedings{BGK+06,
	author    = {Lakshminath Bhuvanagiri and
	Sumit Ganguly and
	Deepanjan Kesh and
	Chandan Saha},
	title     = {Simpler algorithm for estimating frequency moments of data streams},
	booktitle = {SODA},
	pages     = {708--713},
	year      = {2006}
}

@inproceedings{MW10,
	author    = {Morteza Monemizadeh and
	David P. Woodruff},
	title     = {1-Pass Relative-Error L\({}_{\mbox{p}}\)-Sampling with Applications},
	booktitle = {SODA},
	pages     = {1143--1160},
	year      = {2010}
}

@inproceedings{AKO11,
	author    = {Alexandr Andoni and
	Robert Krauthgamer and
	Krzysztof Onak},
	title     = {Streaming Algorithms via Precision Sampling},
	booktitle = {FOCS},
	pages     = {363--372},
	year      = {2011}
}

@article{BO10,
	author    = {Vladimir Braverman and
	Rafail Ostrovsky},
	title     = {Recursive Sketching For Frequency Moments},
	journal   = {CoRR},
	volume    = {abs/1011.2571},
	year      = {2010}
}

@inproceedings{Andoni17,
	author    = {Alexandr Andoni},
	title     = {High frequency moments via max-stability},
	booktitle = {ICASSP},
	pages     = {6364--6368},
	year      = {2017}
}

@article{Ganguly11,
	author    = {Sumit Ganguly},
	title     = {Polynomial Estimators for High Frequency Moments},
	journal   = {CoRR},
	volume    = {abs/1104.4552},
	year      = {2011}
}

@inproceedings{Woodruff04,
	author    = {David P. Woodruff},
	title     = {Optimal space lower bounds for all frequency moments},
	booktitle = {SODA},
	pages     = {167--175},
	year      = {2004}
}

@article{Ganguly12,
	author    = {Sumit Ganguly},
	title     = {A Lower Bound for Estimating High Moments of a Data Stream},
	journal   = {CoRR},
	volume    = {abs/1201.0253},
	year      = {2012}
}

@inproceedings{GW18,
	author    = {Sumit Ganguly and
	David P. Woodruff},
	editor    = {Ioannis Chatzigiannakis and
	Christos Kaklamanis and
	D{\'{a}}niel Marx and
	Donald Sannella},
	title     = {High Probability Frequency Moment Sketches},
	booktitle = {ICALP},
	series    = {LIPIcs},
	volume    = {107},
	pages     = {58:1--58:15},
	publisher = {Schloss Dagstuhl - Leibniz-Zentrum f{\"{u}}r Informatik},
	year      = {2018}
}

@inproceedings{BZ25,
	author       = {Mark Braverman and
	Or Zamir},
	title        = {Optimality of Frequency Moment Estimation},
	booktitle    = {STOC},
	pages        = {360--370},
	year         = {2025}
}

@inproceedings{CMY08,
	author       = {Graham Cormode and
	S. Muthukrishnan and
	Ke Yi},
	title        = {Algorithms for distributed functional monitoring},
	booktitle    = {SODA},
	pages        = {1076--1085},
	year         = {2008}
}

@inproceedings{KVW14,
	author       = {Ravi Kannan and
	Santosh S. Vempala and
	David P. Woodruff},
	title        = {Principal Component Analysis and Higher Correlations for Distributed
	Data},
	booktitle    = {COLT},
	series       = {{JMLR} Workshop and Conference Proceedings},
	volume       = {35},
	pages        = {1040--1057},
	year         = {2014}
}

@article{JW23,
	author       = {Rajesh Jayaram and
	David P. Woodruff},
	title        = {Towards Optimal Moment Estimation in Streaming and Distributed Models},
	journal      = {{ACM} Trans. Algorithms},
	volume       = {19},
	number       = {3},
	pages        = {27:1--27:35},
	year         = {2023}
}

	%\appendix
	%\input{appendix}
	
\end{document}
\endinput
%%
%% End of file `sample-sigconf-authordraft.tex'.